\documentclass[sn-basic]{sn-jnl}

\usepackage{multirow}%
\usepackage{amsfonts}%
\usepackage{mathrsfs}%
\usepackage[title]{appendix}%
\usepackage{xcolor}%
\usepackage{textcomp}%
\usepackage{manyfoot}%
\usepackage{booktabs}%
\usepackage{algorithm}%
\usepackage{algorithmicx}%
\usepackage{algpseudocode}%

\usepackage{scalerel}
\usepackage{orcidlink}

%maths
\usepackage{amssymb}
\usepackage{amsmath}

\usepackage{amsthm}
\usepackage{proof}
\usepackage{stmaryrd}

\usepackage{booktabs}
\usepackage{dsfont}

\usepackage{bbm}
%graphics
\usepackage{graphicx}
\usepackage{listings}
\usepackage{bussproofs}
\usepackage{tikz}
\usepackage{circuitikz}
\usepackage{floatflt} 

%extra
\usepackage{ifthen}
\usepackage{xspace}
\usepackage{pdfpages}
\usepackage{todonotes}
\usepackage{afterpage}
\usepackage{paralist}
\usepackage{xspace}
\usepackage{url}

%debug
%\usepackage{framed}
%\usepackage{refcheck}
%\usepackage{showframe}

\usepackage{pgf-umlsd}
\usetikzlibrary{arrows,shapes,trees,positioning,decorations,shapes}
\usetikzlibrary{decorations.markings,automata}

\usepackage[dash,dot]{dashundergaps}
%fix for some compilers
\DeclareOldFontCommand{\rm}{\normalfont\rmfamily}{\mathrm}
\DeclareOldFontCommand{\sf}{\normalfont\sffamily}{\mathsf}
\DeclareOldFontCommand{\tt}{\normalfont\ttfamily}{\mathtt}
\DeclareOldFontCommand{\bf}{\normalfont\bfseries}{\mathbf}
\DeclareOldFontCommand{\it}{\normalfont\itshape}{\mathit}
\DeclareOldFontCommand{\sl}{\normalfont\slshape}{\@nomath\sl}
\DeclareOldFontCommand{\sc}{\normalfont\scshape}{\@nomath\sc}
\DeclareRobustCommand*\cal{\@fontswitch\relax\mathcal}
\DeclareRobustCommand*\mit{\@fontswitch\relax\mathnormal}

\colorlet{keywordcolor}{blue!50!black}
\colorlet{commentcolor}{green!60!black}
\colorlet{typecolor}{violet}
 % use with tikz: \begin{scope}[\comment]..\end{scope}
\newcommand{\sourcefont}{\ttfamily\small}
\newcommand{\commentfont}{\slshape\rmfamily\color{commentcolor}}
%\renewcommand{\commentfont}{\slshape\color{black!70}}

%%%%%%%%%%%%%%%%%%%%%%%%%%%%%%%%%%%%%%%%%%%%%%%%%%%%%%%%%%%%%%%%%%%%%%%%%%%%%%%
%% ABS and Java code examples
%%%%%%%%%%%%%%%%%%%%%%%%%%%%%%%%%%%%%%%%%%%%%%%%%%%%%%%%%%%%%%%%%%%%%%%%%%%%%%%

\lstdefinelanguage{ABS}{
        keywords={physical,duration,diff,differential,do,assert,this,new,data,type,def,case,of,local,class,interface,
        extends,implements,if,then,else,await,get,Fut,return,skip,while,module,
        import,export,from,suspend,adds,modifies,removes,original,productline,
        features,core,corefeatures,optionalfeatures,after,when,product,hasAttribute,
        hasMethod,hasField,hasInterface,uses,root,extension,group,allof,oneof,require,
        stateupdate,object,main,objectupdate,classupdate,fi,
        exclude,original,ifin,ifout,opt,null,%critical,port,rebind,duration,deadline,now,
        newgroup,data,thiscomp,in,joins,leaves,subtypeOf,wait,acquire,except,as,component,Pre,Abs
        },
        keywordstyle=\color{keywordcolor}\bfseries\sffamily,
        % standard types:
        morekeywords=[2]{Unit, Int, Bool, Rat, List, Set, Pair, Fut, Maybe, String, Triple, Either, Map, Real},
        keywordstyle=[2]\color{typecolor},
        sensitive=true,
        comment=[l]{//},
        morecomment=[s]{/*}{*/},
        morestring=[b]"
}

% Java 9 dialect
\lstdefinelanguage[v9]{Java}[]{Java}{
        morekeywords={module,requires,provides,uses,with,to,exports}
}
% ContextJ dialect
\lstdefinelanguage[ContextJ]{Java}[]{Java}{
        morekeywords={layer,with,without,proceed,before,after}
}
% FOP dialect
\lstdefinelanguage[FOP]{Java}[]{Java}{
        morekeywords={refines,original,Super}
}
% JastAdd
\lstdefinelanguage[JastAdd]{Java}[]{Java}{
        morekeywords={aspect,syn,inh,lazy}
}

% \lstset { language = Java,
%           keywordstyle = \rmfamily\bfseries,
%           basicstyle = \ttfamily\small,
%           breaklines = true,
%           breakindent = 3pt,
%           flexiblecolumns = false,
%           stepnumber = 1,
%           numbers = left,
%           numberstyle = \small,
%           numbersep = 3pt,
%           escapechar = \#
% }

\lstdefinestyle{code}{
        basicstyle=\sourcefont\upshape,
        keywordstyle=\color{keywordcolor}\bfseries\sffamily,
        commentstyle=\commentfont,
        columns=fullflexible,
        mathescape=true,
        escapechar={\#},
        keepspaces=true,
        showstringspaces=false,
        %inputencoding=utf8,
        %extendedchars,
        aboveskip=8pt, % default is \medskipamount
        numbers=left,
        stepnumber=1, 
        numberstyle=\ttfamily\scriptsize\color{gray},
        numbersep=4pt,
        xleftmargin=1.5em,
        xrightmargin=1.5em,
        framexleftmargin=1.2em,
        framexrightmargin=1em,
        framextopmargin=0.5ex,
        breaklines=true,
        breakindent=3pt,
}

\lstdefinestyle{abs}{
        style=code,
        language=ABS,
}
\lstdefinestyle{java}{
        style=code,
            language=Java
}
\lstdefinestyle{java9}{
        style=code,
            language=[v9]Java
}
\lstdefinestyle{aspectj}{
        style=code,
        language=[AspectJ]Java
}
\lstdefinestyle{jastadd}{
        style=code,
        language=[JastAdd]Java
}
\lstdefinestyle{contextj}{
        style=code,
        language=[ContextJ]Java
}
\lstdefinestyle{FOP}{
        style=code,
        language=[FOP]Java
}
\lstdefinestyle{scala}{
        style=code,
        language=Scala,
        morekeywords={self}
}

\newcommand{\code}[2][]{\lstinline[style=code,basicstyle=\ttfamily\upshape,#1]{#2}}

\newcommand{\abs}[2][]{\code[style=abs,#1]{#2}}

\lstnewenvironment{srccode}[1][]{
        \minipage{\linewidth}
        \lstset{style=code,
        %framerule=1pt,
        backgroundcolor=\color{white},
        rulecolor=\color{gray!50},
        frame=tblr,
        captionpos=b,
        #1}
}{\endminipage}

\lstnewenvironment{abscode}[1][]{
        \minipage{1\linewidth}
        \lstset{style=abs,
        %framerule=1pt,
        backgroundcolor=\color{white},
        rulecolor=\color{gray!50},
        frame=tblr,
        captionpos=b,
        #1}
}{\endminipage}

\lstnewenvironment{javacode}[1][]{
        \minipage{\linewidth}
        \lstset{style=java,
        %framerule=1pt,
        backgroundcolor=\color{gray!5},
        rulecolor=\color{gray!50},
        frame=tblr,
        captionpos=b,
        #1}
}{\endminipage}

\lstnewenvironment{jastaddcode}[1][]{
        \minipage{\linewidth}
        \lstset{style=jastadd,
        %framerule=1pt,
        backgroundcolor=\color{gray!5},
        rulecolor=\color{gray!50},
        frame=tblr,
        captionpos=b,
        #1}
}{\endminipage}

 \makeatletter
\newcommand\BeraMonottfamily{%
  \def\fvm@Scale{0.85}% scales the font down
  \fontfamily{fvm}\selectfont% selects the Bera Mono font
}
\makeatother

\lstdefinelanguage{KeYmaeraX}{%
  keywords={if,then,else,R,B,HP,Functions,ProgramVariables,Problem,End,Definitions,ArchiveEntry,Tactic,SharedDefinitions},%
%  otherkeywords={:=,++,*},
  sensitive=true,
%  morecomment=[l]{//},
  morecomment=[s]{/*}{*/},
%  morestring=[b]",%
  deletestring=[d]',
  showstringspaces=false,
  commentstyle=\color{green},
  mathescape,
  escapeinside={/*@}{@*/}}[keywords]

\lstdefinelanguage{Bellerophon}{%
  language={},
  keywords={'R,'L,'_},%
  otherkeywords={;,<,|},
  sensitive=true,
  morecomment=[l]{//},
  morecomment=[s]{/*}{*/},
  morestring=[b]",
  deletestring=[d]',
  morestring=[d]`,
  showstringspaces=false,
  commentstyle=\fontseries{lc}\color{green}}[keywords]
  
\lstset{
  basicstyle=\BeraMonottfamily, 
  frame=single,
  tabsize=2
}

%{ \lstset{language=KeYmaeraX}\lstinline{#1}}}
\newcommand{\keycode}[1]{
    { \lstset{language=KeYmaeraX}
    \begin{lstlisting}
     #1
    \end{lstlisting}
    }
}

\newcommand{\COMMENT}[1]{}
\newcommand{\AUTOMATA}[1]{}

\newcommand{\HABS}{\ensuremath{\mathtt{HABS}}\xspace}

%convenience
\DeclareMathOperator*{\argmin}{\mathbf{argmin}}

%proper \phi
\let\temp\phi
\let\phi\varphi
\let\varphi\temp

%xRightarrow without mathttools
\makeatletter
\newcommand{\xRightarrow}[2][]{\ext@arrow 0359\Rightarrowfill@{#1}{#2}}

%brackets
\newcommand{\cased}[1]{\ensuremath{
\left\{\begin{array}{ll}
#1
\end{array}\right.
}\xspace}

\newcommand{\xabs}[1]{\text{\abs{#1}}}

\newcommand{\sem}[1]{\ensuremath{ \llbracket #1 \rrbracket \xspace}}
\newcommand{\ssep}{  \ | \ }

\newcommand{\many}[1]{\overline{#1}}
\newcommand{\TINFER}[3]{\begin{array}{c}\rulename{#1}  \frac{\begin{array}{c}#2
\\[0.5mm]
\end{array}
}{
\begin{array}{c}
\\[-3.5mm]
\displaystyle{#3}
\end{array}
}\end{array}}

\newcommand{\TTINFER}[4]{\begin{array}{c}\rulename{#1}  \frac{\begin{array}{c}#2
\\[1mm]
#3\\[0.5mm]
\end{array}
}{
\begin{array}{c}
\\[-3.5mm]
\displaystyle{#4}
\end{array}
}\end{array}}

\newcommand{\rulename}[1]{\textbf{\scriptsize(\textsf{#1})}}

\newcommand{\ddl}{\ensuremath{d\mathcal{L}}\xspace}
\newcommand{\dL}{\ddl}

% dL commands

\newcommand{\trace}{\ensuremath{\theta}\xspace}
\newcommand{\methodname}{\xabs{m}\xspace}
\newcommand{\classname}{\xabs{C}\xspace}
\newcommand{\statement}{\xabs{s}\xspace}

  \newcommand{\fid}{\ensuremath{\mathit{fid}}\xspace}

\usepackage{hyperref}
\usepackage{mathtools}

\newtheorem{definition}{Definition}%
\newtheorem{example}{Example}%
\newtheorem{lemma}{Lemma}%
\newtheorem{theorem}{Theorem}%
\newtheorem{proposition}{Proposition}%

\begin{document}

\title[Modular Analysis of Distributed Hybrid Systems using Post-Regions]{Modular Analysis of Distributed Hybrid Systems using Post-Regions (Full Version)}
\author{\fnm{Eduard} \sur{Kamburjan}}
\email{eduard@ifi.uio.no}

\affil{\orgdiv{Department of Informatics}, \orgname{University of Oslo},  \orgaddress{\city{Oslo}, \country{Norway}}}
%~\orcidlink{0000-0002-0996-2543}

\abstract{
We introduce a new approach to analyze distributed hybrid systems by a generalization of rely-guarantee reasoning.
First, we give a system for deductive verification of class invariants and method contracts in object-oriented distributed hybrid systems.
In a hybrid setting, the object invariant must not only be the post-condition of a method, but also has to hold in the \emph{post-region} of a method. 
The post-region describes all reachable states \emph{after} method termination \emph{before} another process is guaranteed to run. 
The system naturally generalizes rely-guarantee reasoning of discrete object-oriented languages to hybrid systems 
and carries over its modularity to hybrid systems: 
only one $d\mathcal{L}$-proof obligation is generated per method.
The post-region can be approximated using lightweight analyses and we give a general notion of soundness
for such analyses.
Post-region based verification is implemented for the Hybrid Active Object language \HABS. 
}

\keywords{Deductive Verification, Hybrid Systems, Object-Orientation,s Differential Dynamic Logic}

\maketitle

\section{Introduction}\label{sec:intro}
Distributed cyber-physical systems are behind modern innovation drivers such as the Internet-of-Things or Digital Twins.
Support for formal verification of such systems, however, is lagging behind their adaptation.
Especially \emph{deductive} verification of hybrid systems~\citep{Platzer10,QDL}, which does not suffer from the state-space explosion problem of model checking, 
has little compositionality mechanisms. 
This is in stark contrast to the situation for distributed \emph{discrete} systems, where a rich theory of composition principles for object-oriented languages exists,
e.g., method contracts~\citep{key,meyer} and object invariants.
Here, we show that this theory can be carried over to a hybrid setting.

\paragraph{Challenge.}
As a starting point, consider a system consisting of 
discrete, atomic procedures $\statement_i$, each guarded by a condition $\mathsf{grd}_i$. 
A procedure may run whenever its guard holds.
In this, still discrete, but already concurrent setting, to prove that some invariant $\mathsf{inv}$ holds, one can apply modular reasoning as follows:
One shows that the invariant holds in the initial state, and that it is preserved by each procedure;
Every procedure may \emph{assume} the invariant when it starts, but must also \emph{guarantee} it to other procedures.
This results in proof obligations of the following form for all procedures.
\[\mathsf{grd_i} \wedge \mathsf{inv} \rightarrow [\statement_i]\mathsf{inv}.\]
This dynamic logic~\citep{key,dynamic} formula expresses that if the invariant $\mathsf{inv}$ holds before executing procedure $\statement_i$, then it also holds after its execution.
If this formula is valid for all procedures, every procedure can indeed assume the invariant: 
When procedure $\statement_i$ runs after procedure $\statement_j$ has terminated, 
the state is the same as when procedure $\statement_j$ terminated, for which the invariant has been proven to hold.

Now, let us move to a hybrid setting and add some physical dynamics $\mathsf{dyn}$ to the system, that evolve the state according
whenever time advances. In this case the above proof obligations \emph{do not suffice:}
After a procedure terminates and before another procedure runs, the state may now evolve and the invariant does not necessarily hold.

\paragraph{Post-Regions.}
The main insight of this work is that the post-condition is not just the invariant $\mathsf{inv}$, but $\mathsf{inv}$ and a term that expresses that $\mathsf{inv}$ holds \emph{until the next procedure runs}. 
Given the guards of all procedures, we can compute the \emph{post-region} $\mathsf{pst}$ of a procedure $\statement_i$: A region where time may pass after $\statement_i$
terminates, because no procedure is \emph{guaranteed} to run. 
The post-condition is then a term expressing that the invariant holds inside the post-region $\mathsf{pst}$ \emph{while following the continuous dynamics} $\mathsf{dyn}$.
Such constraints are expressed with the formula $[\mathsf{dyn}\&\mathsf{pst}]\mathsf{inv}$ in differential dynamic logic~\citep{DBLP:conf/lics/Platzer12b} ($d\mathcal{L}$). 
In the most simple case, the proof obligation is the following:
\begin{align}\label{eq:intromain}
\mathsf{grd}_i \wedge \mathsf{inv} \rightarrow \big[\statement_i\big]\big(\mathsf{inv}\wedge\big[\mathsf{dyn}\&\mathsf{pst}\big]\mathsf{inv}\big).
\end{align}

\noindent In the trivial case, $\mathsf{pst}$ is $\mathsf{true}$, i.e., the invariant has to hold forever.
Using the guards of the known procedures, or other structure possibly provided by the used programming language, enables us to compute 
more precise post-regions.

In this article we use post-regions to verify safety properties for a more complex and structured model for distributed hybrid systems: Hybrid Active Objects.

\paragraph{Post-Regions for Hybrid Active Objects.}
\emph{Hybrid Active Objects} (HAO)~\citep{arxiv,lites} are a object-oriented formalism that combines the formal semantics and verification tools of low-level formalisms with the usability of mature programming languages. 
An HAO encapsulates continuous dynamics inside an object and defines methods to interact with it from the outside.
HAOs extend the Active Object concurrency model~\citep{BoerSHHRDJSKFY17}: message passing concurrency with futures, as well as cooperative scheduling.
With cooperative scheduling, only one process is active at each point in time (per object) and cannot be interrupted by other processes, unless it explicitly releases control over the object.
Active Objects have been applied in industrial case studies in many domains~\citep{AlbertBHJSTW14,BezirgiannisBJP19,LinLYJ20}. 
The biggest such case study with the Active Object language \texttt{ABS}~\citep{JohnsenHSSS10}, a model of railway operations~\citep{scp}, verification has shown that the
concurrency model is suitable for modular deductive verification~\citep{KamburjanH17}.

Post-regions generalize modular deductive verification to a hybrid setting, but the more complex concurrency model needs some adjustment in the proof obligations:
With cooperative scheduling, methods are not atomic and, thus, must also guarantee the invariant not only upon termination, but at every suspension point.
Furthermore, not every method can always run (as in the procedure-based system sketched above), but a single HAO can be called from other objects, or from within.
Thus, it must be verified in every possible context, and the internal calls can provide additional structure to compute post-regions.
Finally, several processes executing the same method can be active at the same time, and a HAO can deadlock due to its use of futures~\citep{DBLP:journals/toplas/Halstead85,DBLP:journals/jfp/FlanaganF99}, which must be taken care of in the verification process as well.

\paragraph{Contributions.}
Our main contributions are (1) \emph{post-regions}, a natural generalization of assume-guarantee reasoning from discrete programming languages to a hybrid object-oriented language, and
(2) \emph{a general notion of soundness} for analyses that derive post-regions.
The system is designed with a focus on \emph{modularity}: one proof obligation is used per method, and after changes to one method it is not necessarily required to reprove the whole program.

\paragraph{Structure.}
Sec.~\ref{sec:logic} introduces the $d\mathcal{L}$ logic, on which our verification builds.
Sec.~\ref{sec:concurrent} then introduces post-regions for a simple concurrency model to illustrate the general structure.
Sec.~\ref{sec:habs} describes the \HABS hybrid active object language and 
Sec.~\ref{sec:region} gives the verification system for it. 
Sec.~\ref{sec:multi} discusses practical issues and describes the implementation.
Sec.~\ref{sec:related} gives related work and Sec.~\ref{sec:conclusion} concludes.

\paragraph{Origin of Material.}
This manuscript is an extended and reworked version of~\citep{HSCC}:
Sections \ref{sec:concurrent} and \ref{ssec:general} are completely new, with new semantic foundations and notion of soundness for post-regions.
The presentation of Section \ref{sec:region} has been adjusted to this new notion of soundness and post-region generators, and the related work has been extended.

\section{Preliminaries: Differential Dynamic Logic}\label{sec:logic}
Differential dynamic logic (\ddl)~\citep{QDL,Platzer18} is a first-order dynamic logic, that embeds hybrid programs into its modalities.
It is implemented in the \texttt{KeYmaera X} tool~\citep{FultonMQVP15}
Hybrid programs are defined by a simple while-language, extended with a statement for ordinary differential equations (ODEs). 
Such a statement evolves the state according to some dynamics for a non-deterministically chosen amount of time.

\begin{definition}[Syntax of $d\mathcal{L}$]
Let $p$ range over predicate symbols (such as $\doteq,\geq$), $f$ over function symbols (such as $+,-$) and \abs{x} over variables.
Hybrid programs $\alpha$, formulas $\phi$ and terms $t$ are defined by the following grammar.
\begin{align*}
\phi ::=~&p(\many{t}) \ssep t \doteq t \ssep \neg\phi \ssep \phi \wedge \phi \ssep \exists \xabs{x}.~\phi \ssep [\alpha]\phi\qquad t ::=~f(\many{t}) \ssep \xabs{x} \quad \mathit{dt} := f(\many{dt}) \ssep t \ssep (t)'\\
\alpha ::=~&\xabs{x :=}~t \ssep \xabs{x := *} \ssep \alpha \cup \alpha \ssep \alpha^\ast \ssep ?\phi \ssep \alpha;\alpha \ssep \{\alpha\} \ssep \many{\xabs{x =}~\mathit{dt}} \& \phi 
\end{align*}
\end{definition}
Modalities may be nested using the $?$ operator and all ODEs are autonomous.
The semantics of hybrid programs is as follows:
Program $\xabs{x :=}~t$ assigns the value of $t$ to \abs{x}. 
Program $\xabs{x := *}$ assigns a non-deterministically chosen value to \abs{x}. 
Program $\alpha_1 \cup \alpha_2$ is a non-deterministic choice.
Program $\alpha^\ast$ is the Kleene star.
Program $?\phi$ is a test or filter. It either discards a run (if $\phi$ does not hold) or performs no action (if $\phi$ does hold).
Program $\alpha_1;\alpha_2$ is sequence and $\{\alpha\}$ is a block for structuring. Finally, the ODE $\many{\xabs{x =}~\mathit{dt}} \& \phi$
evolves the state according to the given ODE in evolution domain $\phi$ for some amount of time. %The evolution domain is analogous to location invariants in hybrid automata.
The evolution domain describes where a solution is allowed to evolve, not the solution itself.
The semantics of the first-order fragment is completely standard.
The semantics of $[\alpha]\phi$ is that $\phi$ has to hold in \emph{every} post-state of $\alpha$ if $\alpha$ terminates.
We stress that if $\alpha$ is an ODE, then this means that $\phi$ holds throughout the \emph{whole} solution.
\begin{example}
The following formula expresses that the dynamics of bouncing ball with position \abs{x} and velocity \abs{v} are below 10 before the ball reaches the ground, when starting with a velocity of 0 and a height below 10.
\[0 \leq \xabs{x} \leq 10 \wedge \xabs{v} \doteq 0 \rightarrow [\xabs{x}' = \xabs{v}, \xabs{v}' = -9.81 \& \mathtt{x} \geq 0]\xabs{x} \leq 10\]
Events can be expressed as usual by an event boundary created between a test and an evolution domain. 
The following program models that the ball repeatedly bounces back exactly on the ground.
\[\big(\{\xabs{x}' = \xabs{v}, \xabs{v}' = -9.81 \& \mathtt{x} \geq 0\}; ? \mathtt{x} \leq 0; \xabs{v} := -\xabs{v}*0.9\big)^\ast\]
\end{example}
\ddl formulas are evaluated over interpretations, which assign predicate and function symbols to predicates and functions, as well as valuations, which map program variables to value.
%$\mathsf{trans}(\xabs{e})$ the straightforward translation of \HABS expressions into $d\mathcal{L}$ terms. % (or formulas).
Standard control flow (such as $\mathsf{while}$ and $\mathsf{if}$) is encoded using the operators above~\citep{DBLP:conf/lics/Platzer12b}. 
We denote the states reachable from a valuation $\sigma$ by program $\alpha$ with $\llbracket\alpha\rrbracket_\sigma$.
If $\alpha$ is an ODE, we denote the states reachable from $\sigma$ by following the semantics for exactly $t$ time units with $\llbracket\alpha\rrbracket_{\sigma,t}$.
The formal semantics, interpretation of predicate and function symbols is independent of our languages and also given by \cite{DBLP:conf/lics/Platzer12b}.

We use weak negation to preserve event boundaries. It is defined as normal negation, except for weak inequalities:
\[(\widetilde{\neg} t_1 \leq t_2) = t_1 \geq t_2 \qquad (\widetilde{\neg} t_1 \geq t_2) = t_1 \leq t_2\]
Given a formula $\phi$, its conjuction with its weak negation defines a boundary around the region defined by $\phi$: 
Given a trajectory starting at a point in $\neg \phi$ and ending at a point in $\phi$, the first point of the trajectory in $\phi$
is in $\phi \wedge \widetilde{\neg} \phi$ (if $\phi$ only contains weak inequalities).

\section{Region-Based Verification}\label{sec:concurrent}
We introduce the notion of post-regions first for the simple concurrency model sketched in the introduction to illustrate its core ideas, 
before we move on to the more complex concurrency model for distributed systems using Hybrid Active Objects.

\paragraph{Concurrent Programs.}
The used concurrency model uses guarded procedures to react on changes in the global state
The global state evolves according to some continuous dynamics whenever no procedure is active and no guard holds.
Each procedure is running atomically (i.e., no interleaving or preemption) and can be executed arbitrarily often, provided that its guard holds.
There is communication between the procedures except the (shared) variables in the global state and the complete system is known.
The verification task will be to ensure that every reachable state is safe w.r.t.\ some formula, while analyzing each procedure in isolation.
This will suffice to illustrate the main ideas behind post-regions, in particular the trade-off between precision and robustness.

Formally, \emph{a concurrent program} is a set of guarded \ddl programs without ODEs, and one ODE for continuous dynamics.
\begin{definition}[Concurrent Program]
Let \emph{guards} $\mathsf{grd}$ range over \ddl formulas of the form $\bigwedge t \leq t'$.
Let \emph{procedures} $\mathsf{prcd}$ range over \ddl programs of the form $?\mathsf{grd}; \mathsf{bdy}$, where $\mathsf{bdy}$ contains no ODEs.
Let $\mathsf{dyn}$ range over ODEs without evolution domain.
A \emph{concurrent program} has the form $((\mathsf{prcd}_i)_{i\in I}, \mathsf{dyn})$.
\end{definition}
To access the guard of a procedure $\mathsf{prcd}$, we write $\mathsf{grd}_\mathsf{prcd}$.
For states, we use \ddl valuations, together with a clock to keep track of time.
\begin{definition}[Concurrent State]
A concurrent state $t, \sigma$ is a pair of a valuation $\sigma$, which maps program variables to real numbers, and a clock $t \in \mathbb{R}^+$.
\end{definition}

The semantics of a program is defined by a transition system between concurrent states.
It consists of two rules:
The first rule executes a procedure, if the guard of the procedure holds.
The second rule advances time until the guard of some procedure holds if no procedure can be executed right now.
This realizes urgent transitions defined by the procedures and is well-defined as we force the guards to use non-strict inequalities.
\begin{definition}[Concurrent Transition System]
The transition system for concurrent programs is defined by the two rules in Fig.~\ref{fig:sem:conc:rules}, 
which are parametric in the program $(B, \mathsf{dyn})$.
\begin{figure}[!b]
\[
\TINFER{execute}{
\mathsf{prcd} \in B \qquad 
\mathsf{prcd} = ?\mathsf{grd};\mathsf{bdy}\qquad
\sigma \models \mathsf{grd}\qquad
\sigma' \in \llbracket \mathsf{bdy} \rrbracket_\sigma
}{
 t, \sigma \xrightarrow{B, \mathsf{dyn}} t, \sigma'
}
\]
\[
\TTINFER{urgent}{ 
    \forall \mathsf{prcd} \in B.~\sigma \not\models \mathsf{grd}_\mathsf{prcd}
}{
    t' = \argmin_{t' > 0} \exists \mathsf{prcd} \in B. \sigma' \in \llbracket \mathsf{dyn}\&\mathsf{true}\rrbracket_{\sigma, t' - t} \wedge \sigma' \models \mathsf{grd}_\mathsf{prcd}
}{
 t, \sigma \xrightarrow{B, \mathsf{dyn}} t', \sigma'
}
\]
\caption{Runtime Semantics of Concurrent Programs.}
\label{fig:sem:conc:rules}
\end{figure}
\end{definition}
The rule \rulename{execute} picks some procedure $\mathsf{prcd}$, checks whether its guard holds and, if this is the case, picks some final state from the evaluation
of the contained \ddl-program. If the program is non-deterministic, then so is the overall system.
The rule \rulename{urgent} has two premises. The first checks that rule \rulename{execute} cannot fire, the
second one computes the minimal $t'$ such that advancing the time by $t'-t$ time units results in a state where some guard holds.

\begin{example}\label{ex:tank:prog}
The classical water tank example can be expressed with the following program.
The system keeps the water level between 3 and 10.
\begin{align*}
\mathsf{dyn} &= \{ \mathtt{level}' = \mathtt{drain}\}\\
\mathsf{prcd}_\mathsf{up} &= ?\mathtt{level} \geq 10; \mathtt{drain} = -1\\
\mathsf{prcd}_\mathsf{down} &= ?\mathtt{level} \leq 3; \mathtt{drain} = 1
\end{align*}
\end{example}

It is worth stressing that we separate continuous dynamics and discrete computations -- their interleaving is handled by the runtime semantics and is not explicitly encoded.
Implicitly, all procedures run in parallel with the dynamics, but we do not treat this parallelism in its own programming construct. 
%, a design decision that enables scaling our system once we come to hybrid active objects.

Next, we define reachability and safety.

\begin{definition}[Runs, Reachable States, Safety]
Given a valuation $\sigma_0$, and a program $(B,\mathsf{dyn})$, a run is defined as a (possibly infinite) reduction sequence
\[0, \sigma_0 \xrightarrow{B, \mathsf{dyn}} t_1, \sigma_1 \xrightarrow{B, \mathsf{dyn}} \dots\]

If the sequence is finite, then the last state $t, \sigma$ is \emph{final}.
For all concurrent states $t_i, \sigma_i$, we say that they are part of the run.
The set of reachable states of the program $(B,\mathsf{dyn})$ starting from $\sigma_0$ is defined as follows:
\begin{align*}
  &\{
    \sigma \ssep t, \sigma \text{ is part of some run for some $t$}
  \}\\
\cup&
  \{
    \sigma'' \ssep t, \sigma \xrightarrow{B, \mathsf{dyn}} t', \sigma' 
    \text{ is part of a run, and }\\
    &\hspace{9mm}\exists t''.~t<t''<t'\wedge\sigma'' \in \llbracket\mathsf{dyn}\rrbracket_{\sigma, t'-t''}
  \}\\
  \cup&
  \{
    \sigma' \ssep t, \sigma \text{ is the final concurrent state and $\sigma' \in \llbracket\mathsf{dyn}\rrbracket_{\sigma}$}
  \}
\end{align*}

\noindent A program is \emph{safe} w.r.t.\ a \ddl-formula $\mathsf{inv}$, if all its reachable states are models for $\mathsf{inv}$.
\end{definition}
The first set are those states where a procedure starts or ends, the second computes all the states between two procedures,
which are only implicitly available in the run. 
The last set collects all the states which are reachable after the last procedure is executed, 
as these states are not explicitly generated by the transition systems.
Note that our notion of safety does not require the intermediate states of procedures to be safe, i.e., they are not strong invariants.

%\ektodoin{Example on semantics and runs goes here}

\paragraph{Post-Conditions.}
We aim for robust, modular, deductive verification of safety properties for unbounded systems, i.e., 
systems with infinite state spaces.
\begin{itemize}
\item The structure of the analysis should be modular, i.e., it decomposes the verification problem for the overall system to verification subproblems of subsystems.
\item The analysis should be robust w.r.t.\ changes, this means that changes in one subsystem should influence as few verification subproblems of other subsystems as possible. 
\end{itemize}

For reference, we give a simple system that handles the discrete fragment of our language, using standard rely-guarantee reasoning.
It checks that the invariant holds in the initial state, and that every method preserves it.
The invariant is as both pre- and post-condition for each procedure: it is assumed to hold when the procedure starts and must be guaranteed to hold in the state when it terminated.

We say that the \ddl-ODE $\mathsf{trv}$ is trivial if it is equivalent to the empty ODE that manipulates no variable. 
A system with a trivial ODE is not hybrid.
\begin{lemma}[Post-Conditions]\label{lem:hcp:1}
Let $\mathsf{inv}$ be a \ddl formula without modalities.
Let $(B,\mathsf{trv})$ be a program with trivial dynamics, and $\sigma$ a valuation.
Let $\mathsf{init}_\sigma \equiv \bigwedge_{\mathtt{v} \in \mathbf{dom}\sigma} \mathtt{v} \doteq \sigma(\mathtt{v})$ be the formula encoding $\sigma$ in \ddl.
If the \ddl-formula
\[ \mathsf{init}_\sigma \rightarrow \mathsf{inv} \]
is valid, and for every $?\mathsf{grd};\mathsf{bdy} \in B$ the formula
\[
\mathsf{inv} \rightarrow [?\mathsf{grd};\mathsf{bdy}]\mathsf{inv}
\]
is valid, then every reachable state is safe, i.e., a model for $\mathsf{inv}$.
We refer to the above formulas as \emph{proof obligations}.
\end{lemma}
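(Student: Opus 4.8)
The plan is to argue by induction along runs, using the first proof obligation for the base case and the per-procedure proof obligations for the inductive step, and to first dispose of the fact that trivial dynamics collapse the reachable-state set to exactly the valuations occurring in runs. The one preliminary observation I would establish is that, since $\mathsf{trv}$ manipulates no variable, $\llbracket\mathsf{trv}\&\mathsf{true}\rrbracket_{\sigma,t} = \{\sigma\}$ and $\llbracket\mathsf{trv}\rrbracket_\sigma = \{\sigma\}$ for every valuation $\sigma$ and every $t$. Two consequences follow immediately: every application of rule \rulename{urgent} in a run leaves the valuation unchanged (only the clock advances), and the second and third sets in the definition of reachable states contribute only valuations $\sigma''$ resp.\ $\sigma'$ that are already equal to some $\sigma_i$ appearing in the run. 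Hence it suffices to show $\sigma_i \models \mathsf{inv}$ for every index $i$ of every run $0,\sigma_0 \xrightarrow{B,\mathsf{trv}} t_1,\sigma_1 \xrightarrow{B,\mathsf{trv}} \dots$ starting from $\sigma_0$ (the statement uses $\sigma$ for this initial valuation; I write $\sigma_0$ for uniformity).

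For the base case, $\sigma_0 \models \mathsf{init}_{\sigma_0}$ holds by the very definition of $\mathsf{init}_{\sigma_0}$, so validity of $\mathsf{init}_{\sigma_0} \rightarrow \mathsf{inv}$ gives $\sigma_0 \models \mathsf{inv}$. For the inductive step, assume $\sigma_i \models \mathsf{inv}$ and consider the transition $t_i,\sigma_i \xrightarrow{B,\mathsf{trv}} t_{i+1},\sigma_{i+1}$. If it is derived by \rulename{urgent}, then $\sigma_{i+1} = \sigma_i$ by the preliminary observation, so $\sigma_{i+1} \models \mathsf{inv}$. If it is derived by \rulename{execute}, there is a procedure $?\mathsf{grd};\mathsf{bdy} \in B$ with $\sigma_i \models \mathsf{grd}$ and $\sigma_{i+1} \in \llbracket\mathsf{bdy}\rrbracket_{\sigma_i}$; because the test $?\mathsf{grd}$ succeeds in $\sigma_i$ and performs no action, $\sigma_{i+1} \in \llbracket ?\mathsf{grd};\mathsf{bdy}\rrbracket_{\sigma_i}$. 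From $\sigma_i \models \mathsf{inv}$ and validity of the proof obligation $\mathsf{inv} \rightarrow [?\mathsf{grd};\mathsf{bdy}]\mathsf{inv}$ we obtain $\sigma_i \models [?\mathsf{grd};\mathsf{bdy}]\mathsf{inv}$, and the semantics of the box modality then yields that every state in $\llbracket ?\mathsf{grd};\mathsf{bdy}\rrbracket_{\sigma_i}$ is a model of $\mathsf{inv}$; in particular $\sigma_{i+1} \models \mathsf{inv}$.

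Assembling the pieces: by induction every $\sigma_i$ in every run from $\sigma_0$ is a model of $\mathsf{inv}$, and by the preliminary observation the remaining reachable states coincide with such $\sigma_i$, so every reachable state is safe. The argument is essentially routine; the only point that needs genuine care is the bookkeeping of the three components of the reachable-state set and the verification that triviality of $\mathsf{trv}$ makes the latter two redundant — I would be careful here not to inadvertently invoke continuous evolution that the trivial-dynamics hypothesis explicitly excludes, and to note that intermediate states of procedure bodies are deliberately outside the reachable set, so no strong-invariant reasoning is needed.
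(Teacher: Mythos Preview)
Your proof is correct and follows the same induction-on-run-length approach as the paper. The only difference is one of economy: the paper observes that for trivial dynamics \rulename{urgent} can never fire at all (since the valuation never changes, a guard that is false stays false and the $\argmin$ in the second premise is undefined), which renders your \rulename{urgent} case and your bookkeeping for the second and third reachable-state components vacuous---your more explicit treatment is not wrong, just more detailed than necessary.
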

\begin{proof}
We show that every reachable state within an arbitrary run is safe by showing that for any $n$, the run up to the $n$ transition is safe.
Induction base $n=0$: This requires that $\sigma$ is a model for $\mathsf{inv}$, which is exactly the semantics of the first proof obligation.
Induction step $n>0$: This means that the run is currently in a concurrent state $t, \sigma$ which is safe and makes a transition into another state
$t', \sigma'$. First, we observe that \rulename{urgent} never fires for trivial dynamics, so \rulename{execute} fires, $t = t'$ and $\sigma' \in \llbracket?\mathsf{grd};\mathsf{bdy}\rrbracket_\sigma$ for some procedure $?\mathsf{grd};\mathsf{bdy}$. The second proof obligation expresses exactly that every such state is a model for $\mathsf{inv}$.
\end{proof}

The system is (1) modular, as safety of the system is broken down to safety of procedures, (2) robust, as after changes in one procedure it is not required to reprove the proof obligation 
of other procedures\footnote{Changes in the safety invariant require to reprove everything.}, and (3) handles unbounded systems in the sense that every procedure can be executed arbitrarily often.
It is easy to see, however, that this proof obligation scheme does not suffice for non-trivial dynamics -- the rule \rulename{urgent} may actually fire and nothing guarantees that the state afterwards is safe.

To ensure safety for non-trivial dynamics, the state must remain safe through the evolution of the state until the next procedure starts.
A post-condition, however, only reasons about the immediate post-state, the state after procedure termination.
This brings us to \emph{post-regions}. The post-region of a procedure is the region in the state space where the invariant must continue to hold.
In the most simple case, this post-region is just $\mathsf{true}$ -- the system remains safe forever, even if no other procedure ever runs.

\begin{lemma}[Basic Post-Regions for Concurrent Hybrid Programs]\label{lem:hcp:2}
Let $(B,\mathsf{dyn})$ be a program, and $\sigma$ a valuation.
If the \ddl-formula
\[ \mathsf{init}_\sigma \rightarrow \mathsf{inv} \wedge [\mathsf{dyn} \& \mathsf{true}]\mathsf{inv}\]
is valid, and for every $?\mathsf{grd};\mathsf{bdy} \in B$ the formula
\[
\mathsf{inv} \rightarrow [?\mathsf{grd};\mathsf{bdy}](\mathsf{inv} \wedge [\mathsf{dyn} \& \mathsf{true}]\mathsf{inv})
\]
is valid, then every reachable state is safe, i.e., a model for $\mathsf{inv}$.
\end{lemma}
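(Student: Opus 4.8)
The plan is to adapt the inductive argument of Lemma~\ref{lem:hcp:1} to account for the fact that the rule \rulename{urgent} can now fire, and for the additional reachable states that are generated between procedures and after the final procedure. The induction will again be on the length $n$ of the run, but the induction hypothesis must be strengthened: rather than merely "the concurrent state $t_n,\sigma_n$ is a model for $\mathsf{inv}$", I would carry the hypothesis "$\sigma_n \models \mathsf{inv} \wedge [\mathsf{dyn}\&\mathsf{true}]\mathsf{inv}$". This stronger invariant is exactly what the two proof obligations establish, and it is what makes the step go through, because $[\mathsf{dyn}\&\mathsf{true}]\mathsf{inv}$ is precisely the guarantee that carries $\mathsf{inv}$ through any amount of continuous evolution.

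First I would handle the base case $n=0$: the premise $\mathsf{init}_\sigma \rightarrow \mathsf{inv} \wedge [\mathsf{dyn}\&\mathsf{true}]\mathsf{inv}$ together with $\sigma_0 \models \mathsf{init}_{\sigma_0}$ gives the strengthened hypothesis at $\sigma_0$. For the step, suppose the run is in $t_n,\sigma_n$ with $\sigma_n \models \mathsf{inv}\wedge[\mathsf{dyn}\&\mathsf{true}]\mathsf{inv}$ and makes a transition to $t_{n+1},\sigma_{n+1}$. There are two cases according to which rule fires. If \rulename{execute} fires, then $\sigma_{n+1} \in \llbracket ?\mathsf{grd};\mathsf{bdy}\rrbracket_{\sigma_n}$ for some procedure, and since $\sigma_n \models \mathsf{inv}$, the second proof obligation yields $\sigma_{n+1} \models \mathsf{inv}\wedge[\mathsf{dyn}\&\mathsf{true}]\mathsf{inv}$, re-establishing the hypothesis. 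If \rulename{urgent} fires, then $\sigma_{n+1} \in \llbracket\mathsf{dyn}\&\mathsf{true}\rrbracket_{\sigma_n, t_{n+1}-t_n}$; but $\sigma_n \models [\mathsf{dyn}\&\mathsf{true}]\mathsf{inv}$ means $\mathsf{inv}$ holds along the \emph{whole} solution from $\sigma_n$, so in particular $\sigma_{n+1}\models\mathsf{inv}$. For the $[\mathsf{dyn}\&\mathsf{true}]\mathsf{inv}$ conjunct at $\sigma_{n+1}$: the solution of an autonomous ODE through $\sigma_{n+1}$ is a suffix (reparametrized) of the solution through $\sigma_n$, which satisfies $\mathsf{inv}$ everywhere; hence $\sigma_{n+1}\models[\mathsf{dyn}\&\mathsf{true}]\mathsf{inv}$ as well.

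It remains to cover the reachable states that are not concurrent states of the run. For a state $\sigma''$ lying strictly between $t_n,\sigma_n$ and $t_{n+1},\sigma_{n+1}$, by definition $\sigma'' \in \llbracket\mathsf{dyn}\rrbracket_{\sigma_n, t_{n+1}-t''}$ for some intermediate $t''$; again $\sigma_n \models [\mathsf{dyn}\&\mathsf{true}]\mathsf{inv}$ (established for every $n$ by the induction above) forces $\sigma'' \models \mathsf{inv}$. Similarly, if $t,\sigma$ is the final concurrent state and $\sigma' \in \llbracket\mathsf{dyn}\rrbracket_\sigma$, then $\sigma \models [\mathsf{dyn}\&\mathsf{true}]\mathsf{inv}$ gives $\sigma'\models\mathsf{inv}$. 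Thus every reachable state is a model for $\mathsf{inv}$.

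The main obstacle I anticipate is the \rulename{urgent} case, specifically the bookkeeping that $[\mathsf{dyn}\&\mathsf{true}]\mathsf{inv}$ is re-established at the \emph{post}-state of a time step and not merely preserved \emph{along} it. This relies on the semantics of ODEs in \ddl being closed under taking suffixes of solutions (which holds because the ODEs are autonomous, as stated in the preliminaries) and on $\mathsf{dyn}$ having no evolution domain, so no domain constraint can be violated along the suffix. Once that point is made carefully, everything else is a routine strengthening of the proof of Lemma~\ref{lem:hcp:1}, and the "in-between" and "after-final" states fall out of the same $[\mathsf{dyn}\&\mathsf{true}]\mathsf{inv}$ conjunct without extra work.
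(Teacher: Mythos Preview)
Your proposal is correct and follows essentially the same inductive structure as the paper's proof: induction on the length of the run with a case split on whether \rulename{execute} or \rulename{urgent} fires, plus the separate treatment of in-between states and post-final states.

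The one technical difference worth noting is that you explicitly strengthen the induction hypothesis to $\sigma_n \models \mathsf{inv} \wedge [\mathsf{dyn}\&\mathsf{true}]\mathsf{inv}$ and then re-establish the second conjunct after an \rulename{urgent} step via suffix-closure of autonomous ODE solutions. The paper keeps the hypothesis at just ``$\sigma$ is safe'' and instead relies (implicitly) on the structural fact that \rulename{urgent} can never be preceded by another \rulename{urgent} step---after \rulename{urgent} some guard holds, so \rulename{execute} must fire next---so the state before any \rulename{urgent} transition is always either the initial state or an \rulename{execute} post-state, for which the proof obligation directly provides $[\mathsf{dyn}\&\mathsf{true}]\mathsf{inv}$. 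Your route avoids that structural observation at the cost of the suffix argument; the paper's route avoids the suffix argument at the cost of the structural observation. Both are sound and neither is materially simpler.
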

\begin{proof}
The proof is analogous to the one of Lemma~\ref{lem:hcp:1}, except for the case that the transition fires
due to rule \rulename{urgent}: 
The system is in a state $t, \sigma$, where
$\sigma$ is safe and the transition results in a state $t', \sigma'$.
We need to show that for every $t_s \in (0, t'-t]$, the states $\sigma' \in \llbracket \mathsf{dyn}\rrbracket_{\sigma,t_s}$ are safe.
The modified proof obligation, however is even stronger; it ensures that all states are safe:
\[
\bigcup_{t_s \in (0, t'-t]}\llbracket \mathsf{dyn}\rrbracket_{\sigma,t_s}
\subseteq \llbracket \mathsf{dyn}\rrbracket_{\sigma}
\]
The same holds for the states which are reachable after the last \rulename{execute} transition, if such a last transition exists.
\end{proof}
The above proof obligations can be simplified, but is chosen to mirror the syntactic structure of systems introduced later.
This proof obligation scheme is sound and inhibits the same properties as the one for discrete systems.
It is, however, not expressive enough. 
Even the above example cannot be verified with it -- for the procedure $\mathsf{prcd}_\mathsf{up}$, the following has to be proven (for an invariant $3 \leq \mathtt{level} \leq 10$):

\noindent\resizebox{\textwidth}{!}{
\begin{minipage}{1.05\textwidth}
\begin{align*}
&3 \leq \mathtt{level} \leq 10 \\
\rightarrow &\big[?\mathtt{level} \geq 10; \mathtt{drain} = -1\big]\big(3 \leq \mathtt{level} \leq 10 \wedge [\mathtt{level}' = \mathtt{drain} \& \mathsf{true}]3 \leq \mathtt{level} \leq 10\big)
\end{align*}
\end{minipage}
}
\vspace{2mm}

It is easy to see that this does not hold: if the dynamics continue forever, then the level will drop below 3 at some point.
The program, however, does not do so -- it follows the continuous dynamics only until another procedure runs.
This knowledge must be included in the proof obligations, and we do so by giving a more precise post-region:
the state must only stay safe as long as none of the guards of any procedure hold, as once any guard is guaranteed to hold, the system will \emph{immediately} react.

\begin{lemma}[Precise Post-Regions for Concurrent Hybrid Programs]\label{lem:hcp:3}
Let $(B,\mathsf{dyn})$ be a program, and $\sigma$ a valuation.
Let $\mathsf{pst} = \bigwedge_{\mathsf{prcd} \in B} \tilde\neg\mathsf{grd}_\mathsf{prcd}$ be the conjunction of all weakly negated guards.
If the \ddl-formula
\[ \mathsf{init}_\sigma \rightarrow \mathsf{inv} \wedge [\mathsf{dyn} \& \mathsf{pst}]\mathsf{inv} \]
is valid, and for every $?\mathsf{grd};\mathsf{bdy} \in B$ the formula
\[
\mathsf{inv} \rightarrow [?\mathsf{grd};\mathsf{bdy}](\mathsf{inv} \wedge [\mathsf{dyn} \& \mathsf{pst}]\mathsf{inv})
\]
is valid, then every reachable state is safe, i.e., a model for $\mathsf{inv}$.
\end{lemma}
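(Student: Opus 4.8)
The plan is to adapt the proof of Lemma~\ref{lem:hcp:2}, strengthening the induction hypothesis so that it states, for the prefix of an arbitrary run up to the $n$-th transition, not only that the concurrent state $t_n,\sigma_n$ reached after $n$ transitions and all intermediate states are safe, but also that $\sigma_n \models [\mathsf{dyn}\,\&\,\mathsf{pst}]\mathsf{inv}$. For the base case and the \rulename{execute} case this is the argument of Lemma~\ref{lem:hcp:2} verbatim, with $\mathsf{pst}$ in place of $\mathsf{true}$: the proof obligation $\mathsf{init}_\sigma \rightarrow \mathsf{inv}\wedge[\mathsf{dyn}\,\&\,\mathsf{pst}]\mathsf{inv}$ establishes both conjuncts at $\sigma_0$, and the proof obligation $\mathsf{inv}\rightarrow[?\mathsf{grd};\mathsf{bdy}](\mathsf{inv}\wedge[\mathsf{dyn}\,\&\,\mathsf{pst}]\mathsf{inv})$, instantiated with $\sigma_n\models\mathsf{inv}$ (from the hypothesis) and the firing condition $\sigma_n\models\mathsf{grd}$, yields both conjuncts at $\sigma_{n+1}$; no new intermediate states appear, since \rulename{execute} does not advance the clock.

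The substance is the \rulename{urgent} case, where $t_n,\sigma_n \xrightarrow{B,\mathsf{dyn}} t_{n+1},\sigma_{n+1}$ with $\delta := t_{n+1}-t_n > 0$ the minimal delay after which some guard holds. The key claim is that the trajectory $r\mapsto\llbracket\mathsf{dyn}\rrbracket_{\sigma_n,r}$ for $r\in[0,\delta]$ never leaves $\mathsf{pst}$. For $r\in[0,\delta)$ this is immediate: by the first premise of \rulename{urgent} no guard holds at $\sigma_n$, and by minimality of $\delta$ none holds at any earlier point either, so each $\neg\mathsf{grd}_\mathsf{prcd}$ holds, and since guards are conjunctions of non-strict inequalities $\neg\mathsf{grd}_\mathsf{prcd}\Rightarrow\tilde\neg\mathsf{grd}_\mathsf{prcd}$, hence $\mathsf{pst}$ holds. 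For the endpoint $r=\delta$ (the state $\sigma_{n+1}$) I argue per procedure: if $\mathsf{grd}_\mathsf{prcd}$ does not hold at $\sigma_{n+1}$ it contributes $\neg\mathsf{grd}_\mathsf{prcd}$ and thus $\tilde\neg\mathsf{grd}_\mathsf{prcd}$; if it does hold, then by minimality of $\delta$ it holds nowhere on $[0,\delta)$, so by the event-boundary property recalled in Section~\ref{sec:logic} (applicable since guards contain only weak inequalities) the first point of the trajectory in $\mathsf{grd}_\mathsf{prcd}$, which is $\sigma_{n+1}$ itself, lies in $\mathsf{grd}_\mathsf{prcd}\wedge\tilde\neg\mathsf{grd}_\mathsf{prcd}$, giving $\tilde\neg\mathsf{grd}_\mathsf{prcd}$ again. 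Conjoining over all procedures gives $\sigma_{n+1}\models\mathsf{pst}$, so the whole $[0,\delta]$ trajectory is an admissible run of $\mathsf{dyn}\,\&\,\mathsf{pst}$. Applying $\sigma_n\models[\mathsf{dyn}\,\&\,\mathsf{pst}]\mathsf{inv}$ from the hypothesis then yields $\mathsf{inv}$ at every intermediate state $r\in(0,\delta)$ and at $\sigma_{n+1}$, which is the safety part; for the strengthened part, since the ODEs are autonomous, prefixing any $\mathsf{pst}$-admissible run of $\mathsf{dyn}$ from $\sigma_{n+1}$ with the $[0,\delta]$ segment produces a $\mathsf{pst}$-admissible run from $\sigma_n$, so $[\mathsf{dyn}\,\&\,\mathsf{pst}]\mathsf{inv}$ transfers to $\sigma_{n+1}$.

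Finally, the states reachable after the last concurrent state of a finite run are handled as in Lemma~\ref{lem:hcp:2}, with one extra observation: a run can only terminate at a state $\sigma_n$ from which no guard ever becomes true along $\mathsf{dyn}$ — otherwise \rulename{urgent} (or \rulename{execute}) would still be enabled — so the entire forward orbit $\llbracket\mathsf{dyn}\rrbracket_{\sigma_n}$ lies in $\bigwedge_\mathsf{prcd}\neg\mathsf{grd}_\mathsf{prcd}\subseteq\mathsf{pst}$ and the strengthened hypothesis covers it. I expect the main obstacle to be exactly the endpoint reasoning in the \rulename{urgent} case: establishing that $\sigma_{n+1}$ itself still satisfies $\mathsf{pst}$ although some guard holds there. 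This is precisely what forces weak negation (rather than ordinary negation) in the definition of $\mathsf{pst}$, and it also rests on the minimum in rule \rulename{urgent} being genuinely attained — which uses that each $\neg\mathsf{grd}_\mathsf{prcd}$ is open, so the set of admissible delays is closed and bounded away from $0$, and hence the $\mathsf{pst}$-constrained and unconstrained trajectories coincide on $[0,\delta]$.
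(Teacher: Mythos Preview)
Your proof is correct and follows essentially the same line as the paper's: show that the flow stays inside $\mathsf{pst}$ on the whole interval $[0,\delta]$ (open part by minimality of $\delta$, endpoint via the weak-negation boundary property), and then invoke the box formula $[\mathsf{dyn}\,\&\,\mathsf{pst}]\mathsf{inv}$ coming from the proof obligation. The only organisational difference is that you carry $[\mathsf{dyn}\,\&\,\mathsf{pst}]\mathsf{inv}$ explicitly in the induction hypothesis and re-establish it after \rulename{urgent} via the autonomy/prefixing argument, whereas the paper leaves this implicit and relies on the fact that \rulename{urgent} can only fire immediately after the initial state or an \rulename{execute} step (since \rulename{urgent} lands in a state where some guard holds), so the box formula is always freshly supplied by one of the two proof obligations; your bookkeeping is slightly more robust, the paper's is slightly shorter.
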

\begin{proof}
The proof is analogous to the one of Lemma~\ref{lem:hcp:2} except for the case for rule \rulename{urgent}:
The system is in a state $t, \sigma$, where
$\sigma$ is safe and the transition results in a state $t', \sigma'$.
We need to show that for every $t_s \in (0, t'-t]$, the states $\sigma' \in \llbracket \mathsf{dyn}\rrbracket_{\sigma,t_s}$ are safe.
Here, let $\mathsf{prcd} = ?\mathsf{grd};\mathsf{bdy}$ be the procedure which fires \emph{next}. Note that
this procedure is guaranteed to exist, as otherwise \rulename{urgent} could not fire. 
We observe that except for the final state, the guard $\mathsf{grd}$ does not hold,
thus no state in the set $\bigcup_{t_s \in (0, t'-t)}\llbracket \mathsf{dyn}\&\mathsf{pst}\rrbracket_{\sigma,t_s}$ is a model for $\mathsf{grd}$.
We can, thus overapproximate this set with the evolution domain $\neg \mathsf{grd}$. The set $\bigcup_{t_s \in (0, t'-t]}\llbracket \mathsf{dyn}\&\mathsf{pst}\rrbracket_{\sigma,t_s}$
can be overapproximated by the formula $\tilde\neg \mathsf{pst}$.
This is exactly the condition implied by the modified proof obligation, which guarantees this for \emph{all} possible next procedures.

The same holds for the states which are reachable after the last procedure run, if such a last transition exists, as if it would run out of the post-region, then another procedure would fire.
\end{proof}
With this proof obligation scheme, the above example can be verified.
It is worth pointing out that it is not verified that the \emph{post-region} is safe. Instead, it is verified that the \emph{dynamics from every post-state} stay safe within the post-region.

Let us examine the post-condition $\mathsf{inv} \wedge [\mathsf{dyn} \& \mathsf{pst}]\mathsf{inv}$.
The first conjunct states that the invariant holds after the procedure terminates and the second says that it remains safe as long as the dynamics $\mathsf{dyn}$ remain in the post-region $\mathsf{pst}$. The first conjunct is not generally implied by the second -- if $\mathsf{pst} \equiv \text{false}$, then the second conjunct is trivially true.
This corresponds to the case where directly after termination of the first procedure another procedure runs.
As it assumes the invariant, we must verify it even if the state does not evolve.
%\ektodoin{explanation of the structure of the post-condition here}

%\ektodoin{example for the water tank here}

The given programming model can be easily encoded in \ddl directly, and encoding urgency will use exactly the same weak negations as used for the post-region.
This is an artifact of the simplicity of the concurrency model -- only one process per procedure can run at the same time, and there is a single interleaving point (at procedure start/end). Furthermore, the system is closed -- the procedures cannot communicate with the outside.
After having presented the core ideas here, we use post-regions in the remaining section for a more 
complex concurrency model, where post-region based analysis cannot be reduced to a single \ddl formula, and that requires different patterns and structures in the programs to compute post-regions.

\section{Hybrid Active Objects}\label{sec:habs}
In this section we present the \emph{Hybrid Abstract Behavioral Specification} (\texttt{HABS}) language that implements Hybrid Active Objects.
For brevity's sake we refrain from introducing full \texttt{HABS}~\citep{arxiv,lites} and omit, e.g., inheritance, traits, product lines and field initialization.

Before formally introducing the language, we introduce the concurrency model with a discrete example, and we give an informal example to show the main ideas behind the hybrid extension.
For a detailed introduction to Active Objects, especially the \texttt{ABS} language, we refer to \cite{JohnsenHSSS10}, for a discussion in light of verification to \cite{doa}. 

\HABS is an object-oriented language, where each object is running on its own node in a distributed network and is \emph{strongly encapsulated}.
This means that an object cannot access the memory of another object, even if it is from the same class.
Internally, each object is implementing \emph{cooperative scheduling}: for each object, at most one process is active, i.e., can be executed and can access the object's memory.
\HABS is preemption-free: a process can only become inactive, if it \emph{explicitly} deschedules itself. A process is descheduled either by termination, or by special statements for suspension.
Between objects, communication is realized by asynchronous method calls and \emph{futures}: each asynchronous method call generates a future for the callee.
The future identifies the called process and is said to be \emph{resolved} once the associated called process terminates.
The callee can pass the future around and any process can synchronize on it, i.e., wait until it is resolved and a return value can be read.
Such synchronization is either blocking (the process remains active until the future is resolved) or suspending (the process deschedules and can only become active again once the future is resolved).

%The following code is a simple example for an Active Object.
%The class has three fields: \abs{adder} is reference to another object, \abs{flag} is a Boolean value and \abs{j} an integer.
%The method \abs{set} sets the field \abs{j} to the passed parameter \abs{i} and the \abs{flag} to \abs{True}.
%The method \abs{double} starts and then immediately suspends and waits until the value of \abs{flag} is \abs{True}.
%During this time, another process can run, e.g., a process for \abs{set}.
%Upon reactivation, the process asynchronously calls the adder to double the current number stored in \abs{j}.
%Then, using the \abs{get} statement, the process waits until the future is resolved and stores the result in \abs{k}.
%This value is then returned, after \abs{flag} is set to False.
%Between the reactivation after \abs{await} and the termination, no other process can run.
%
%\noindent
%\begin{minipage}{0.49\textwidth}
%\begin{abscode}
%class DelayedAdd(Adder adder){
%  Boolean flag = False;
%  Int j = 0;
%
%  Unit set(Int i){
%    j = i;
%    flag = True;
%  }
%\end{abscode}
%\end{minipage}
%\begin{minipage}{0.49\textwidth}
%\begin{abscode}[firstnumber=9]
%  Int double(){
%    await flag;
%    Fut<Int> f = adder!add(j, j);
%    Int i = f.get;
%    flag = False;
%    return i;
%  }
%}
%\end{abscode}
%\end{minipage}
%
%
%\noindent So far, the system was discrete and without a time model.
We present a water tank model to illustrate the language.
\begin{example}\label{ex:bball}
Fig.~\ref{fig:bball} shows the \HABS model of a water tank that keeps a water level between 3$l$ and 10$l$.
The pictured class, \abs{Tank} has one discrete field (\abs{log}) and
a \abs{physical} block that introduces physical fields. A physical field is described by its initial value and an ODE.
Here, \abs{level} and \abs{drain} model water level and drain of the tank. The drain is constant and the water level changes linearly w.r.t.\ the drain. 
Additionally, an initialization block is provided, which calls the methods 
\abs{up} and \abs{down}.
Each method starts with an \abs{await} statement, whose guard is the condition when the process will be scheduled (for \abs{up}, at the moment the level reaches 10 while water rises). 
Guards have an \emph{urgent} semantics and trigger as soon as possible.  
Each execution of a method is logged by calling the external object  \abs{log} on method \abs{triggered}.
Such method calls to other objects are asynchronous, i.e., execution continues without waiting for it to finish.
Then, the drain is adjusted and the method calls itself recursively to react the next time.
\end{example}
\begin{figure}
\noindent\begin{abscode}
class Tank(Log log){
  physical{ 
    Real level = 5; level' = drain;  //ODE and initial value for level
    Real drain = -1; drain' = 0;     //ODE and initial value for drain
  }
  { this!up(); this!down(); }         //Constructor
  Unit down(){
    await diff level <= 3 & drain <= 0;
    log!triggered(); drain =  1; this!down();
  }
  Unit up(){
    await diff level >= 10 & drain >= 0;
    log!triggered(); drain = -1; this!up();
  }
}
\end{abscode}
\caption{A water tank in \HABS.}
\label{fig:bball}
\end{figure}

The example provides us a first glimpse on post-regions: after a process terminates, the post-region
is the set of states reachable by the dynamics. 
In our case, both \abs{up} and \abs{down} are always potentially schedulable (i.e., there is a process that can be scheduled if the guard expression holds).
Thus, the post-region can be described by the weak negation of their guards: %of the methods that can trigger next:
\[(\xabs{level} \geq 3 \vee \xabs{drain} \geq 0) \wedge (\xabs{level} \leq 10 \vee \xabs{drain} \leq 0)\]

Let us now look at an example with more complex interactions.
Consider the two methods in Fig.~\ref{fig:bball2}. 
Method \abs{slowDrain} halves the drain rate for 1 time unit, if the water level is above $3.5l$.
This is done with the \abs{duration(1)} statement. It advances time by 1 time unit, while \emph{blocking the object} -- no other process can execute, 
including any methods that are (asynchronously) called in this time. As time advances, the state evolves according to the dynamics.
Yet, the water level is obviously kept above 3$l$ -- the drain is always above -1, thus in one time unit with halved drain with $\xabs{level} \geq 3.5$, 
at most 0.5$l$ can leak.

Now consider the \abs{flushLog} method. Its first line retrieves the number of entries from the logger by calling \abs{getNumberEntries}.
This call generates a future, stored in variable \abs{f}, but is asynchronous -- execution continues after the call.
The next line uses \abs{get} to retrieve the return value of the call and blocks the object until the called method terminates.
This can make the system unsafe -- if the logger deadlocks, the method never terminates, thus the object remains blocked and the \abs{up} and \abs{down} methods cannot execute 
to keep the water level within limits.

\begin{figure}
\begin{abscode}
class Tank(Log log){
  //... as in Fig. 2
  Unit slowDrain(){
     if(level >= 3.5 && drain <= 0){
       drain = drain/2;
       duration(1);
       drain = drain*2;
     }
  }

  Unit flushLog(){
      Fut<Int> f = log!getNumberEntries();
      Int i = f.get;
      if(i >= 100) log!flush();
  }
}
\end{abscode}
\caption{Additional methods for the water tank in \ref{fig:bball}.}
\label{fig:bball2}
\end{figure}

An important aspect of the Active Object concurrency model is that each object is \emph{strongly} encapsulated. No other object can access the fields of an instance.
Inside an object, only one process is active at a time. This process cannot be preempted by the scheduler --- 
it must explicitly release control by either terminating or suspending (via \abs{await}). 
These two properties make (hybrid) active object easy to analyze: 
approaches for sequential program analyses can be applied
between two \abs{await} statements (and method start and method end).

The rest of this section is structured as follows. Sec.~\ref{sec:lang:syntax} gives the syntax of \HABS and an intuition for the runtime semantics.
It also illustrates further challenges for verification and introduces the \emph{controller} pattern.
Sec.~\ref{sec:lang:semantics} gives the transition system that defines the runtime semantics and Sec.~\ref{sec:lang:traces} describes how traces are extracted from the transition system.

\subsection{Syntax}\label{sec:lang:syntax}

The syntax of \HABS is given by the grammar in Fig.~\ref{fig:syntax}.
Program point identifier $\mathsf{p}$ range over natural numbers.
Standard expressions \abs{e} are defined over fields \abs{f}, variables \abs{v} and
operators \abs{!}, \abs{|}, \abs{\&}, \abs{>=}, \abs{<=}, \abs{+}, \abs{-},
\abs{*}, \abs{/}.  
Differential expressions \abs{de} are expressions extended with a derivation operator \abs{e'}.
Types \abs{T} are all class names \abs{C}, type-generic
futures \abs{Fut<T>}, \abs{Real}, \abs{Unit} and \abs{Bool}. 

\begin{figure}
\begin{align*}
\mathsf{Prgm} ::=~& \many{\mathsf{ClassDecl}}~\{\mathsf{s}\} && \text{\small Programs}\\
\mathsf{ClassDecl} ::=~& \xabs{class C}\left[\xabs{(}\many{\xabs{T f}}\xabs{)}\right]\big\{[\mathsf{Phys}]~\big[\{\mathsf{s}\}\big]~\many{\mathsf{MetDecl}}\big\} 
&& \text{\small Class Declarations}\\
\mathsf{MetDecl} ::=~& \xabs{T m(}\many{\xabs{T}~\xabs{v}}\xabs{)}~\{\mathsf{s}\xabs{; return e;}\}
&& \text{\small Method Declarations}\\
\mathsf{Phys} ::=~& \xabs{physical}~\{\many{\mathsf{PhysDecl}}\} && \text{\small Physical Block}\\
\mathsf{PhysDecl} ::=~& \xabs{Real f = e}:~\xabs{f' = de} 
&&\text{\small Physical Field Declaration}\\
%\midrule
%\midrule
\mathsf{s} ::=~&
\xabs{while (e)}~\{\mathsf{s}\}\ssep
\xabs{if (e)}~\{\mathsf{s}\}~[\xabs{else}~\{\mathsf{s}\}]\ssep
\mathsf{s}\xabs{;}\mathsf{s}  && \\
&\ssep\xabs{await}_{\mathsf{p}}~\mathsf{g} 
\ssep
[[\xabs{T}]~\xabs{e} =] \mathsf{rhs} \ssep \xabs{duration(e)}&&\text{\small Statements}\\
\mathsf{g} ::=~&\xabs{e?} \ssep \xabs{duration(e)} \ssep \xabs{diff e}
&&\text{\small Guards}\\
\mathsf{rhs} ::=~&
\xabs{e} \ssep \xabs{new C(}\many{\xabs{e}}\xabs{)} \ssep 
\xabs{e.get} \ssep
\xabs{e!m(}\many{\xabs{e}}\xabs{)}
&&\text{\small RHS Expressions}
\end{align*}
\caption{\HABS grammar. Notation $[\cdot]$ denotes optional elements and $\many{~\cdot~}$ lists.}
\label{fig:syntax}
\end{figure}

A program consists of a set of classes and a main block. Each class may have a list of discrete fields that are passed as parameters on object creation, 
an optional physical block, a list of discrete fields that are not passed as parameters, a set of methods and an initializing block that is executed after object creation.

The \abs{physical} block is a list of field declarations followed by an ODE describing the dynamics of the declared fields. The fields declared inside a \abs{physical} block are called \emph{physical}.
Methods, initializing blocks and the main block consist of statements. Non-standard statements, compared to a \texttt{while} language, are the asynchronous method calls and the following statements, for which we give an intuition here before formally defining them.
\begin{itemize}
\item The \abs{duration(e)} statement advances time by \abs{e} time units. No other process on the same object may execute during this advance.
Time in \HABS is global and symbolic. 
\item The \abs{e.get} statement reads from a future. A future is a container that is generated by an asynchronous call (\abs{o!m()}). 
Afterwards a future may be passed around. With the \abs{get} statement one can read the return values once the called process terminates. Until then, the reading process blocks and no other process can run on the object.
\item The $\xabs{await}_{\mathsf{p}}~\xabs{g}$ statement suspends the process until the guard \abs{g} holds. 
A guard is either (1) a future poll \abs{e?} that determines whether the call for the future in \abs{e} has terminated, (2) a duration guard that advances time while \emph{unblocking} the object (i.e., while time advances other processes may run on this object), or (3) a differential guard \abs{diff e} that holds once expression \abs{e} holds.
The $\mathsf{p}$ is a program-point identifier that uniquely identifies an \abs{await} statement in a program. 
Program-points identifiers are an auxiliary structure that we use later during proof obligation generation. We omit them whenever they are not needed.
\end{itemize}

These statements pose additional challenges to verification of invariants, as because of them methods \emph{are not mere transitions}.
Instead, they give structure to the interactions between discrete computations and continuous behavior, as we have shown in Fig.~\ref{fig:bball2}.
Similar effects also occur with \abs{await} statements in a non-leading position, i.e., not the first statement in a method.
We assume that all methods are suspension-leading, i.e., each method starts with an \abs{await} statement with a \abs{diff} guard.
This is easily achieved by adding \abs{await diff true} if a method is not suspension-leading without changing the behavior.
We \emph{do} allow further \abs{await} statements within the method body.

\paragraph{Controllers.}
A useful pattern, already applied in Ex.~\ref{ex:bball}, are controllers. 
Controllers are methods which are always potentially schedulable, i.e., at every point in time, 
there is one process of such a method that can be scheduled if the leading guard expression holds.
\begin{definition}[Controller]
A method is a \emph{controller} if it (1) starts with an \abs{await} (2) contains no other suspensions and neither \abs{get} nor \abs{duration} statements, (3) ends with a recursive call, (4) is called only from the initial block.
\end{definition}
Because of (2), a controller runs instantaneous and because of (3) and (4) it is always potentially schedulable. Analyzing the leading guard of (1) allows one to precisely see under which conditions it can be scheduled.

In Ex.~\ref{ex:bball} class \abs{Tank} is controlled by \abs{up} and \abs{down}.
Controllers are a natural modeling pattern already for discrete systems: the largest Active Object modeling case study (modeling railway operations~\citep{scp} in \texttt{ABS}) uses controllers for almost all active classes.

\subsection{Runtime Semantics}\label{sec:lang:semantics}
The runtime semantics of \HABS is described by an evaluation function for expressions and guards, and a structural operational semantics for the transitions, i.e., rewrite rules on runtime syntax. 
Runtime syntax represents the state of a program. The \HABS semantics are fairly standard, except that (1) we use the timed extension of \cite{BjorkBJST13}, which allows
us to keep track of time in the semantics, and (2) that time advance changes not only the time, but also the object state, because it evolves according to the \abs{physical} block.
We first introduce runtime syntax and the evaluation function.

\paragraph{Runtime Syntax.}
Runtime syntax is the language for representation of the program state at runtime.
To fully describe the program state, runtime syntax uses an extended syntax for statements and uses a special \abs{suspend} statement, which deschedules a process.

\begin{figure}[tbh]
\begin{align*}
\mathit{tcn} &::= \mathsf{clock}(\xabs{e})~\mathit{cn}\qquad
\mathit{cn} ::= \mathit{cn}~\mathit{cn} \ssep \mathit{fut} \ssep \mathit{msg} \ssep \mathit{ob}
&&\text{(Timed) Configurations}
\\
\mathit{ob} &::= (o, \rho, {\mathit{ODE}, f}, \mathit{prc}, q)\qquad q ::= \mathit{prc} \cdot q \sep \varepsilon
&&\text{Objects, Process Queues}
\\
\mathit{msg} &::= \mathsf{msg}(o, \xabs{m}, \many{\xabs{e}}, \mathit{fid})\qquad
\mathit{fut} ::= \mathsf{fut}(\mathit{fid},\xabs{e})
&&\text{Messages, Futures}
\\
\mathit{prc} &::= (\tau,\mathit{fid},\xabs{rs})\ssep\bot\qquad
\xabs{rs} ::= \xabs{s} \ssep \xabs{suspend;s}
&&\text{Processes, Runtime Statements}
\end{align*}
\caption{Runtime syntax of \HABS.}
\label{fig:cfg}
\end{figure}

As values we use literal expressions and object identifiers.
\begin{definition}[Runtime Syntax~\citep{BjorkBJST13}]\label{runtimesyntax}
Let $\mathit{fid}$
range over future names, $o$ over object identities,
and $\rho,\tau,\sigma$ over stores (i.e., maps from fields or variables
to values). 
The runtime syntax of \HABS is given by the grammar in Fig.~\ref{fig:cfg}, where $\many{\cdot}$ denotes lists.
\end{definition}
The runtime syntax consists of a timed configuration that manages the global time, and a bag of messages (no yet handled method calls), futures and objects.

Formally, a timed configuration $\mathit{tcn}$ has a clock $\mathsf{clock}$ with
the current time, as an expression of \abs{Real} type and an object
configuration $\mathit{cn}$.  An object configuration 
consists of messages $\mathit{msg}$, futures $\mathit{fut}$ and
objects $\mathit{ob}$.  A message
$\mathsf{msg}(o, \xabs{m}, \many{\xabs{e}}, \mathit{fid})$ contains callee $o$, called method \abs{m}, passed
parameters $\many{\xabs{e}}$ and the generated future $\mathit{fid}$.  A future
$\mathsf{fut}(\mathit{fid},\xabs{e})$ connects the future name $\mathit{fid}$
with its return value \abs{e}.  An object
$(o, \rho, \mathit{ODE}, f, \mathit{prc}, q)$ has
an identifier $o$, an object store $\rho$, the current dynamic
$f$, an active process $\mathit{prc}$ and a set of inactive
processes $q$.  $\mathit{ODE}$ is taken from the class declaration.  A
process is either terminated $\bot$ or has the form
$(\tau,\mathit{fid},\xabs{rs})$: the process store $\tau$ with the current state of
the local variables, its future identifier $\mathit{fid}$, and the statement $\xabs{rs}$
left to execute. Composition $\mathit{cn}_1~\mathit{cn}_2$
is a commutative and associative concatenation.

\paragraph{Evaluation of Guards and Expressions.}
Guards and expressions are evaluated to values, given a store that maps all accessed fields and variables to values.
In our setting, 
expressions and guards are additionally evaluated with respect to a given time offset and continuous dynamics.
This is necessary for the transition system, which must compute how much time may advance before a differential guard is evaluated to true to determine the next
global time advance.
First, we give some technical definitions for stores and ODEs.

Given fixed initial values, the function $f$ is the solution of an ODE\footnote{For simplicity here we assume a unique solution, or, if the ODE has multiple solutions, a deterministic procedure to select one. The original \HABS semantics of \cite{lites} is defined more generally.} and $f(0) = \sigma$ 
with $\mathbf{dom}(\sigma) = \mathbf{dom}(\rho) \cup \mathbf{dom}(\tau)$, $\forall x \in \mathbf{dom}(\rho).~\sigma(x) = \rho(x)$
and $\forall x \in \mathbf{dom}(\tau).~\sigma(x) = \tau(x)$.
We denote this composition of functions with $\sigma = \rho \oplus \tau$.
Non-physical fields do not evolve.
\begin{definition}[Expressions]
Let $f$ be a continuous dynamics of class \abs{C}, i.e., a mapping from $\mathbb{R}^+$ to stores and $\sigma = f(0)$ the current state.
Let $\xabs{f}_p$ be a physical field and $\xabs{f}_d$ a non-physical field.
The semantics of fields $\xabs{f}_p$, $\xabs{f}_d$, unary operators ${\sim}$ and
binary operators
$\oplus$
after $t$ time units is defined as follows:
\begin{align*}
\sem{\xabs{f}_d}_\sigma^{f,t} &= \sigma(\xabs{f}_d) \qquad\qquad
\sem{\xabs{f}_p}_\sigma^{f,t} = f(t)(\xabs{f}_p)\\
\sem{{\sim}\xabs{e}}_\sigma^{f,t} &= {\sim}\sem{\xabs{e}}_\sigma^{f,t}\qquad\sem{\xabs{e} \oplus \xabs{e'}}_\sigma^{f,t} = \sem{\xabs{e}}_\sigma^{f,t} \oplus \sem{\xabs{e'}}_\sigma^{f,t}
\end{align*}
\end{definition}

Evaluation of guards is defined in two steps: 
First, the \emph{maximal time elapse} is computed. I.e., the maximal time that may pass without the guard expression evaluating to true.
For differential guards \abs{diff e} this is the minimal time until \abs{e} becomes true.
Then, the guard evaluates to true if no time advance is needed.

\begin{definition}[Guards]\label{def:semant-diff-guards}
Let $f$ be a continuous dynamic of object $o$ in state $\sigma$. The maximal time elapse {\normalfont($\mathit{mte}$)} and evaluation of guards is given in Fig.~\ref{fig:guard}.
\begin{figure}
\begin{align*}
\sem{\xabs{g}}_\sigma^{f,0} = \text{\normalfont\text{true}} &\iff \mathit{mte}_\sigma^f(\xabs{g}) \leq 0\qquad\qquad\mathit{mte}_\sigma^f(\xabs{duration(e)}) = \sem{\xabs{e}}^{f,0}_{\sigma} \\ 
\mathit{mte}_{\sigma}^f(\xabs{e?}) &= \cased{0 &\text{ if $\sem{\xabs{e}}$ is resolved} \\ \infty & \text{ otherwise}}\qquad\qquad
\mathit{mte}_\sigma^f(\xabs{diff e}) = \argmin_{t\geq 0}\left(\sem{\xabs{e}}_\sigma^{f,t} = \text{\normalfont\text{true}}\right)
\end{align*}
\caption{Semantics of guards}
\label{fig:guard}
\end{figure}
\end{definition}

\begin{figure*}
\noindent\scalebox{0.85}{\begin{minipage}{\textwidth}
\begin{align*}
\rulename{1}\ &\big(o, \rho,{\mathit{ODE}, f}, (\tau,\mathit{fid},\xabs{await g;s}), q\big) 
~\rightarrow~ \big(o, \rho,{\mathit{ODE}, f}, (\tau,\mathit{fid},\xabs{suspend;await g;s}),q\big)\\
\rulename{2}\ &\big(o, \rho,{\mathit{ODE}, f}, (\tau,\mathit{fid},\xabs{suspend;s}), q\big) 
~\rightarrow~ \big(o, \rho,{\mathit{ODE}, \mathsf{sol}(\mathit{ODE},\rho)}, \bot, q\cdot(\tau,\mathit{fid},\xabs{s})\big)\\
\rulename{3}\ &\big(o, \rho,{\mathit{ODE}, f}, \bot, q\cdot(\tau,\mathit{fid},\xabs{await g;s})\big)
~\rightarrow~ \big(o, \rho,{\mathit{ODE}, f}, (\tau,\mathit{fid},\xabs{s}),q\big) \\
&\hspace{5mm}\text{if $\sem{g}_{\rho\oplus\tau}^{f,0}$ = \text{\normalfont\text{true}}}\\
\rulename{4}\ &\big(o, \rho,{\mathit{ODE}, f}, \bot, q\cdot(\tau,\mathit{fid},\xabs{s})\big)
~\rightarrow~ \big(o, \rho,{\mathit{ODE}, f}, (\tau,\mathit{fid},\xabs{s}),q\big) \\
&\hspace{5mm}\text{if $\xabs{s}$ does not start with an \abs{await}}\\
\rulename{5}\ &\big(o, \rho,{\mathit{ODE}, f}, (\tau,\mathit{fid},\xabs{return e;}), q\big) 
~\rightarrow~ \big(o, \rho,{\mathit{ODE}, \mathsf{sol}(\mathit{ODE},\rho)}, \bot, q\big)~\mathsf{fut}\big(\mathit{fid},\sem{\xabs{e}}^{f,0}_{\rho\oplus\tau}\big)\\
\rulename{6}\ &\big(o, \rho,{\mathit{ODE}, f}, (\tau,\mathit{fid},\xabs{v = e.get;s}), q\big)~\mathsf{fut}\big(\mathit{fid},\xabs{e'}\big) 
~\rightarrow~ \big(o, \rho,{\mathit{ODE}, f}, (\tau,\mathit{fid},\xabs{v = e';s}), q\big) \\
&\hspace{5mm}\text{if }\sem{\xabs{e}}^{f,0}_{\rho\oplus\tau} = \mathit{fid}\\
\rulename{7}\ &\big(o, \rho,{\mathit{ODE}, f}, (\tau,\mathit{fid},\xabs{v = e!m(e}_1,\dots\xabs{e}_n\xabs{);s}), q\big) \\
&~\rightarrow~ \big(o,\rho,{\mathit{ODE},f},(\tau[\xabs{v}\mapsto \mathit{fid}_2],\mathit{fid},\xabs{s}),q\big) ~\mathsf{msg}\big(\sem{\xabs{e}}^{f,0}_{\rho\oplus\tau},\xabs{m},(\sem{\xabs{e}_1}^{f,0}_{\rho\oplus\tau},\dots,\sem{\xabs{e}_n}^{f,0}_{\rho\oplus\tau}),\mathit{fid}_2\big)\\
&\hspace{5mm}\text{where }\mathit{fid}_2\text{ is fresh}\\
\rulename{8}\ &\big(o, \rho,{\mathit{ODE}, f}, \mathit{prc}, q\big)~\mathsf{msg}(o, \xabs{m}, \many{\xabs{e}}, \mathit{fid}) 
~\rightarrow~ \big(o,\rho,{\mathit{ODE},f},\mathit{prc},q\cdot(\tau_{\xabs{m},\many{\xabs{e}}},\mathit{fid},\statement_\xabs{m})\big)\\
\rulename{9}\ &\big(o, \rho,{\mathit{ODE}, f}, (\tau,\mathit{fid},\xabs{v = new C(}\many{e}\xabs{);s}), q\big) \\
&~\rightarrow~ \big(o,\rho,{\mathit{ODE},f},(\tau[\xabs{v}\mapsto o'],\mathit{fid},\xabs{s}),q\big) 
~\big(o',\rho_{\classname,\many{e}},{\mathit{ODE}_\classname},\bot,(\tau',\mathit{fid}',\statement_\xabs{C}),\epsilon\big) \\
&\hspace{5mm}\text{where $o', \mathit{fid}'$ are fresh}
\end{align*}
\end{minipage}}
\caption{Interaction rules for \HABS objects.}
\label{fig:ntsem}
\end{figure*}

With the evaluation function, we can now give the transition system for \HABS. We split the rules into two groups:
the \emph{discrete transition system} describes the steps taken by a single object at a fixed point in time, while
the \emph{continuous transition system} describes the steps of the overall system and possibly advances the time.

\paragraph{Discrete Transition System}
The discrete transition system has two kinds of rules: \emph{interaction rules} that execute statements that 
interact with the environment and implement the concurrency model, and \emph{internal rules} that perform local steps in the active process of the object.

Fig.~\ref{fig:ntsem} gives the interaction rules for the semantics
of a single object. The first three rules handle the \abs{await} statement:
The rule \rulename{1} introduces a \abs{suspend} statement in front of an
\abs{await} statement.  Rule \rulename{2} consumes a \abs{suspend}
statement and moves a process into the queue of its object---at this
point, the ODEs are solved translated into some dynamics with $\mathsf{sol}$, with the current state $\rho$ as the initial conditions.
Rule \rulename{3} activates a process
with a following \abs{await} statement, if its guard evaluates to
true. 
Rule \rulename{4}
%An analogous rule (not shown in Fig.~\ref{fig:ntsem}) 
activates
a process with any other non-\abs{await} statement.  
%evaluates an assignment to a local variable. The rule for field is
%analogous.  
Rule \rulename{5} realizes a termination: the active process is removed and the future is resolved by being added to the configuration, including the return value of the ODEs, and 
Rule \rulename{6} realizes a future read: if the read future is resolved, i.e., a \texttt{fut} component is in the configuration that contains the correct future identifier, then
the contained value is copied into the active process. 
Rule \rulename{7} is a method call that generates a message and \rulename{8} starts a process from a message, where (1) $\tau_{\xabs{m},\many{\xabs{e}}}$ is the initial local store
that maps each parameter of of \abs{m} to the corresponding concrete, passed parameter in declaration order and each local variable to the default value of its type, and (2) $\statement_\xabs{m}$ is the method body of \abs{m}.
Rule \rulename{9} creates a new object. The new heap $\rho_{\classname,\many{e}}$ assigns each field its initial value according to the initializing expression in the field declarations and
the object is created with the statement of the constructor ($\statement_\classname$) and the ODEs of the class ($\mathit{ODE}_\classname$).
Rules \rulename{6}, \rulename{7} and \rulename{9} have analogous equivalences for assignments for fields, where $\rho$ is updated instead of $\tau$.
The system is non-deterministic: rules \rulename{3} and \rulename{4} are free to choose \emph{any} schedulable process and \rulename{8} can process messages in any order.
Furthermore, the focus object for the rules is chosen non-deterministically.

%, the omitted rules are describing given in~\cite{BjorkBJST13}.
The internal rules are given in Fig.~\ref{fig:ntsem:intern}. This part of \HABS is a standard \texttt{while}-language with an additional store, we only note that we ignore scopes of local variables for brevity's sake and initialize them upon process creation.
\begin{figure*}
\noindent\scalebox{0.85}{\begin{minipage}{\textwidth}
\begin{align*}
\rulename{10}\ &\big(o, \rho,{\mathit{ODE}, f}, (\tau,\mathit{fid},\xabs{v = e;s}), q\big) \\
~&\rightarrow~ \big(o, \rho,{\mathit{ODE}, f}, (\tau[\xabs{v} \mapsto \sem{\xabs{e}}^{f,0}_{\rho\oplus\tau}],\mathit{fid},\xabs{s}),q\big)
\hspace{40mm}\text{if \xabs{e} contains no call or \abs{get}}\\
\rulename{11}\ &\big(o, \rho,{\mathit{ODE}, f}, (\tau,\mathit{fid},\xabs{this.g = e;s}), q\big) \\
~&\rightarrow~ \big(o, \rho[\xabs{g} \mapsto \sem{\xabs{e}}^{f,0}_{\rho\oplus\tau}],{\mathit{ODE}, f}, (\tau,\mathit{fid},\xabs{s}),q\big)
\hspace{40mm}\text{if \xabs{e} contains no call or \abs{get}}\\
\rulename{12}\ &\big(o, \rho,{\mathit{ODE}, f}, (\tau,\mathit{fid},\xabs{while(e)}\{\xabs{s}\}\xabs{s'}), q\big) 
~\rightarrow~ \big(o, \rho,{\mathit{ODE}, f}, (\tau,\mathit{fid},\xabs{if(e)}\{\xabs{s;while(e)}\{\xabs{s}\}\}\xabs{s'}),q\big)\\
\rulename{13}\ &\big(o, \rho,{\mathit{ODE}, f}, (\tau,\mathit{fid},\xabs{if(e)}\{\xabs{s}\}\xabs{else}\{\xabs{s'}\}\xabs{s''}), q\big)
~\rightarrow~ \big(o, \rho,{\mathit{ODE}, f}, (\tau,\mathit{fid},\xabs{s;s''}),q\big)\\
&\hspace{5mm}\text{if $\sem{\xabs{e}}^{f,0}_{\rho\oplus\tau} = \mathsf{true}$}\\
\rulename{14}\ &\big(o, \rho,{\mathit{ODE}, f}, (\tau,\mathit{fid},\xabs{if(e)}\{\xabs{s}\}\xabs{else}\{\xabs{s'}\}\xabs{s''}), q\big) 
~\rightarrow~ \big(o, \rho,{\mathit{ODE}, f}, (\tau,\mathit{fid},\xabs{s;s'}),q\big)\\
&\hspace{5mm}\text{if $\sem{\xabs{e}}^{f,0}_{\rho\oplus\tau} = \mathsf{false}$}
\end{align*}
\end{minipage}}
\caption{Internal rules for \HABS objects.}
\label{fig:ntsem:intern}
\end{figure*}

\paragraph{Continuous Transition System.}
The continuous transition system contains two rules
shown in Fig.~\ref{fig:tsem}.
Rule \rulename{i} realizes a step of an object without advancing
time by using the discrete transition system for some object. Only if rule \rulename{i} is not applicable, rule \rulename{ii} can
be applied. It computes the global maximal time elapse and advances the time in the clock and all objects.  

\begin{figure}[tbh]
\scalebox{0.85}{\begin{minipage}{\columnwidth}
\begin{align*}
\rulename{i}~&\mathsf{clock}(t)~\mathit{cn}~\mathit{cn}' \rightarrow \mathsf{clock}(t)~\mathit{cn}''~\mathit{cn}' &&\text{ with } \mathit{cn} \rightarrow \mathit{cn}''\\
\rulename{ii}~&\mathsf{clock}(t)~\mathit{cn} \rightarrow \mathsf{clock}(t+t')~\mathit{adv}(\mathit{cn},t') &&\text{ if \rulename{i} is not applicable}\text{ and }\mathit{mte}(\mathit{cn}) = t'\neq \infty
\end{align*}
\end{minipage}}
\caption{Timed semantics of \HABS configurations.}
\label{fig:tsem}
\end{figure}
\begin{figure}[tbh]
\scalebox{0.9}{\begin{minipage}{\columnwidth}
\begin{align*}
\mathit{mte}(\mathit{cn}~\mathit{cn}') &= \mathbf{min}(\mathit{mte}(\mathit{cn}),\mathit{mte}(\mathit{cn}')) 
\qquad\qquad \mathit{mte}(\mathit{msg}) = \mathit{mte}(\mathit{fut}) = \infty\\
\mathit{mte}(o, \rho, \mathit{ODE}, f, \mathit{prc}, q) &=\sem{\mathbf{min}_{q' \in q}(\mathit{mte}(q'),\infty)}_\rho
\qquad\qquad \mathit{mte}(\tau,\mathit{fid},\xabs{await g;s}) = \sem{\mathit{mte}(g)}_\tau \\
\mathit{mte}(\tau,\mathit{fid},\xabs{s}) &= \infty\text{ if }\xabs{s}\neq \xabs{await g;s'}\qquad\qquad\qquad
\mathit{mte}(\xabs{duration(e)}) = \xabs{e}\qquad\\
\mathit{mte}_\sigma^f(\xabs{diff e}) &= \argmin_{t\geq 0}~ \left(\sem{\xabs{e}}_\sigma^{f,t} = \text{\normalfont\text{true}}\right)\qquad\qquad
\mathit{mte}(\xabs{e?}) = \infty\\
%    \mathit{adv}(\mathit{cn}~\mathit{cn}',t) &=\mathit{adv}(\mathit{cn},t)~\mathit{adv}(\mathit{cn}',t)\\
%    \mathit{adv}(\mathit{msg},f,t) &= \mathit{msg} \qquad\qquad \mathit{adv}(\mathit{fut},f,t) = \mathit{fut}\\
%    \mathit{adv}((o, \rho,\mathit{ODE}&, f, \mathit{prc}, q),f,t) = (o, \mathit{adv}(\rho,t), \mathit{ODE}, f, \mathit{adv}(\mathit{prc},f,t), \mathit{adv}(q,f,t))\\
%    \mathit{adv}(\bot,f,t) &= \bot\\
\end{align*}
\end{minipage}}
\begin{align*}
\mathit{adv}_\mathit{prc}\big((\tau,\mathit{fid},\xabs{s}),f,t\big) &= (\tau,\mathit{fid},\xabs{s}) \qquad\text{ if }\xabs{s} \neq (\xabs{await duration(e); s'})\\
\mathit{adv}_\mathit{prc}\big((\tau,\mathit{fid},&\,\xabs{await duration(e);s}),f,t\big) = (\tau,\mathit{fid},\xabs{await duration(e+}t\xabs{);s})\\
\mathit{adv}_\mathit{heap}(\rho,f,t)(\xabs{f}) &=
\cased{
\rho(\xabs{f}) & \text{ if \abs{f} is not physical} \\
f(t)(\xabs{f})  & \text{ otherwise }}
\end{align*}
\caption{Auxiliary functions.}
\label{fig:aux}
\end{figure}

Fig.~\ref{fig:aux} shows the auxiliary functions and includes the full
definition of \textit{mte}.  
Note that \textit{mte} is not applied to
the currently active process, because, when \rulename{i} is not
applicable, it is currently blocking and, thus, cannot advance time.
State change during a time advance is handled by a family of functions
    $\mathit{adv}$ which are applied to all stores and objects. 
    We only give two members of the family: $\mathit{adv}_{\mathit{heap}}$ takes as parameter a store $\sigma$, the dynamics $f$ and a duration $t$,
    $\mathit{adv}_{\mathit{prc}}$ takes a process $\mathit{prc}$, the dynamics $f$ and a duration $t$.
    Both advance its first parameter by $t$ time units according to $f$.  
    The state evolves according to the current dynamics and the guards of \abs{duration} guards and statements are decreased by $t$ (if the \abs{duration} clause is part of the first statement of an unscheduled process).
    For non-hybrid Active
    Objects $\mathit{adv}_{\mathit{heap}}(\rho,t) = \rho$. 

The semantics trigger the guards as soon as possible. 
This is crucial for the simulation of \HABS. 
The more relaxed semantics of, e.g., hybrid automata, introduce additional non-determinism that is beneficial for verification but difficult to handle meaningfully for simulation.
We remind that only weak inequalities are allowed, i.e., an event boundary can always be constructed and the minimum indeed exists if the boundary is every hit.

\subsection{Runs and Traces}\label{sec:lang:traces}
So far we have described the operational semantics, which translate one runtime configuration into another.
Such sequences are called \emph{runs}.
We are, however, interested in \emph{traces}: mappings from the continuous time to the state of each object at that point in time.
To compute traces, one must fill in the gaps between the points in time when the operational semantics made a step.
We must not interpolate, as we have the used dynamics as part of the object state.

Formally, a trace $\theta$ is a mapping from $\mathbb{R}^+$ to states, meaning that at time $t$ the state of the program is $\theta(t)$.
%A trace is extracted from a run by interpolating between two configurations resulting from discrete steps using the last solution.
We say that
$\mathsf{clock}(t_i)~\mathit{cn}_i$ is the final configuration at
$t_i$ in a run, if any other timed configuration
$\mathsf{clock}(t_i)~\mathit{cn}_i'$ is before it. This ensures that if the discrete execution generates several states at the same point in time, we consider the last one when it comes to safety.
\begin{definition}[Traces]\label{def:trace}
Let \textsf{Prgm} be a program. Its initial state configuration is denoted $\mathsf{clock}(0)~\mathit{cn}_0$~\citep{BjorkBJST13}.
A run of \textsf{Prgm} is a
(possibly infinite) reduction sequence
\[\mathsf{clock}(0)~\mathit{cn}_0 \rightarrow \mathsf{clock}(t_1)~\mathit{cn}_1 \rightarrow \cdots\]
A run is \emph{time-convergent} if it is infinite and $\lim_{i \rightarrow \infty} t_i < \infty$.
A run is \emph{locally terminated} if every process within terminates.
%The trace $\theta_o$ of an object $o$ in a run is an assignment from
%the dense time domain $\mathbb{R}^+$ to states.  

For each object $o$ occurring in the run, its \emph{trace} $\theta_o$ is defined as 

\scalebox{0.9}{\begin{minipage}{\columnwidth}
\begin{align*}
\theta_o(x) = \cased{
\mathit{undefined} &\text{if $o$ is not created yet}\\
\rho &\text{if $\mathsf{clock}(x)~\mathit{cn}$ is the final configuration at $x$}\\ 
&\text{ and $\rho$ is the state of $o$ in $\mathit{cn}$.}\\
\mathit{adv}_{\mathit{heap}}(\rho,f,x-y) &\text{if there is no configuration at $\mathsf{clock}(x)$}\\
&\text{and the last configuration was at $\mathsf{clock}(y)$}\\
&\text{with state $\rho$ and dynamic $f$ }
}
\end{align*}
\end{minipage}}
\end{definition}
We normalize all traces and let them start with 0 by shifting all states by the time the object is created.

\begin{example}
Consider Fig.~\ref{fig:steps} which illustrates the semantics of tank from Ex.~\ref{ex:bball}, starting at time 1.
The store is $\rho_1 = \{\xabs{level} \mapsto 4,\xabs{drain} \mapsto -1, \xabs{log} \mapsto \mathit{l}\}$ and two processes are suspended.
The one for \xabs{up} is denoted $\mathit{prc}$, the one for \xabs{down} has the remaining statement $\statement^+_{\xabs{down}}$, which is the whole method body.

Nothing can execute without advancing time, so time is advanced by 1 time unit (using rule (ii)) until the store changes to 
$\rho_1 = \{\xabs{level} \mapsto 3,\xabs{drain} \mapsto -1,\xabs{log} \mapsto \mathit{l}\}$.

This enables rule (3) to schedule the process for \xabs{down}, where the \abs{await} statement is removed: $\statement_{\xabs{down}}$ is the method body without the leading suspension.
Finally, rule (7) is used to generate a message to call the other object. $\statement^-_{\xabs{down}}$ is the method body without the leading suspension and without the call statement to \abs{log}.
\begin{figure}
\begin{align*}
&\mathsf{clock}(1)~\Big(o,\rho_1,\mathit{ODE},f,\bot,\big\{(\emptyset,\mathit{fid}_1,\statement^+_{\xabs{down}}),\mathit{prc}\big\}\Big)\\
\xrightarrow{(ii)}~&\mathsf{clock}(2)~\Big(o,\rho_2,\mathit{ODE},f,\bot,\big\{(\emptyset,\mathit{fid}_1,\statement^+_{\xabs{down}}),\mathit{prc}\big\}\Big)\\
\xrightarrow{(3)}~&\mathsf{clock}(2)~\Big(o,\rho_2,\mathit{ODE},f,(\emptyset,\mathit{fid}_1,\statement_{\xabs{down}}),\big\{\mathit{prc}\big\}\Big)\\
\xrightarrow{(7)}~&\mathsf{clock}(2)~\Big(o,\rho_2,\mathit{ODE},f,(\emptyset,\mathit{fid}_1,\statement),\big\{\mathit{prc}\big\}\Big)\quad\mathsf{msg}(o,\xabs{log},\epsilon,\mathit{fid}_2)
\end{align*}
\caption{Example semantics of a tank from Ex.~\ref{ex:bball}.}
\label{fig:steps}
\end{figure}
\end{example}
We identify \HABS variables and fields with \ddl variables and, thus, consider combined states $\rho \oplus \tau$ as valuations.

\section{Post-Regions for Distributed Systems}\label{sec:region}
%In this section we introduce specification of \HABS, post-regions and the resulting verification system.
Having fixed the concurrency model, we now present post-region based verification for it. 
Several notions from post-regions for the concurrent programs in Sec.~\ref{sec:concurrent} carry over.

First, let us make more precise what we mean by object invariants. 
Classically, an object invariant has to hold whenever a method starts or ends. 
As discussed, this is not sufficient for hybrid systems, because whenever a method terminates, the dynamics may change the state during a time advance (i.e., when rule \rulename{ii} in Fig.~\ref{fig:tsem} is applied) and result in a state where the invariant does not hold. We, thus, introduce object invariants as a property that has to hold (1) whenever a method starts or ends and (2) whenever time advances. This also includes the case
where time advances but a process is active. This ensures that an object is safe, even if its discrete part is deadlocked.

Having fixed the notion of object invariants, we can now examine how this affects traces. 
For simplicity, let us ignore \abs{get} and \abs{duration} statements for now.
It is clear that we have to verify the state directly after a method releases control. 
Additionally, we must verify states that are part of the trace before the next process is scheduled.
We must, thus, express that the invariant is preserved by the dynamics until the next process is scheduled.

To do so, it is critical to give an overapproximation of the states following a suspension/termination until the next method/block starts. 
If more about these states is known, then the verification becomes \emph{more precise} but \emph{less robust} against changes.
We give three techniques, each with a different trade-off between precision and robustness.

\begin{description}
\item[Basic Regions]
As a baseline, we can use $\mathsf{true}$ to describe all states. We introduce this in Sec.~\ref{ssec:triv} as \emph{basic} regions.
This technique is the least precise, but the most robust one.

\item[Locally Controlled Regions]
A more precise approximation is possible by analyzing the calls made in a method.
We compute for each suspension and termination point the set of methods which are guaranteed to be called \emph{on the same object} at this point.
Using this set, we can approximate when the next process will run on this object: when the leading guard expression of the called method holds.

It suffices to show that the object remains safe until some of these guards holds. The object may schedule a process earlier (e.g., if called from the outside), but this process may then also assume the invariant. 
We introduce this as \emph{locally controlled} regions in Sec.~\ref{ssec:uniform}.
This technique is more precise than using basic regions, but less robust: If a method \methodname is removed, all methods calling \methodname must be reproven because their post-regions changes, even if method \methodname has no contract.

\item[Structurally Controlled Regions]
Finally, if a controller is used in the class, then its suspension conditions also describe conditions when the next (controller) process is scheduled at the latest.
We introduce this as \emph{structurally controlled} regions in Sec.~\ref{ssec:split}.
This technique is even more precise than using locally controlled regions, but is even less robust: removal of a controller requires to reprove all other methods, as 
all post-regions change.
\end{description}

\begin{figure}[bth]
\noindent\begin{abscode}
class Ball(){
  physical{  
    Real pos = 5: pos' = v;  
    Real v = 0: v' = a; //downwards velocity
    Real a = 9.81: a' = 0;
  }
  { this!ctrl(); }         
  Unit ctrl(){
    await diff pos <= 0; //Zeno behavior ignored for the illustration
    v = -v*0.9;
    this.ctrl();
  }
  Unit hold(){ v = 0; this!catch(); }
  Unit catch(){
      await diff v >= 3;
      v = 0;
  }
}
\end{abscode}
\begin{center}
\includegraphics[scale=0.425]{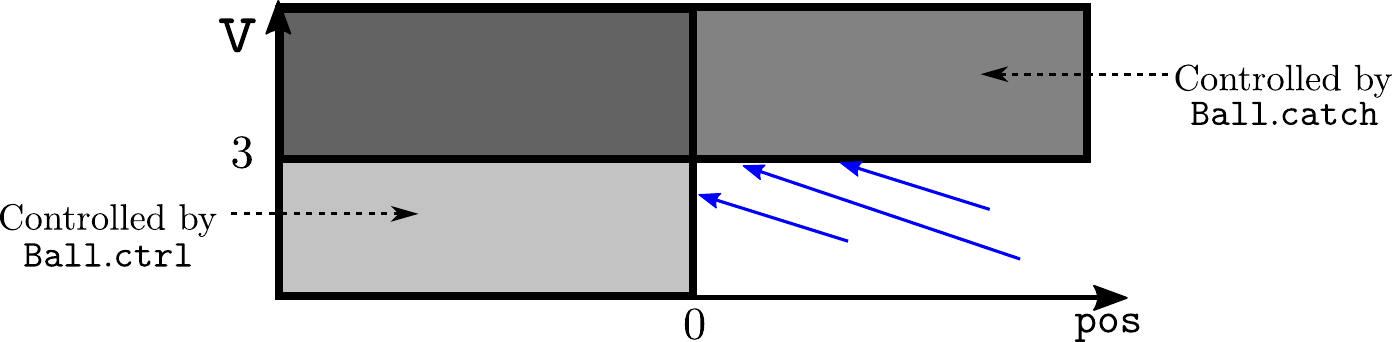}
\end{center}
\caption{Post-region of \abs{Ball.hold}. The arrows are (partial) traces after \abs{Ball.hold} terminates. 
The upper region is locally controlled by the call to \abs{Ball.catch}. The right region is structurally controlled by \abs{Ball.ctrl}.
Both these areas, and their overlap, are not relevant for verification: the post-region is only the white region.}
\label{fig:regions}
\end{figure}

\begin{example}
Consider the class and illustration in Fig.~\ref{fig:regions}. 
After \abs{Ball.hold} terminates, safety has to be established for all states until the next process is scheduled.
The upper shaded region cannot be part of these states --- as \abs{Ball.hold} called \abs{Ball.catch}, at the latest when \mbox{\abs{v >= 3}} holds this process will run.
This is the locally controlled region.
Similarly, when \mbox{\abs{pos <= 0}} holds, the controller will run. This is the right shaded region, the structurally controlled region. Thus, any of the dynamics (pictured as lines) must only establish safety for
the unshaded region. This is the post-region of \mbox{\abs{Ball.hold}}. 
We stress again that we do not verify the post-region itself, but the dynamics inside the post-region, from possible post-states of the method.
\end{example}
\paragraph{Specification.}
We specify objects with object invariants and methods with simple method contracts directly in \HABS using expressions.
%All specifications are \HABS expressions.%, which are trivial to translate into \dL.
Extracting post-conditions from futures is described by \cite{doa} and not specific to hybrid active objects.
%Similarly, first-order specifications are standard.

Object invariants are specified as an annotation at the physical block.
Additionally, we specify a \emph{creation condition} that has to hold once an instance is created.
%For example:
%
%\noindent
%\begin{abscode}[numbers=none]
%[HybridSpec: Requires("f >= 0")]
%class C (Real f) {
%    [HybridSpec: ObjInv("f >= 0")]
%    physical { ... }
%...
%\end{abscode}
%
Both creation condition and object invariant can only refer to the fields of the specified class, not to the fields of referenced objects. We refer to the creation condition of a class \classname with $\mathsf{init}_{\classname}$ and to the invariant with $\mathsf{inv}_{\classname}$.

A method may be annotated with a pre-condition and a post-condition.
The pre-condition is denoted $\mathsf{pre}_\methodname$ and can only refer to the parameters of the method.
The post-condition is denoted with $\mathsf{post}_{\methodname}$ and may refer only to fields and the special symbol \abs{result} for the return value. 
Fig.~\ref{fig:trivialfig} gives an example.
%For example:
%
%\noindent
%\begin{abscode}[numbers=none]
%[HybridSpec: Requires("i >= 0")][HybridSpec: Ensures("result > 0")]
%Real r(Real i){ return i+1; }
%\end{abscode}

All notions of post-regions share the general form of proof obligations, which is sketched in formula~\ref{eq:intromain}.
However, we require additional elements in the proof obligations: non-leading \abs{await} statements can violate the object invariant, but \ddl has no mechanism
to verify such failures before the end of symbolic execution\footnote{In contrast to other dynamics logics, such as PDL~\citep{dynamic} or BPL~\citep{bpl}.}.
For this reason, and similarly for method pre- and post-conditions, we require additional terms.
To describe the general form of proof obligations for HABS, we introduce the notion of proof obligation schemes and sound post-regions in Sec.~\ref{ssec:general}.

\subsection{General Post-Regions}\label{ssec:general}
In this section we formalize post-region invariants for HAOs and introduce the notion of soundness for post-regions, i.e., a formalization when a way to generate post-regions
leads to a sound proof principle.
The target of our verification are object invariants, safety properties on single objects that hold whenever a process starts or time advances. As every object is strongly encapsulated, an invariant can only specify the fields of a single object. We denote the specified invariant for a class \abs{C} with $I_\xabs{C}$. For initialization, a constraint on the initial values of the externally initialized fields may be specified. This \emph{creation condition} is denoted $\mathsf{pre}_\xabs{C}$ and used a precondition for the constructor.

%\subsection{General Proof Obligation Scheme}
We give a proof obligation scheme that generates \ddl-formulas for all methods in a class. To show safety of a class, all generated \ddl-formulas must be proven valid.
The scheme is parametric in both object invariant and notion of post-region, which we formalize using \emph{post-region generators}.
We give a slightly simplified treatment of method calls and contracts here and only differ between methods that are always called at a certain suspension point, and all other methods. 
Their specification is as for Active Objects~\citep{doa}: the precondition $\mathsf{pre}_\xabs{C.m}$ is a first-order formula over the method parameters and the post-condition $\mathsf{post}_\xabs{C.m}$ is a first-order formula over the fields of a method and a special variable \abs{result} for the return value.

Before the proof obligation scheme, we define some auxiliary structures, as \HABS does not directly connect to \ddl-statements.

First, we use three extra program variables:
\begin{itemize}
\item \abs{t} to keep track of time.
\item \abs{cll} (for \emph{call}) to keep track of contract violations during execution of a method.
This is necessary, because the post-condition is evaluated at the end of the method -- intermediate failures must be remembered until then. 
\item \abs{result} is used to store the return value of a method. It can be used only in the post-condition of a method. 
\end{itemize}
We assume that there are no fields or local variables with these names.
Secondly, we define the following abbreviations:
\begin{itemize}
    \item The following statement set \abs{cll} to 1, i.e., fails the analysis.
    \[\mathsf{fail} = \xabs{cll :=}~1\]
    \item The following statement sets all fields, including physical fields, \emph{but not local variables} to new values.
    This is used to approximate suspension, where another process can run, but only change fields.
    \[\mathsf{havoc} = \xabs{f}_1\xabs{:= *;}\dots\xabs{f}_n\xabs{:= *;}\quad\text{\it for all fields $\xabs{f}_i$ }\]
    \item The following statement sets all \emph{physical} fields to new values.
    This is used to approximate time advance while the object is blocked.
    \[\mathsf{havoc}^\mathsf{ph} = \xabs{f}_1\xabs{:= *;}\dots\xabs{f}_n\xabs{:= *;}\quad\text{\it for all physical fields $\xabs{f}_i$ }\]
\end{itemize}

Third, the following function $\mathsf{pr}$ generates a formula that a certain invariant $\mathsf{inv}$ holds for dynamics $\mathsf{ode}$ in post-region $\psi$:
\[
\mathsf{pr}(\psi, \mathsf{inv}, \mathsf{ode}) = \mathsf{inv} \wedge \big[\xabs{t} \xabs{:=} 0;\{\mathsf{ode},\xabs{t}' = 1\& \psi\}\big]\mathsf{inv}\\
\]
Whenever $\mathsf{inv}$ and $\mathsf{ode}$ are understood, we just write $\mathsf{pr}(\psi)$.
Finally, the proof obligation scheme is parametric in a \emph{post-region generator}, which is a map $\Psi$, that maps 
methods names and program point-identifiers to post-regions, 
i.e, formulas over the physical fields of a class and \abs{t}. 

We can now define our proof obligation scheme.
\begin{definition}[Proof Obligation Scheme]\label{def:scheme}
Let \abs{Prgm} be a program. For each class \abs{C} in the program, let $\mathsf{ode}_\xabs{C}$ be its dynamics and $\statement_\xabs{init}$ the code of the constructor.
For every method \abs{C.m} let $\statement_\xabs{C.m}$ be the method body.
Let $\statement_\xabs{main}$ be the statement of the main block.
The proof obligation scheme $\iota^\Psi$ for a post-region generator $\Psi$ is defined as follows.

For every class \abs{C}, there is one formula 
\[\iota_\xabs{C.init}^\Psi \equiv \mathsf{pre}_\xabs{C} \wedge \xabs{cll} \doteq 0\rightarrow \big[\mathsf{trans}(\statement_\xabs{init})\big]\big(\xabs{cll} \doteq 0 \wedge \mathsf{pr}(\Psi(\xabs{C.init}), I_\xabs{C}, \mathsf{ode}_\xabs{C})\big)\]
for each method \methodname in \xabs{C} one formula
\[\iota_\xabs{C.m}^\Psi \equiv I_\classname \wedge \mathsf{pre}_{\xabs{C.m}} \wedge \xabs{cll} \doteq 0 \rightarrow \big[\mathsf{trans}(\statement_\xabs{C.m})\big]\big(\xabs{cll} \doteq 0 \wedge\mathsf{post}_{\xabs{C.m}} \wedge  \mathsf{pr}(\Psi(\xabs{C.m}), I_\xabs{C}, \mathsf{ode}_\xabs{C})\big)\]
and for the main block the formula
\[ \iota_\xabs{main}^\Psi \equiv\mathsf{cll} \doteq 0 \rightarrow  \big[\mathsf{trans}(\statement_\xabs{main})\big]\xabs{cll} \doteq 0\]
The translation $\mathsf{trans}$ of \HABS statements into \ddl statements is given in Fig.~\ref{fig:trans}.
\begin{figure}[t!bh]
\scalebox{1}{\begin{minipage}{\columnwidth}
\begin{align*}
\mathsf{trans}(\statement_1;\statement_2) &= \mathsf{trans}(\statement_1)\xabs{;}\mathsf{trans}(\statement_2) \\
\mathsf{trans}(\xabs{if(e)}\{\statement_1\}\xabs{else}\{\statement_2\}) &= 
\mathsf{if}(\mathsf{trans}(\xabs{e}))\{\mathsf{trans}(\statement_1) \mathsf{else} \{\mathsf{trans}(\statement_2)\}\\
\mathsf{trans}(\xabs{while(e)}\{\statement\}) &= \mathsf{while}(\mathsf{trans}(\xabs{e}))\{\mathsf{trans}(\statement)\} \\
\mathsf{trans}(\xabs{v = e}) &= \xabs{v:=}~\mathsf{trans}(\xabs{e})\\
\mathsf{trans}(\xabs{return e}) &= \xabs{result :=}~\mathsf{trans}(e)
\end{align*}
\begin{align*}
\mathsf{trans}(\xabs{v = e.get}) &= 
\big\{\{?\mathsf{pr}(\mathsf{true})\}\xabs{;}\mathsf{havoc}^{\mathsf{ph}}\xabs{;}?\mathsf{inv}_\xabs{C}\xabs{;}\xabs{v :=*}\big\} \cup \big\{\{?\neg\mathsf{pr}(\mathsf{true}); \mathsf{fail}\}\xabs{;}\mathsf{havoc}^{\mathsf{ph}}\xabs{;}\xabs{v :=*}\big\}\\
\mathsf{trans}(\xabs{await}_\mathtt{p}~\xabs{g}) &= 
\phantom{\cup}\{?\mathsf{pr}(\Psi(\mathtt{p}) \wedge \tilde{\neg}\mathsf{trans}(\xabs{g}))\}\xabs{;}\mathsf{havoc}\xabs{;}?\mathsf{trans}(\xabs{g}) \wedge \mathsf{inv}_\xabs{C}\} \\
&\phantom{=}\cup 
\{?\neg\mathsf{pr}(\Psi(\mathtt{p}) \wedge \tilde{\neg}\mathsf{trans}(\xabs{g}))\xabs{;}\mathsf{fail}\}\xabs{;}\mathsf{havoc}\xabs{;}?\mathsf{trans}(\xabs{g})\} 
\end{align*}
\begin{align*}
\mathsf{trans}&(\xabs{v = e!m(e}_1,\dots\xabs{)}) =
\big\{\{?\mathsf{pre}_{\methodname}(\xabs{e}_1,\dots) \} \cup \{?\neg\mathsf{pre}_{\methodname}(\xabs{e}_1,\dots)
\xabs{;}\mathsf{fail} \}\big\}\xabs{;}
\xabs{v := *} \\
\mathsf{trans}&(\xabs{v = new C(e}_1,\dots\xabs{)}) =
\big\{\{?\mathsf{init}_{\classname}(\xabs{e}_1,\dots) \} \cup \{?\neg\mathsf{init}_{\classname}(\xabs{e}_1,\dots)
\xabs{;}\mathsf{fail} \}\big\}\xabs{;}
\xabs{v := *} \\
\mathsf{trans}&(\xabs{duration(e)}) = 
~\xabs{t := 0;}\big\{\{?\mathsf{pr}(\xabs{t} \leq \xabs{e})\} \cup \{?\neg\mathsf{pr}(\xabs{t} \leq\xabs{e}); \mathsf{fail}\}\big\}\xabs{;}\\
&\hspace{22.5mm}\xabs{t := 0;}\{\mathsf{ode},\xabs{t}' = 1 \& \xabs{t} \leq \mathsf{trans}(\xabs{e}) \}\xabs{;} ?\xabs{t} \geq \mathsf{trans}(\xabs{e}) %\mathsf{trans}(\xabs{v = e.get}) =& \xabs{v := *}\\
\end{align*}
\begin{align*}
\mathsf{trans}(\xabs{f}) &= f\text{ , where $f$ is the \ddl variable representing field \xabs{f}}\\
\mathsf{trans}(\xabs{v}) &= v\text{ , where $v$ is the \ddl variable representing variable \xabs{v}}\\
\mathsf{trans}(\xabs{e}_1~\mathit{op}~\xabs{e}_2) &= \mathsf{trans}(\xabs{e}_1)~\mathit{op}~\mathsf{trans}(\xabs{e}_2)\\
\mathsf{trans}(\xabs{diff e}) &= \mathsf{trans}(\xabs{e})\\
\mathsf{trans}(\xabs{g}) &= \mathsf{true}\text{ if \abs{g} does not have the form \abs{diff e}}\\
\end{align*}
\end{minipage}}
        
\caption{Translation of \HABS-statements into \dL programs.}
\label{fig:trans}
\end{figure}

\end{definition}
If $\Psi(\cdot)$ is the same for all methods and suspension points with, and the class is understood, then we just write $\Psi$.

First, we examine the proof obligation for normal methods. 
The precondition $I_\classname \wedge \mathsf{pre}_{\xabs{C.m}} \wedge \xabs{cll} \doteq 0$ lets us assume the object invariant and the method precondition. The last term initializes the special variable \abs{cll} to $0$ -- i.e., no fault has occured.
The right-side expresses that after the execution of the method body the formula $\xabs{cll} \doteq 0 \wedge\mathsf{post}_{\xabs{C.m}} \wedge  \mathsf{pr}(\Psi(\xabs{C.m}), I_\xabs{C}, \mathsf{ode}_\xabs{C})$ holds. The first term expresses that no intermediate check failed the proof and \abs{cll} is still $0$.
The second term verifies the post-condition of the method and the last term ensures safety in the post-region.
As previously discussed, it takes the form $I_\xabs{C} \wedge \left[\xabs{t} \xabs{:=} 0;\{\mathsf{ode},\xabs{t}' = 1\& \psi\}\right] I_\xabs{C}$. It ensures that the invariant holds when the method terminated and that it is an invariant for the dynamics in a defined post-region. Variable \abs{t} is used to keep track of the time since the termination --
as we will see later, this is used to express post-regions that are time-bound, i.e., a process is guaranteed after at most $n$ time units.

The proof obligation for constructors is analogous but (1) does not assume the invariant, as it has not been established yet, and (2) assumes the creation condition as its precondition. A constructor also has no post-condition. The proof obligation for the main block is \emph{only} checking that all calls and object creations adhere to the respective precondition using \abs{cll}, as it does not runs in any object and has no additional specification.

The translation of statements into \ddl-programs works as follows.
We consider all fields as variables for the translation, and assume that each variable and field are program-wide unique.
The translation of sequence, branching, loops and assignment of side-effect free expressions to location is straightforward. 
The other statements are translated as follows:
\begin{itemize}
    \item Synchronization with \abs{get} first checks $\mathsf{pr}(\mathsf{true})$, if the formula does not hold, then verification fails. This models that during synchronization, time may pass and the invariant must, thus, hold. It is \emph{not} sound to use the post-region $\psi$ here: synchronization blocks the object, so no other process can run and take over. Furthermore, it may stay blocked for an unbound amount of time, so the invariant must hold for an unbound amount of time as well.
    Afterward, \abs{v} is set to a new, unknown value, as the return value in a future is not specified. Additionally, $\mathsf{havoc}^\mathsf{ph}$ is used to model that the physical fields may have changed during the synchronization. In case the check succeeds, the invariant is known to hold for the new values of the fields.
    \item Suspension with $\xabs{await}_i$ is similar, but uses both its post-region $\Psi(\mathsf{p})$ \emph{and} the guard, as the guard triggers reactivation.
    We use $\mathsf{havoc}?\xabs{;}\mathsf{trans}(\xabs{g})$ to set all fields (but not variables) to new values.
    For these new values only the guard is known to hold. The invariant is also known to hold, but only if the check succeeds.
    \item Method calls check the precondition of the called method in their check. Again, \abs{v} is set to a new, unknown value to model a fresh future.
    \item Object creation is analogous to method calls.
    \item Finally, blocking time advance is similar to synchronization using \abs{get}, with two differences: 
    First, while it is still not sound to use $\psi$, we may limit the time spend executing this statement by evaluating the guard.
    Second, instead of causing $\mathsf{havoc}$, we can precisely simulate the state change but advancing the dynamics for the amount of time given in the \abs{duration} statement.
\end{itemize}

There is an important detail concerning \abs{await duration} -- this guard is \emph{not} used to limit the time advance of the suspension, as it is \emph{not} guaranteed that the process after this time is indeed the one that runs next.
\begin{example}
Consider a method \methodname with the method body \abs{s = await duration(5);}. The post-region at the suspension is $t \leq 5$. 
It is, however, \emph{not} guaranteed that it will reactivate after 5 time units.
It may be reactivated later, e.g., if the object is blocked in the time span of $4$ to $6$ time units.
However, for this to happen, something must be scheduled in the post-region of \abs{s}, where the invariant is guaranteed.
Thus, the proof obligation expresses that (1) the \emph{next} process can assume the object invariant and (2)  upon reactivation the object invariant holds.
It does \emph{not} express that the process is rescheduled as soon as possible.
\end{example}

We now define soundness. Intuitively, the scheme is sound is validity of all contained formulas is sufficient to prove safety of the given invariants. We consider partial correctness~\citep{DBLP:journals/cacm/Hoare69}: we show safety if each process terminates and there are no deadlocks. 
\begin{definition}[Sound Proof Obligation Scheme]\label{def:soundscheme}
If the validity of all proof obligations from $\iota^\Psi$ implies that for all 
locally terminating, time-divergent runs, 
$\mathsf{inv}_{\xabs{C}}$ holds in every state of every trace of every object $o$ realizing any class \xabs{C}, whenever (1) $o$ is inactive or (2) time advances, and that  
the pre-condition of a method holds in every pre-state and the post-condition in every post-state,
then we say that $\iota^\Psi$ is sound.
\end{definition}
Our notion of safety differs from the one for discrete systems~\citep{DinO15,key} by adding condition (2): we require that the object stays safe whenever time advances, even if a method is already active. This is critical, as otherwise an object would be in an unsafe state, but would still be considered safe if it, for example, performs a non-suspending \abs{duration}.

%\subsection{Sound Post-Regions}\label{ssec:sound}
Soundness of a proof obligation scheme depends on the post-regions: they must soundly overapproximate the states after a suspension and before the next process is scheduled. To make this precise, we introduce the notion of \emph{suspension-subtraces}.

A suspension subtrace decribes the trace between two methods/blocks whenever time advances.
\begin{definition}[Suspension-Subtraces]
Let \xabs{C.m} be a method in some program $\mathtt{Prgm}$.
Let $\trace_o$ be a trace, stemming from some run of $\mathtt{Prgm}$ for some object $o$ of \xabs{C} class. 
Let $i$ be the index in $\trace_o$ where some process of $\mathtt{m}$ terminates.
We say that $\trace_o^i$ is the suspension-subtrace of $\trace_o$, if it starts at $i$ and ends at (including) the time $j$ where the next non-trivial\footnote{I.e., a process that performs any action instead of descheduling immediately.} process is scheduled, with $i \neq j$. 
If no non-trivial process is scheduled afterwards, then $\trace_o^i$ is infinite.
Additionally, $\trace_o^i$ has a variable $\mathtt{t}$ with $\trace^i_o(0)(\mathtt{t}) = 0$ and $\mathtt{t}' = 1$. I.e., a clock that keeps track of the length of $\trace_i$.

The set of all suspension-subtraces of $\mathtt{m}$ in $\mathtt{Prgm}$ is denoted  $\Theta(\mathtt{m},\mathtt{Prgm})$.
The set of all suspension-subtraces of a program point-identifier $\mathtt{p}$ is $\Theta(\mathtt{p},\mathtt{Prgm})$ and defined analogously.
\end{definition}
Suspension-subtraces contain exactly the states where time advances and no process is active. 
The condition $i \neq j$ is used to ensure that time indeed advances for this trace and rules out traces containing a single state.
We remind that (non-trivial) post-regions are not used when time advances and a process is active, for example during the execution of a \abs{get} statement.
Thus, safety of a class can be reduced to analyzing suspension-subtraces, and we can give the following definition for sound post-regions.
\begin{definition}[Sound Post-Regions]\label{def:sound:region}
%Let $\mathtt{Prgm}$ be a set of programs, all containing a class \abs{C} with a method \abs{m} with method body \abs{s}.
Let $\Psi$ be a post-region generator. % be a first-order formula over the fields of \abs{C} and the variables of \texttt{m}.
Let $x_\xabs{C}$ range over the methods and program point identifiers within a class \abs{C}.
We say that $\Psi$ is a sound, if for every such $x$
(i.e., for every method \abs{m} within $C$, and every program point-identifier \texttt{p} within $C$)
every state of every suspension-subtrace is a model for $\Psi(x)$:
\[\forall \mathtt{Prgm}.~\forall \trace \in \Theta(x,\mathtt{Prgm}).~\forall i \leq |\trace|.~\trace[i] \models \Psi(x)\]
\end{definition}

Our main soundness theorem states that soundness of post-regions implies soundness of proof obligations schemes.
\begin{theorem}\label{thm:main}
If $\Psi$ is sound, then the proof obligation scheme of Def.~\ref{def:scheme} is sound in the sense of Def.~\ref{def:soundscheme}.
\end{theorem}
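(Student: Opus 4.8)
The plan is to follow the usual assume--guarantee pattern: an outer induction over the run that proceeds chronologically, whose core is a \emph{process-local simulation lemma} relating the \HABS semantics to the \ddl translation $\mathsf{trans}$. The simulation lemma will state: whenever a process $p$ of a method \xabs{C.m} (or the constructor) becomes active on an object $o$ of class \xabs{C} in a state $\sigma_0$ with $\sigma_0 \models I_{\xabs{C}}\wedge\mathsf{pre}_{\xabs{C.m}}$ and $\sigma_0(\xabs{cll}) = 0$, then the \HABS execution of $p$ --- from activation until $p$ suspends at the next \abs{await} or terminates, and including the time advances occurring while $o$ is blocked in a \xabs{get} or \xabs{duration} --- corresponds step by step to a run of the \ddl program $\mathsf{trans}(\statement_{\xabs{C.m}})$ from an instantiation of $\sigma_0$. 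Internal rules \rulename{10}--\rulename{14} match the obvious \ddl assignments and branches; rule \rulename{7} (resp.\ \rulename{9}) matches the call (creation) clause; rule \rulename{5} matches \xabs{result := trans(e)}; rule \rulename{6} together with any time elapsed while blocked matches the \xabs{get} clause, where the elapsed state change is absorbed into $\mathsf{havoc}^{\mathsf{ph}}$ (legitimate, because the accompanying $\mathsf{pr}(\mathsf{true})$ check forces the evolved state to satisfy $I_{\xabs{C}}$); a \xabs{duration} statement is matched by the genuine ODE block in its translation clause. Strong encapsulation is what makes this robust: every step of another object leaves $o$'s fields untouched and is a no-op from $o$'s \ddl viewpoint, so the match survives arbitrary interleaving.

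From validity of $\iota_{\xabs{C.m}}^\Psi$ and the premise holding at $\sigma_0$ we obtain $[\mathsf{trans}(\statement_{\xabs{C.m}})]\big(\xabs{cll}\doteq 0\wedge\mathsf{post}_{\xabs{C.m}}\wedge\mathsf{pr}(\Psi(\xabs{C.m}),I_{\xabs{C}},\mathsf{ode}_{\xabs{C}})\big)$ at $\sigma_0$. Since no statement ever resets \xabs{cll} and the $\mathsf{fail}$ branches only raise it, any \ddl run entering a $?\neg(\cdot);\mathsf{fail}$ branch terminates with $\xabs{cll}\doteq 1$ and thus falsifies the box; hence along the path picked out by the simulation, every positive test in $\mathsf{trans}(\statement_{\xabs{C.m}})$ succeeds. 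This delivers exactly the facts needed at the segment boundaries: (i) $\mathsf{pre}$ of the callee holds at each internal call and the creation condition at each \xabs{new}, discharging callees' assumptions; (ii) $I_{\xabs{C}}$ holds at each blocking \xabs{get} and \xabs{duration}, with $I_{\xabs{C}}$ preserved along the dynamics for the relevant time span, covering the time advances inside the segment; (iii) at each non-leading $\xabs{await}_{\mathtt{p}}$ both $I_{\xabs{C}}$ and $\mathsf{pr}(\Psi(\mathtt{p})\wedge\tilde{\neg}\mathsf{trans}(\xabs{g}))$ hold; (iv) at termination $I_{\xabs{C}}$, $\mathsf{post}_{\xabs{C.m}}$ (with \xabs{result} the returned value) and $\mathsf{pr}(\Psi(\xabs{C.m}))$ hold.

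The soundness hypothesis on $\Psi$ then converts (iii)--(iv) into trace safety. While $o$ is inactive its physical fields evolve along $\mathsf{ode}_{\xabs{C}}$ from the state recorded at the preceding suspension or termination (the semantics resets the dynamics via $\mathsf{sol}$), so the ensuing suspension-subtrace is exactly a prefix of the flow $\{\mathsf{ode}_{\xabs{C}},\xabs{t}'=1\}$; by Def.~\ref{def:sound:region} every state of it is a model for $\Psi(x)$, and by the event-boundary property of weak negation recalled at the end of Sec.~\ref{sec:logic}, together with the urgent ($\mathit{mte}$-minimal) scheduling, it stays in $\tilde{\neg}\mathsf{trans}(\xabs{g})$ up to and including its last state --- even when the process actually scheduled next is a \emph{different} one, since urgency forces the subtrace to stop at the minimal $\mathit{mte}$ over \emph{all} queued guards. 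Consequently the box $[\xabs{t}:=0;\{\mathsf{ode}_{\xabs{C}},\xabs{t}'=1\,\&\,\Psi(x)\wedge\tilde{\neg}\mathsf{trans}(\xabs{g})\}]I_{\xabs{C}}$, valid by (iii)/(iv) at the state where the subtrace begins, implies that $I_{\xabs{C}}$ holds throughout it. Every maximal inactive interval of $o$ of positive duration is such a suspension-subtrace (indexed by the ending point of the preceding active segment), and every time advance while $o$ is active lies in some active segment and is covered by (ii); hence $o$ is safe, pre- and post-conditions included. The outer induction closes by ordering the activations of all processes on all objects in the order they occur in the run: an activation's precondition $I_{\xabs{C}}\wedge\mathsf{pre}_{\xabs{C.m}}$ holds because $I_{\xabs{C}}$ is inherited from the already-treated preceding interval or segment of $o$, while $\mathsf{pre}_{\xabs{C.m}}$, which constrains only the immutable parameters, was established as fact (i) in the caller's segment, which was activated strictly earlier; constructors need no $I_{\xabs{C}}$, only $\mathsf{pre}_{\xabs{C}}$, supplied by the creating \xabs{new} (fact (i) of the creator's segment) or by $\iota_{\xabs{main}}^\Psi$ for top-level objects.

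I expect the main obstacle to be the precise statement and proof of the simulation lemma: matching the highly nondeterministic translation --- with its $\cup$-branches, $\mathsf{havoc}$s and fresh-value assignments modelling futures --- against the \HABS configuration semantics with its process queues, messages, futures and global clock, and showing the correspondence is preserved under arbitrary interleaving of other objects and under time advances during \xabs{get} and \xabs{duration}. Two finer points also need care: the event-boundary/urgency argument that a suspension-subtrace ends exactly at the $\mathit{mte}$-minimal instant and so never leaves $\tilde{\neg}\mathsf{trans}(\xabs{g})$; and the bookkeeping for \xabs{await duration} guards, which must \emph{not} be used to bound the post-region's time --- the obligation only asserts that the \emph{next} process may assume $I_{\xabs{C}}$ and that $I_{\xabs{C}}$ holds on reactivation, not that reactivation happens as early as possible. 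The restriction to locally terminating, time-divergent runs is what lets us disregard deadlock and Zeno behaviour and argue for partial correctness only.
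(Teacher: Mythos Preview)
Your proposal is correct and follows essentially the same architecture as the paper's proof: a process-local lemma (what you call the ``simulation lemma'', what the paper states as a separate lemma about a single future \fid) establishing that, under the assume--guarantee premises, each segment of a process up to the next suspension or termination reaches only states that satisfy the required checks, combined with an outer induction along the run that threads the invariant from one activation to the next. Your facts (i)--(iv) correspond to the paper's conclusions (a)--(e) of that lemma, and your chronological ordering of activations plays the same role as the paper's induction on run length with its case split over \rulename{i}/\rulename{ii} and the internal rules. The one point where you are more explicit than the paper is the urgency/event-boundary argument that a suspension-subtrace never leaves $\tilde{\neg}\mathsf{trans}(\xabs{g})$; the paper treats this largely implicitly, folding it into ``the post-region is a sound approximation of the suspension subtrace'', whereas you spell out that $\mathit{mte}$-minimality forces all queued guards---including $\xabs{g}$---to remain false on the open interval and that weak negation absorbs the endpoint.
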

The proof, given in the appendix, follows the structure of the proof system for (non-hybrid) Active Objects based on Behavioral Program Logic~\citep{bpl}. It is shown that if every process can assume the invariant every time is scheduled, then it reestablishes the invariant upon descheduling. However, instead of relying on the invariant still holding since the descheduling of the last process, it must be shown that the dynamics have preserved it so far. This is exactly the condition of sound post-regions. Additionally, \abs{get} and \abs{duration} need to be considered because they may advance time. If the above property is established, the proof of the theorem is an induction on the length of trace, where the base case is that the main block and constructors do not depend on any invariant.

It is easy to see that $\mathsf{true}$, i.e., a basic post-region, is always sound, and we will investigate it further below.
The post-region $\mathsf{false}$ may also be sound, if the method in question always suspends or terminates in a situation where another process will be scheduled directly. For example,  consider the following two methods

\noindent\begin{abscode}
Unit m1(){ this!m2(); x = 5;}
Unit m2(){ await diff x >= 5; }
\end{abscode}

\noindent After a process of \abs{m1} terminates, time never advances before the called process of \abs{m2} (or another process) is scheduled. Thus, \abs{m1} has no suspension-subtraces.

Post-regions are compositional, it is possible to derive a post-region per situation and then join them.
For example, the two controllers in Ex.~\ref{ex:bball} define two different post-regions -- by the following proposition we can compute them independently. 
\begin{proposition}[Composition of Sound Post-Region Generators]\label{lem:compose} 
If $\Psi^1$ and $\Psi^2$ are sound, then the point-wise conjunction $\Psi^1 \wedge\Psi^2$ defined by 
\[(\Psi^1 \wedge\Psi^2)(x) = \Psi^1(x) \wedge\Psi^2(x)\]
is also sound.
\end{proposition}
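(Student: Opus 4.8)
The plan is to prove soundness of $\Psi^1 \wedge \Psi^2$ directly from Definition~\ref{def:sound:region}, without invoking the main soundness theorem. The key observation is that the family of suspension-subtraces $\Theta(x,\mathtt{Prgm})$, together with its clock variable $\mathtt{t}$, depends only on the program $\mathtt{Prgm}$ and the method/program-point identifier $x$, and \emph{not} on the post-region generator: it is extracted purely from runs of $\mathtt{Prgm}$. Hence the soundness assertions for $\Psi^1$, for $\Psi^2$, and for the candidate $\Psi^1 \wedge \Psi^2$ are all quantified over exactly the same set of subtraces and indices, so the claim reduces to a point-wise fact about \ddl states.

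Concretely, I would fix an arbitrary program $\mathtt{Prgm}$, an identifier $x$ ranging over the methods and program-point identifiers of some class $C$, a suspension-subtrace $\trace \in \Theta(x,\mathtt{Prgm})$, and an index $i \leq |\trace|$. By the hypothesis that $\Psi^1$ is sound, $\trace[i] \models \Psi^1(x)$; by the hypothesis that $\Psi^2$ is sound, $\trace[i] \models \Psi^2(x)$. Since $\Psi^1(x)$ and $\Psi^2(x)$ are first-order formulas over the physical fields of $C$ and $\mathtt{t}$, their conjunction is again such a formula, and by the standard semantics of $\wedge$ we obtain $\trace[i] \models \Psi^1(x) \wedge \Psi^2(x)$, which by definition of the point-wise conjunction is $\trace[i] \models (\Psi^1 \wedge \Psi^2)(x)$. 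As $\mathtt{Prgm}$, $x$, $\trace$, and $i$ were arbitrary, $\Psi^1 \wedge \Psi^2$ satisfies the defining condition of Definition~\ref{def:sound:region}, i.e.\ it is sound.

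There is essentially no obstacle here; the one point that must be checked with (minimal) care is precisely that $\Theta$ is independent of the chosen post-region generator, which is immediate from its definition. Having established the proposition, one may optionally close the loop by applying Theorem~\ref{thm:main} to conclude that the proof obligation scheme $\iota^{\Psi^1 \wedge \Psi^2}$ is itself sound; this is what licenses computing the post-regions induced by, e.g., the two controllers of Example~\ref{ex:bball} independently and then conjoining them. Finally, the statement extends verbatim to finite conjunctions of sound post-region generators by an immediate induction on their number.
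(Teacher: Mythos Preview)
Your proof is correct and is exactly the natural argument: soundness in the sense of Definition~\ref{def:sound:region} is a universally quantified condition over suspension-subtraces and indices that does not mention the post-region generator except through the formula it returns, so the claim reduces to the pointwise fact that a state modelling both $\Psi^1(x)$ and $\Psi^2(x)$ models their conjunction. The paper itself does not spell out a proof for this proposition, treating it as immediate from the definition; your write-up is the expected elaboration, and your observation that $\Theta(x,\mathtt{Prgm})$ is independent of $\Psi$ is precisely the (only) point worth making explicit.
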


\subsection{Basic Regions}\label{ssec:triv}
The basic region is the set of all states, described by the formula \abs{true}. Intuitively, this means that after a process terminates or suspends, the object remains 
safe without any further control necessary.
Basic regions are suited to verify objects which have to be stable without an explicit control loop or a specific order of method calls. %from the outside.
They do not use any additional information about the program and can be a sound fallback if no better regions can be computed.
They are also a rather simple system to show the structure of the generated proof obligations in detail before we introduce more complex post-regions.
\begin{theorem}[Basic Post-Regions]\label{lem:trivial}
The post-region generator $\Psi^\mathsf{basic}$ that maps every method and program point-identifier to $\mathsf{true}$ is sound in the sense of Def.~\ref{def:sound:region}.
\end{theorem}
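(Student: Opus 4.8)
The plan is to simply unfold Definition~\ref{def:sound:region}. Soundness of a post-region generator $\Psi$ requires that for every program $\mathtt{Prgm}$, every $x$ ranging over the methods and program point-identifiers of the class under consideration, every suspension-subtrace $\trace \in \Theta(x,\mathtt{Prgm})$, and every index $i \leq |\trace|$, the state $\trace[i]$ is a model of $\Psi(x)$. Since $\Psi^\mathsf{basic}(x) = \mathsf{true}$ for every such $x$, the obligation to discharge reduces to $\trace[i] \models \mathsf{true}$.

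I would then invoke the semantics of the first-order fragment of \ddl (Sec.~\ref{sec:logic}): every valuation is a model of $\mathsf{true}$. Because every state occurring in a suspension-subtrace is, under the identification of \HABS states with \ddl valuations (end of Sec.~\ref{sec:lang:traces}), such a valuation, the required membership holds unconditionally. Hence the universally quantified formula of Definition~\ref{def:sound:region} holds with its consequent satisfied for every instantiation, so $\Psi^\mathsf{basic}$ is sound. As a degenerate sub-case, whenever $\Theta(x,\mathtt{Prgm})$ is empty — e.g.\ for a method that always terminates or suspends in a state where another process is immediately scheduled, as in the $\mathtt{m1}$/$\mathtt{m2}$ example — the inner quantification is vacuously true, which is consistent with the claim.

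There is no real obstacle here: the mathematical content of the statement is not its proof but its role. Combined with Theorem~\ref{thm:main}, it guarantees that the proof obligation scheme $\iota^{\Psi^\mathsf{basic}}$ is always sound, i.e.\ the basic post-region is a safe fallback that requires no program-specific information. The only point to keep straight is that ``soundness of a post-region'' (Definition~\ref{def:sound:region}) and ``soundness of a proof obligation scheme'' (Definition~\ref{def:soundscheme}) are two distinct notions; the present theorem establishes the former, while it is Theorem~\ref{thm:main} that bridges to the latter.
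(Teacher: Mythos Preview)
Your proposal is correct and takes essentially the same approach as the paper: the paper's proof is the single sentence ``Every model is a model for $\mathsf{true}$,'' and your argument is simply a more explicit unfolding of that observation via Definition~\ref{def:sound:region}. The additional remarks you make about the empty-subtrace case and the distinction between the two notions of soundness are accurate but not needed for the proof itself.
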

\begin{proof}
Every model is a model for $\mathsf{true}$.
\end{proof}

\begin{example}
Fig.~\ref{fig:trivialfig} describes a simple class where the field \abs{v} is specified to follow limited growth with limit \abs{bnd} and growth rate \abs{rate}.
It is specified that the value of \abs{v} is always between 0 and \abs{bnd}, and the rate is between 0 and 1.
Both rate and bound can be reset, but the bound can only be increased. The rate has to observe the specified interval.
The bound is set after a dynamic check, because the field is not visible from the outside, while the new growth rate is specified with a method pre-condition.
Safety relies on the fact that the method contract is adhered to.
\begin{figure}
\begin{abscode}
interface Element {	
 Unit inBound(Real nB);
 [HybridSpec: Requires(nR > 0 && nR < 1)] 
 Unit inRate(Real nR);
 [HybridSpec: Ensures(0 < result)]	
 Real outV();
}

[HybridSpec: 
 Requires(inV > 0 && inB > inV && inR < 1 && inR > 0)]
[HybridSpec: 
 ObjInv(v > 0 && bnd > v  && rate < 1 && rate > 0)]
class Element(Real inV, Real inR, Real inB) implements Element{
 physical{
  Real rate  = inR : rate'  = 0;
  Real bnd   = inB : bnd' = 0;
  Real v     = inV : v'     = rate*(bnd-v);
 }
 Unit inBound(Real nB){ if(nB >= bnd) bnd = nB; }
 Unit inRate(Real nR){ this.rate = nR; }
 Real outV(){ return this.v; }
}
\end{abscode}
\caption{Controlled limited growth.} % \texttt{HABS} source code and \texttt{KeYmaera X} proof files are in the auxiliary material.}
\label{fig:trivialfig}
\end{figure}
The four proof obligations are as follows:
\begin{align}
\mathit{init}\wedge \xabs{cll} \doteq 0 &\rightarrow \big[?\mathsf{true};\xabs{bnd} := \xabs{inB};\xabs{rate} := \xabs{inR};\xabs{v} := \xabs{inV}\big]\mathit{phy}\\
I\wedge \xabs{cll} \doteq 0 &\rightarrow \big[?\mathsf{true};\mathtt{if}( \xabs{nB} \geq \xabs{bnd})\xabs{bnd} := \xabs{nB} \big]\mathit{phy}\\
I\wedge \xabs{cll} \doteq 0 &\rightarrow \big[?\mathsf{true};\xabs{result} := \xabs{v}\big]\big(\mathit{phy} \wedge 0<\xabs{result}\big)\\
I \wedge 0<\xabs{nR} <&1\wedge \xabs{cll} \doteq 0\rightarrow\big[?\mathsf{true};\xabs{rate} := \xabs{nR}\big]\mathit{phy}%\tag{(4)}
\end{align}
Where
\begin{align*}
 I &\equiv \xabs{v} > 0 \wedge \xabs{bnd} >\xabs{v}  \wedge \xabs{rate} < 1 \wedge \xabs{rate} >0\\
 \mathit{init} &\equiv \xabs{inV} > 0 \wedge \xabs{inB} >\xabs{inV}  \wedge \xabs{inR} < 1 \wedge \xabs{inR} >0\\
 \mathit{phy} &\equiv\big((I\wedge \xabs{cll} \doteq 0\\
 &\hspace{4mm} \wedge[\xabs{rate'} = 0, \xabs{bnd'} = 0, \xabs{v'} = \xabs{rate}*(\xabs{bnd}-\xabs{v}) \& \mathsf{true}]I\big)
\end{align*}
\end{example}

The above proof obligations can be automatically closed by \texttt{KeYmaera X}.
We can improve the analysis slightly by excluding methods which obviously do not influence the invariant.
This is a simple version of frames~\citep{bubel}.
\begin{lemma}
Thm.~\ref{lem:trivial} holds also if no proof obligations are generated for methods \abs{C.m} that 
(a) do not assign to any field that is read in the invariant, any ODE, or is physical,
(b) make no method calls,
(c) create no object, and 
(d) have $\mathsf{post}_\xabs{C.m} = \mathsf{true}$
\end{lemma}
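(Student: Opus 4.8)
The plan is to adapt the appendix proof of Theorem~\ref{thm:main} for the special case $\Psi = \Psi^{\mathsf{basic}}$, rather than reprove soundness from scratch; concretely, I must show that the proof obligation scheme obtained by \emph{deleting} $\iota^{\Psi^{\mathsf{basic}}}_{\xabs{C.m}}$ for every method satisfying (a)--(d) is still sound in the sense of Def.~\ref{def:soundscheme}. Recall that the proof of Theorem~\ref{thm:main} (for basic regions) is an induction over a locally terminating, time-divergent run whose central invariant is that the basic post-region condition $\mathsf{pr}(\mathsf{true}) \equiv I_\classname \wedge [\{\mathsf{ode}_\classname, \xabs{t}' = 1 \& \mathsf{true}\}]I_\classname$ holds at every configuration in which the object is inactive and at every point where a process has just descheduled or terminated, and that $I_\classname$ itself holds at every configuration where time advances. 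The obligation $\iota^{\Psi^{\mathsf{basic}}}_{\xabs{C.m}}$ is used in exactly one place in that argument: the inductive step in which a process of \abs{C.m} is scheduled, executes, and deschedules. I would keep the rest of the proof verbatim and replace only this step, for methods \abs{C.m} satisfying (a)--(d), by a \emph{frame} argument.

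The frame argument runs as follows. Condition (a) guarantees that executing a process of \abs{C.m} alters neither a physical field, nor a variable occurring on a right-hand side of $\mathsf{ode}_\classname$, nor any field read by $I_\classname$; conditions (b) and (c) guarantee it creates no message and no object, so no further process is spawned and the dynamics $f$ carried by the object stays the one solved at the previous suspension. Hence both $I_\classname$ and $[\{\mathsf{ode}_\classname\}]I_\classname$ have the same truth value at the end of \abs{C.m}'s execution as at its start. By the induction hypothesis $\mathsf{pr}(\mathsf{true})$ held at the last inactive point preceding the scheduling of \abs{C.m}, and it is preserved by the only things that can occur in between -- time advances (which move the physical state forward along the ODE trajectory, and $[\{\mathsf{ode}_\classname\}]I_\classname$ at a point of that trajectory implies it at every later point) and runs of other, already handled processes -- so $\mathsf{pr}(\mathsf{true})$ holds when \abs{C.m} is scheduled and therefore throughout and at the end of its execution. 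In particular $I_\classname$ holds at every time advance occurring inside \abs{C.m}, be it during a \abs{get}, a \abs{duration}, or a nested \abs{await}, because the physical state there is $f$ evaluated at some nonnegative offset from \abs{C.m}'s start and the box formula covers exactly those states. This yields conditions (1) and (2) of Def.~\ref{def:soundscheme} on the segment of the trace belonging to \abs{C.m}, and re-establishes the induction invariant for the next process.

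Finally I would discharge the contract bookkeeping, i.e.\ that no fault is recorded during \abs{C.m} (which, in the deleted obligation, the conjunct $\xabs{cll} \doteq 0$ would have caught). By (d) the post-condition is $\mathsf{true}$, hence trivially met; by (b) and (c) there are no call or creation preconditions to violate; and the only remaining sources of failure inside $\statement_\xabs{C.m}$ -- the check $\mathsf{pr}(\mathsf{true})$ of a \abs{get}, the check $\mathsf{pr}(\xabs{t} \leq \xabs{e})$ of a \abs{duration}, and the check $\mathsf{pr}(\Psi^{\mathsf{basic}}(\mathtt{p}) \wedge \tilde{\neg}\mathsf{trans}(\xabs{g}))$ of a nested $\xabs{await}_{\mathtt{p}}~\xabs{g}$ -- are each implied by $\mathsf{pr}(\mathsf{true})$, since shrinking the evolution domain or adding the bound $\xabs{t}\leq\xabs{e}$ only removes reachable ODE states, so all of them hold by the previous paragraph. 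The base cases of the induction, the main block and the constructors, are untouched since their obligations are still generated.

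I expect the only real difficulty to be bookkeeping rather than mathematics: lining this frame reasoning up precisely with the exact formulation of the induction invariant in the appendix of Theorem~\ref{thm:main}, and checking that it covers every shape of nested \abs{await}, \abs{get}, and \abs{duration} that (a)--(d) still permit inside \abs{C.m}. What makes the argument go through cleanly is that the post-region is $\mathsf{true}$, so the threaded property $I_\classname \wedge [\{\mathsf{ode}_\classname\}]I_\classname$ is literally invariant under frame-preserving steps; for a more precise post-region generator the corresponding optimization would require additional side conditions relating the skipped method to the sharper post-region.
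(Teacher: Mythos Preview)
Your proposal is correct and follows essentially the same frame argument as the paper: condition (a) ensures no field relevant to $I_\classname$ or $\mathsf{ode}_\classname$ is touched, so both $I_\classname$ and $[\mathsf{ode}_\classname]I_\classname$ are preserved, while (b)--(d) dispose of the \texttt{cll} and post-condition obligations. The paper's proof is a terse three-line version of exactly this; your version is considerably more careful in threading the frame property through the induction of Theorem~\ref{thm:main} and in explicitly handling nested \abs{await}, \abs{get}, and \abs{duration} inside the skipped method (cases the paper does not spell out), but the underlying idea is the same.
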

\begin{proof}
If such a method is run in a configuration with heap $\rho$, such that $\rho$ fulfills the invariant, it may only result in some heap $\rho_2$
which also fulfills the invariant, as no field in it has changed.
Similarly, the dynamics are unchanged.
Due to (d) the post-condition trivially holds and due to (b) and (c) the condition of \abs{cll} also holds.
\end{proof}

%Basic regions are more compositional than the component-based system of~\cite{arxiv}, where any change to the class requires the whole class to be reproven.
The notion of robustness for basic regions is the same as for discrete systems: only an added or modified method needs to be (re)proven.
\begin{lemma}\label{lem:modulartrivial}
Let $\xabs{C}$ be a safe class according to Thm.~\ref{lem:trivial}. 
%I.e., for each method and the initialization the proof obligation has been shown according to Lemma~\ref{lem:trivial}.
Let $\xabs{C}^-$ be \xabs{C} with some method removed\footnote{I.e., it is removed from the class and all calls to it are replaced by \abs{skip}.} and $\xabs{C}^+$ be \xabs{C} with some  method added. 
\begin{enumerate}
\item
$\xabs{C}^-$ is safe. 
\item 
To verify $\xabs{C}^+$, only the proof obligation of the new method has to be shown.
\end{enumerate}
If another class calls the added method, only the proof obligation of the calling methods has to be shown to make the calling class safe.
\end{lemma}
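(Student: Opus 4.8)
The plan is to exploit how strongly the basic post-region generator localizes proof obligations. Since $\Psi^\mathsf{basic}$ maps every method and program point to $\mathsf{true}$ independently of the program, the only ways in which $\iota_{\xabs{C.m}}^{\Psi^\mathsf{basic}}$ can depend on other methods are through the body $\statement_\xabs{C.m}$ and, via $\mathsf{trans}$ applied to a call or object creation inside that body, through the precondition $\mathsf{pre}$ of the callee (or the creation condition of a created class). The invariant $I_\xabs{C}$, the creation condition $\mathsf{pre}_\xabs{C}$, the dynamics $\mathsf{ode}_\xabs{C}$, the contract $\mathsf{pre}_\xabs{C.m},\mathsf{post}_\xabs{C.m}$, and $\Psi^\mathsf{basic}$ itself are all untouched by adding or removing a method. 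So an obligation can change only if the operation edits the corresponding body, possibly by turning a call to the removed method $\xabs{m}$ into \abs{skip}. The proof then compares the obligation sets of $\xabs{C}$, $\xabs{C}^-$ and $\xabs{C}^+$ and argues that the surviving ones stay valid.

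\textbf{Part (1).} The obligations of $\xabs{C}^-$ are those of $\xabs{C}$ with $\iota_{\xabs{C.m}}^{\Psi^\mathsf{basic}}$ dropped and with every call to $\xabs{m}$ inside the remaining bodies (including the constructor) replaced by \abs{skip}. Consider such a modified obligation $\mathsf{P}\rightarrow[\mathsf{trans}(\statement_\xabs{C.m'})]\mathsf{Q}$, valid by assumption, with $\mathsf{Q}$ implying $\xabs{cll}\doteq 0$, $\mathsf{P}$ implying $\xabs{cll}\doteq 0$, and $\xabs{cll}$ assigned only by $\mathsf{fail}$, which sets it to $1$ and is never reset. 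Validity therefore forces every run from a $\mathsf{P}$-state to avoid all $\mathsf{fail}$ branches; in particular, at every call site for $\xabs{m}$ the test $?\mathsf{pre}_\xabs{m}$ succeeds, so the call translation acts there exactly as the havoc $\xabs{v := *}$. Replacing it by \abs{skip} merely keeps \abs{v} at its current value, which is one of the values the havoc may yield, and \abs{v} is a fresh local variable not mentioned in $I_\xabs{C}$ or in the post-condition. Propagating this site by site (a state reachable by the modified program before a call site is reachable by the original one by picking the matching havoc value, so the precondition still holds there) gives $\llbracket\mathsf{trans}(\statement_{\xabs{C}^-.\xabs{m'}})\rrbracket_\sigma\subseteq\llbracket\mathsf{trans}(\statement_\xabs{C.m'})\rrbracket_\sigma$ for all $\sigma\models\mathsf{P}$, hence the modified obligation is still valid. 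Thus all obligations of $\xabs{C}^-$ hold, i.e.\ $\xabs{C}^-$ is safe; the same refinement step shows that any other class, or the main block, that called $\xabs{m}$ also stays safe with no reproof.

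\textbf{Part (2) and the trailing claim.} Adding a fresh method $\xabs{n}$ leaves every existing body verbatim — no existing method, constructor or main block could refer to $\xabs{n}$, which did not exist before — so every obligation of $\xabs{C}^+$ other than $\iota_{\xabs{C.n}}^{\Psi^\mathsf{basic}}$ is exactly one of the already-proven obligations of $\xabs{C}$, and only the new one remains to be shown. If another class $\xabs{D}$ is then altered to call $\xabs{C.n}$, the only bodies that change are those of the methods (and the constructor) of $\xabs{D}$ that contain such a call; $\mathsf{trans}$ of the call inserts only a test $?\mathsf{pre}_\xabs{n}$ and a fresh havoc, so precisely those obligations must be reestablished while all other obligations of $\xabs{D}$ are unchanged and hence still valid, so $\xabs{D}$ becomes safe once its calling methods are reproven.

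The only genuine work is the refinement step in Part~(1): that replacing a call by \abs{skip} preserves validity of the enclosing obligation. It rests entirely on the two conventions about \abs{cll} — it starts at $0$ and is only ever raised to $1$ — which make every \abs{fail} branch dead on runs from precondition-satisfying states, so the call translation degenerates to a havoc of a variable irrelevant to the post-condition and the invariant. Everything else is a syntactic comparison of generated obligation sets and needs no $d\mathcal{L}$ reasoning.
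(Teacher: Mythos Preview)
Your argument is correct and follows the same overall strategy as the paper---comparing the generated obligation sets before and after the edit---but you do more work than the paper does, and that extra work is warranted.

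The paper's own proof asserts that ``the removal of a method does not influence the proof obligation for any other methods in its own class'' and leaves it at that. Taken literally this is false: when other methods call the removed method, their bodies change (calls become \abs{skip}), so their obligations change syntactically. You close this gap with the refinement argument: because the original obligation forces $\xabs{cll}\doteq 0$ in every post-state and \xabs{cll} is only ever raised, the $?\neg\mathsf{pre}_\xabs{m}$ branch is dead at every reachable call site, so the call translation behaves as the havoc $\xabs{v := *}$; \abs{skip} keeps the current value of \abs{v}, which is one of the havoc outcomes, hence the modified program's reachable states are contained in the original's and the obligation stays valid. The site-by-site propagation you sketch is the right way to make this precise when several calls are removed.

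One small imprecision worth tightening: you justify the refinement by saying \abs{v} is ``a fresh local variable not mentioned in $I_\xabs{C}$ or in the post-condition.'' That need not be true---the target of the assignment could be a field---but it is also unnecessary. The inclusion $\{\sigma\}\subseteq\{\sigma[\xabs{v}\mapsto r]\mid r\}$ holds regardless of what \abs{v} is, and that inclusion alone carries the monotonicity argument for $[\cdot]$. Your Parts~(2) and the trailing claim match the paper's reasoning directly.
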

\begin{proof}
The removal of a method does not influence the proof obligation for any other methods in its own class. 
Thus, the argument for safety of $\xabs{C}^-$ follows directly from the argument of proof of Thm.~\ref{lem:trivial} for \abs{C}. 
Similarly, if a class is added, the other methods still guarantee that any state reachable after their suspension or termination is safe, the only unsafe behavior that could occur must be introduced by the added method. 
The final point is straightforward: as the called method did not exist before, the calling method must be modified itself to target it and must, thus, be reproven.
\end{proof}

In comparison with the system we gave in for concurrent programs in Sec.~\ref{sec:concurrent}, there are some differences. 
The most obvious is that we have a more complex translation of statements, as \HABS is not directly based on \ddl programs. 
Other points are more subtle. First, we have \emph{two} layers of robustness:
one for methods (as for procedures in concurrent programs), and one for classes, which has no correspondence. Changes in the implementation of one class do not affect another,
only changes of method contracts do.
Second, in \HABS and unbounded number of processes can be executable (not active, but executable in the sense of suspended and potentially active), with an unbounded number of objects existing in the overall system. Here, modularity and post-regions shine: they do not only enable encapsulation of methods/procedures, but this encapsulation enables in turn an abstraction -- the concurrency model is completely outside \ddl proof obligations. In previous work, it was shown that encoding even a single class with strong restrictions on method (e.g., no suspension, at most one process per method is executable at any time), leads to the need of additional specification and a complex encoding of scheduling~\citep{lites}.

\subsection{Locally Controlled Regions}\label{ssec:uniform}

Basic post-regions use no information about the class or method for verification.
Our first addition to the system is to use the call structure for more precision, something that has no equivalent in the concurrent programs of Sec.~\ref{sec:concurrent}.
When verifying some method \methodname, methods called within \methodname are guaranteed to be in the process pool afterwards --- 
thus, we can check their guards to over-approximate when the next process will run. The negation of their guard expressions can be added to the post-region.
We call such regions \emph{locally controlled}, as each process locally starts another process to limit its post-region.

\begin{example}\label{ex:loco}
Consider the class in Fig.~\ref{fig:uniform}. It models a water tank with two event boundaries, one for the upper limit and one for the lower limit. 
The invariant holds, as whenever the water is rising, method \abs{up} is active and will react before the tank overflows and analogously for \abs{down}.

\begin{figure}
\begin{abscode}
class Tank {
    [HybridSpec: ObjInv("x >= 3 & x <= 10")]
    physical{ Real x = 5: x' = v; Real v = -1; v' = 0; }
    { this!down(); }
    Unit down(){ await x <= 3; v = 1; this!up(); }
    Unit up(){ await x >= 10; v = -1; this!down(); }
}
\end{abscode}
\caption{Water Tank with Local Control}
\label{fig:uniform}
\end{figure}
\end{example}

\noindent For formalizing locally controlled regions, we require external triggers.
\begin{definition}[External Trigger]
We use $\mathit{ttrig}_\xabs{g}$ to denote the external trigger of \abs{g}. %\methodname:
If \abs{g} is a differential guard, then $\mathit{ttrig}_\xabs{g}$ is the translation of the guard expression.
If \abs{g} is a duration guard with parameter \abs{e} then
\[\mathit{ttrig}_\xabs{g} = \xabs{t} \geq \mathsf{trans}(\xabs{e})\]
Otherwise, $\mathit{ttrig}_\xabs{g} = \mathsf{false}$.
\end{definition}
We denote the external trigger of the leading guard of \methodname with $\mathit{ttrig}_{\methodname}$.

As another technical auxiliary, we fomalize the notion of a method that is guaranteed to be called up to a certain point.
\begin{definition}[Causality Graph]
Let statement \statement be a method body. 
A causality graph is a graph $(V,E, \mathsf{entry}, \mathsf{exit})$, with $\mathsf{entry}, \mathsf{exit} \in V, E = V \times V$ and
$\forall v \in V.~(v,\mathsf{entry}) \not\in E \wedge (\mathsf{exit}, v) \not\in E$.

The causality graph $\mathsf{cg}(\statement)$ of statement \statement is defined in Fig.~\ref{fig:causal}.
We assume that several syntactically identical copies of a statement at different points can be distinguished by implicit program-point identifiers.
\begin{figure}
\resizebox{\textwidth}{!}{
\begin{minipage}{\textwidth}
\begin{align*}
\mathsf{cg}(\xabs{while(e)}\{\statement\}) &= (V,E,\mathsf{in}, \mathsf{out})\text{, with}\\
&V = V_1 \cup \{\mathsf{in}, \mathsf{out}\}\\
&E = E_1 \cup \{(\mathsf{in},v) \sep v \in \mathsf{entry}_1\} \cup \{(v,\mathsf{in}) \sep v \in \mathsf{exit}_1\} \cup (\mathsf{in},\mathsf{out})\\
&\text{where }\mathsf{cg}(\statement_1) = (V_1,E_1, \mathsf{entry}_1, \mathsf{exit}_1)\\
\mathsf{cg}(\xabs{if(e)}\{\xabs{s}_1\}\xabs{else}\{\xabs{s}_2\}) &=  (V,E,\mathsf{in}, \mathsf{out})\text{, with}\\
&V = V_1 \cup V_2 \cup \{\mathsf{in}, \mathsf{out}\}\\
&E = E_1 \cup E_2 \cup \{(\mathsf{in},v) \sep v \in \mathsf{entry}_1 \cup \mathsf{entry}_2\} \cup \{(v,\mathsf{out}) \sep v \in \mathsf{exit}_1 \cup \mathsf{exit}_2\}\\
& \text{where }\mathsf{cg}(\statement_1) = (V_1,E_1, \mathsf{entry}_1, \mathsf{exit}_1), \mathsf{cfg}(\statement_2) = (V_2,E_2, \mathsf{entry}_2, \mathsf{exit}_2)\\
\mathsf{cg}(\statement_1\xabs{;}\statement_2) &= (V_1\cup V_2,E_1 \cup E_2 \cup (\mathsf{exit}_1,\mathsf{entry}_2),\mathsf{entry}_1,\mathsf{exit}_2) \\
& \text{where }\mathsf{cg}(\statement_1) = (V_1,E_1, \mathsf{entry}_1, \mathsf{exit}_1), \mathsf{cfg}(\statement_2) = (V_2,E_2, \mathsf{entry}_2, \mathsf{exit}_2)\\
\mathsf{cg}(\statement) &= (\{\statement\},\emptyset,\statement,\statement) \\
&\text{for any other \statement}
\end{align*}
\end{minipage}
}
\caption{Causality graph of a statement.}
\label{fig:causal}
\end{figure}

A method path of node $n$ is a path from either $\mathsf{entry}$ or \abs{await} node to $n$, that does not pass through any \abs{await} node.

A method \abs{m'} is \emph{guaranteed to be called at termination of \methodname}, if every method path of the $\mathsf{exit}$ node in $\statement_\xabs{m}$
contains a call to \abs{m'} on \abs{this}.
The set of all such methods is denoted $\mathsf{gcall}(\xabs{C.m})$.
A method \abs{m'} is \emph{guaranteed to be called} at $\xabs{await}_i$, if every method path of the $\xabs{await}_i$ node in $\statement_\xabs{m}$
contains a call to \abs{m'} on \abs{this}.
The set of all such methods is denoted $\mathsf{gcall}(\xabs{i})$.
\end{definition}

Finally, we can now give the theorem for locally controlled regions, which states that we can use the methods that are guaranteed to be called up to a suspension or termination point
to compute sound post-regions from their guards.
\begin{theorem}[Locally Controlled Regions]\label{lem:uniform}
Given a class \abs{C} and a method \abs{m}, let $\mathsf{calls}_{\xabs{C.m}}$ be the set of methods that are guaranteed to be called within \abs{C.m}.

The post-region generator $\Psi^\mathsf{local}$ that maps every method \abs{C.m} to 
\[
\bigwedge_{\methodname \in \mathsf{gcall}(\xabs{C.m})} \widetilde{\neg}\mathit{ttrig}_\methodname
\]
and every \abs{await} identifier $i$ to
\[
\bigwedge_{\methodname \in \mathsf{gcall}(\xabs{i})} \widetilde{\neg}\mathit{ttrig}_\methodname
\]
is sound.
\end{theorem}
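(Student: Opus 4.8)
What has to be shown is that $\Psi^{\mathsf{local}}$ is a sound post-region generator in the sense of Def.~\ref{def:sound:region}; soundness of the induced proof-obligation scheme $\iota^{\Psi^{\mathsf{local}}}$ then follows from Thm.~\ref{thm:main}. By the composition result (Prop.~\ref{lem:compose}) it suffices to fix a class $C$, an item $x$ of it (a method \abs{m} of $C$, or an \abs{await}-identifier \abs{p} of $C$), a single method $\mathtt{m}' \in \mathsf{gcall}(x)$, an object $o$ of $C$, and an arbitrary suspension-subtrace $\theta \in \Theta(x,\mathtt{Prgm})$ stemming from $o$'s trace, and to prove $\theta(s) \models \widetilde{\neg}\mathit{ttrig}_{\mathtt{m}'}$ for every $s \le |\theta|$; conjoining over $\mathsf{gcall}(x)$ then yields $\theta(s)\models\Psi^{\mathsf{local}}(x)$. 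If the leading guard of $\mathtt{m}'$ is a future poll, $\mathit{ttrig}_{\mathtt{m}'}=\mathsf{false}$, so $\widetilde{\neg}\mathit{ttrig}_{\mathtt{m}'}=\mathsf{true}$ and there is nothing to prove; hence I may assume the leading guard of $\mathtt{m}'$ is a differential guard (the main case) or a duration guard.

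The first step is to produce a queued witness. When the process underlying $x$ reaches the point that opens the subtrace --- termination of a process of \abs{m}, or arrival at the \abs{await} node $x$ --- it has traversed a method path of the corresponding node of the causality graph (the $\mathsf{exit}$ node when $x$ is a method, the \abs{await} node otherwise; see Fig.~\ref{fig:causal}); by the definition of $\mathsf{gcall}$, that path contains an asynchronous self-call to $\mathtt{m}'$, which by rule~\rulename{7} (Fig.~\ref{fig:ntsem}) has already placed a message $\mathsf{msg}(o,\mathtt{m}',\dots)$ for $o$ into the configuration. Because the discrete rule~\rulename{i} has strict priority over the time-advancing rule~\rulename{ii} (Fig.~\ref{fig:tsem}) and a suspension-subtrace by definition involves an actual time advance (the side condition $i\neq j$), that message is absorbed by rule~\rulename{8}, producing a queued process $P_{\mathtt{m}'}$ of $o$, before any time passes. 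Since $\mathtt{m}'$ is suspension-leading, $P_{\mathtt{m}'}$ starts with the leading guard of $\mathtt{m}'$, whose external trigger is exactly $\mathit{ttrig}_{\mathtt{m}'}$.

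The second step is to observe that $P_{\mathtt{m}'}$ stays in the queue of $o$ throughout the open part of $\theta$: a suspension-subtrace consists, by construction, precisely of states in which $o$ is inactive and time advances, and during such phases the process pool of $o$ is only rewritten by the $\mathit{adv}$ family, which leaves a queued differential-guard process untouched and merely shifts the local clock of a queued duration-guard process; any discrete step that schedules a queued process and lets it act ends the subtrace by definition. The third step is the urgency argument. Suppose some state $\theta(s_0)$ with $s_0$ strictly before $|\theta|$ violated $\widetilde{\neg}\mathit{ttrig}_{\mathtt{m}'}$. Because $\mathit{ttrig}_{\mathtt{m}'}$ contains only weak inequalities, $\neg\widetilde{\neg}\mathit{ttrig}_{\mathtt{m}'}$ entails $\mathit{ttrig}_{\mathtt{m}'}$, so at $s_0$ the leading guard of $P_{\mathtt{m}'}$ already evaluates to true; then $P_{\mathtt{m}'}$ is schedulable there, rule~\rulename{i} (via rule~\rulename{3}) is applicable, rule~\rulename{ii} cannot have advanced time past $s_0$, and scheduling $P_{\mathtt{m}'}$ is a non-trivial step --- contradicting $s_0<|\theta|$. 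Hence $\theta(s)\models\widetilde{\neg}\mathit{ttrig}_{\mathtt{m}'}$ for all $s<|\theta|$, and if $\theta$ is infinite this is every state. For the last state of a finite $\theta$ one invokes the event-boundary property of weak negation from Sec.~\ref{sec:logic}: if $\theta$ ends exactly because the guard of $P_{\mathtt{m}'}$ triggers, that state is the first point of the trajectory inside $\mathit{ttrig}_{\mathtt{m}'}$ and therefore lies in $\mathit{ttrig}_{\mathtt{m}'}\wedge\widetilde{\neg}\mathit{ttrig}_{\mathtt{m}'}$; otherwise $\mathit{ttrig}_{\mathtt{m}'}$ is still strictly false there. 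Either way $\theta(|\theta|)\models\widetilde{\neg}\mathit{ttrig}_{\mathtt{m}'}$, and Thm.~\ref{thm:main} upgrades the result to soundness of $\iota^{\Psi^{\mathsf{local}}}$.

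I expect the real work to sit in steps two and three: pinning down, against the urgent operational semantics, that the guaranteed-called processes actually keep the subtrace urgent on their \emph{original} leading guards. Two points need care. First, the message generated by rule~\rulename{7} must be consumed by rule~\rulename{8} before time advances; this rests on the priority of \rulename{i} over \rulename{ii} together with the side condition $i\neq j$ in the definition of suspension-subtraces. Second, a guaranteed-called method whose body itself begins with a further \abs{await} could be scheduled and immediately re-suspend on a different guard, so that the guard $\mathit{ttrig}_{\mathtt{m}'}$ is meant to track is no longer the leading one; one has to argue --- from the suspension-leading discipline and the precise notion of a ``non-trivial'' scheduled process --- that such a step either terminates the subtrace or leaves $\widetilde{\neg}\mathit{ttrig}_{\mathtt{m}'}$ intact afterwards. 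Finally, for a duration leading guard one must check that the subtrace clock \abs{t} (reset to $0$ at the start of the subtrace) runs no faster than $P_{\mathtt{m}'}$'s own clock --- which holds because $P_{\mathtt{m}'}$ is created before the subtrace's first time advance --- so that $\widetilde{\neg}(\xabs{t}\geq\mathsf{trans}(\xabs{e}))$ over-approximates $\theta$, possibly strictly. Once this bookkeeping is in place, the remainder is routine.
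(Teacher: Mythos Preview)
Your proposal is correct and follows essentially the same route as the paper's proof: produce a queued process for each guaranteed-called method via rules~\rulename{7}/\rulename{8} before time can advance, then derive a contradiction from urgency if any state of the suspension-subtrace fell outside the weakly-negated trigger. You are in fact more careful than the paper on several points it leaves implicit---the decomposition via Prop.~\ref{lem:compose}, the boundary state handled through weak negation, and the duration-guard clock bookkeeping---so your write-up would serve as a strict refinement of the paper's argument.
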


\noindent The form of formula~\ref{eq:intromain} is now crucial. Contrary to basic regions, we cannot use a single modality by:
\[ [\statement](\mathsf{inv}_\classname \wedge [\mathsf{ode}_\classname\&\mathsf{true}]\mathsf{inv}_\classname)  \iff [\statement; \mathsf{ode}_\classname\&\mathsf{true}]\mathsf{inv}_\classname\]
because the post-region may be empty. 
In this case, 
e.g., if a method with a leading guard \abs{true} is called,
the second modality simplifies to $\mathsf{true}$. This would mean that without the additional check on $\mathsf{inv}_\classname$ in the post-condition of the first modality,
the called method would not be guaranteed the object invariant after all.
The used obligation still ensures the invariant in the post-state. % in this case.

Returning to our analysis of robustness, we observe that locally controlled post-regions are less compositional:
Removal of a method requires reproving all calling methods.
\begin{lemma}\label{lem:modulareuniform}
Let $\xabs{C}$ be a safe class according to Thm.~\ref{lem:uniform}. 
%I.e., for each method and the initialization the proof obligation has been shown according to Lemma~\ref{lem:trivial}.
Let $\xabs{C}^-$ be \xabs{C} with some method $\methodname^-$removed and $\xabs{C}^+$ be \xabs{C} with some method added. 
\begin{enumerate}
\item
To verify $\xabs{C}^-$, only the proof obligations of methods calling $\methodname^-$ must be shown.
\item 
To verify $\xabs{C}^+$, only the proof obligation of the new method has to be shown.
\end{enumerate}
If a method (potentially from another class) calls $\methodname^+$, only the calling methods have to be reproven.
\end{lemma}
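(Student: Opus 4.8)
The plan is to follow the argument of Lemma~\ref{lem:modulartrivial}, adding one preliminary step that accounts for the fact that $\Psi^\mathsf{local}$ now depends on the call structure. First I would prove a \emph{locality lemma}: for $\Psi=\Psi^\mathsf{local}$, the proof obligation $\iota_{\xabs{C}.\methodname'}^{\Psi^\mathsf{local}}$ produced by Def.~\ref{def:scheme} is determined by $\statement_{\xabs{C}.\methodname'}$, the invariant $I_\xabs{C}$, the dynamics $\mathsf{ode}_\xabs{C}$, and the declared contracts $\mathsf{pre}_\methodname,\mathsf{post}_\methodname$ of the methods $\methodname$ called on \abs{this} inside $\statement_{\xabs{C}.\methodname'}$ --- and by nothing else. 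This is a structural inspection of Fig.~\ref{fig:trans}: the only places a value $\Psi^\mathsf{local}(\xabs{C.m'})$ or $\Psi^\mathsf{local}(\mathtt{p})$ enters are $\mathsf{pr}(\cdot)$ in the post-condition and the translation of $\xabs{await}_\mathtt{p}$, and each such value is $\bigwedge_{\methodname\in\mathsf{gcall}(\cdot)}\widetilde{\neg}\mathit{ttrig}_\methodname$, where $\mathsf{gcall}(\cdot)$ is read off from the causality graph of $\statement_{\xabs{C}.\methodname'}$ alone (Fig.~\ref{fig:causal}) and, by definition, contains only methods syntactically called on \abs{this} within $\statement_{\xabs{C}.\methodname'}$. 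In particular $\iota_{\xabs{C}.\methodname'}^{\Psi^\mathsf{local}}$ does not depend on the bodies of other methods, nor on which further methods the class declares.

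For part~(1) I would then argue: removing $\methodname^-$ leaves the physical block and the invariant annotation untouched, so $I_\xabs{C}$ and $\mathsf{ode}_\xabs{C}$ are unchanged. For any method $\methodname'$ (and likewise the constructor and the main block, which we treat on par with methods here) that contains no call to $\methodname^-$ on \abs{this}, the removal replaces no statement in $\statement_{\xabs{C}.\methodname'}$, and $\methodname^-\notin\mathsf{gcall}(\cdot)$ at every node of $\statement_{\xabs{C}.\methodname'}$, so $\Psi^\mathsf{local}$ evaluates identically for $\xabs{C}$ and $\xabs{C}^-$ on $\methodname'$ and its program points; by the locality lemma $\iota_{\xabs{C}^-.\methodname'}^{\Psi^\mathsf{local}}=\iota_{\xabs{C}.\methodname'}^{\Psi^\mathsf{local}}$, which holds since $\xabs{C}$ is safe. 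Hence only the obligations of methods calling $\methodname^-$ need to be re-shown; once all obligations of $\xabs{C}^-$ are valid, $\Psi^\mathsf{local}$ is sound for $\xabs{C}^-$ by Thm.~\ref{lem:uniform}, so $\xabs{C}^-$ is safe. For part~(2), adding the declaration $\methodname^+$ changes no existing method body, no existing body calls $\methodname^+$, and $I_\xabs{C},\mathsf{ode}_\xabs{C}$ are unchanged, so $\Psi^\mathsf{local}$ and hence (by the locality lemma) every pre-existing obligation is unchanged and still valid; it remains to establish $\iota_{\xabs{C}^+.\methodname^+}^{\Psi^\mathsf{local}}$, after which Thm.~\ref{lem:uniform} yields safety of $\xabs{C}^+$. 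For the closing claim, if a method $\methodname''$ of $\xabs{C}^+$ or of some class $\xabs{D}$ is modified to call $\methodname^+$, then $\statement_{\xabs{D}.\methodname''}$ changes (the translation of the new call inserts a precondition test, Fig.~\ref{fig:trans}) and possibly $\mathsf{gcall}(\cdot)$ changes for nodes of $\methodname''$ when the call is on \abs{this}; either way $\iota_{\xabs{D}.\methodname''}^{\Psi^\mathsf{local}}$ changes and must be re-shown, whereas every other method of $\xabs{D}$ keeps its body, its $\mathsf{gcall}$, and $I_\xabs{D},\mathsf{ode}_\xabs{D}$, hence its obligation; Thm.~\ref{lem:uniform} then gives safety of the re-verified $\xabs{D}$ (and $\xabs{C}^+$).

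The main obstacle is the locality lemma of the first step: it is conceptually simple but must be checked carefully, since it is the single point at which modularity could break. One has to verify that no clause of $\mathsf{trans}$ and no clause of the causality-graph construction ever inspects anything outside $\statement_{\xabs{C}.\methodname'}$, that $\Psi^\mathsf{local}(\mathtt{p})$ for an \abs{await}-identifier $\mathtt{p}$ inside $\methodname'$ depends only on the prefix of $\statement_{\xabs{C}.\methodname'}$ up to $\mathtt{p}$ (through $\mathsf{gcall}(\mathtt{p})$), and that the only inter-method dependency that remains --- the declared contracts of called methods --- is unaffected by adding or removing the unrelated methods $\methodname^+$ and $\methodname^-$. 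Once this is in place, parts~(1) and~(2) and the external-caller claim follow immediately from Thm.~\ref{lem:uniform}.
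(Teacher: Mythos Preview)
Your proposal is correct and follows essentially the same line as the paper: both reduce to Lemma~\ref{lem:modulartrivial} and observe that the only new phenomenon under $\Psi^\mathsf{local}$ is that removing $\methodname^-$ may shrink $\mathsf{gcall}(\cdot)$ for callers and hence change their post-region. The paper dispatches this in one sentence, whereas you make the dependency structure explicit via your locality lemma; that extra step is sound and indeed clarifies exactly why non-callers keep their obligations unchanged, but it is not a different argument.
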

\begin{proof}
The only difference to Lem.~\ref{lem:modulartrivial} is the first point, which follows from the simple observation that if a method calls $\methodname^-$, 
then $\methodname^-$ can occur in some $\mathsf{CM}$ set and thus in its post-region. Without $\methodname^-$, the boundaries of the post-region are shifted, as
there is one message less to be processed and the calling method must, thus, be reproven for the new post-region.\qedhere
\end{proof}

One of the patterns that 
locally controlled regions can verify are classes with a single controller and no further methods.
\begin{example}\label{ex:ctank}
The controller in Fig.~\ref{fig:ctank} checks the water level of a tank every $\frac{1}{2}$ seconds,
so every process has to establish safety only for that time frame.
This example can be verified using locally controlled post-regions.
\begin{figure}[b!th]
\begin{abscode}
[HybridSpec: Requires("3.5<=inVal<=9.5")]
class TankTick(Real inVal){
  [HybridSpec: ObjInv("3<=x<=10 & -1<=v<=1")]
  physical{
    Real x = 5: x' = v;
    Real v = -1; v' = 0;
  }
  { this!ctrl(); }
  Unit ctrl(){
    await duration(1/2);
    if(x <= 3.5) v = 1;
    if(x >= 9.5) v = -1;
    this!ctrl();
  }
}
\end{abscode}
\caption{Timed water tank.}
\label{fig:ctank}
\end{figure}
\end{example}

They are also suitable for mutually recursive call structures -- Ex.~\ref{ex:loco} can be verified now.

\subsection{Structurally Controlled Regions}\label{ssec:split}

Locally controlled regions are not sufficient for multiple controllers, or one controller and additional other methods, such as in Ex.~\ref{ex:bball}. 
The method for the lower boundary (\abs{down}) does not establish safety by its recursive call, as the level may rise above the upper boundary.
Similarly, if a method is added to Ex.~\ref{ex:ctank}, it cannot use the information from \abs{ctrl}.
To amend this, we use that in controlled classes the controllers can always be scheduled if their condition holds. 
Their conditions, thus, can be used in a more precise computation of the post-region.

Intuitively, we can use the same structure as in Lem.~\ref{lem:hcp:3}, %the following theorem extends Thm.~\ref{lem:uniform} 
to make use of the fact that controllers are always running: their guards form the very same post-region as the guards of the procedures of concurrent programs.
%This is done solely by extending $\psi(\mathsf{CM})$ and $\psi(\mathsf{CM},\xabs{g})$ to take the regions defined by the controllers into account.
\begin{theorem}[Structurally Controlled Regions]\label{lem:control}
%Let all variables range as in Thm.~\ref{lem:uniform}.
%Let $\mathsf{Ctrl}_{\xabs{C}}$ be the set of controllers in \abs{C}.
%
%Thm.~\ref{lem:uniform} holds if $\psi(\mathsf{CM})$ is replaced by the following definition:
%\[
%\psi(\mathsf{CM}) = \mathsf{inv}_{\classname} \wedge \left[\xabs{t} \xabs{:=} 0;\mathsf{ode},\xabs{t}' = 1\& \left(\bigwedge_{\methodname \in \mathsf{Ctrl}_{\xabs{C}} \cup\mathsf{CM}} \widetilde{\neg}\mathit{ttrig}_{\methodname}\right)\right] \mathsf{inv}_\classname
%\]
%And analogously for $\psi(\mathsf{CM},\xabs{g})$.
The post-region generator $\Psi^\mathsf{ctrl}$, which maps every method \abs{C.m}, and every identifier $i$ of an \abs{await} with \abs{C.m},
to the following formula,
\[
\bigwedge_{\methodname \in \mathsf{Ctrl}_{\xabs{C}}} \widetilde{\neg}\mathit{ttrig}_{\methodname}
\]
is sound.
\end{theorem}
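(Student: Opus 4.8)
The plan is to establish soundness in the sense of Definition~\ref{def:sound:region}, i.e.\ that for every class \xabs{C}, every method or \abs{await}-identifier $x$ in \xabs{C}, every program, and every suspension-subtrace $\trace$ of any object $o$ of class \xabs{C} for $x$, every state of $\trace$ models $\bigwedge_{\methodname \in \mathsf{Ctrl}_{\xabs{C}}} \widetilde{\neg}\mathit{ttrig}_{\methodname}$. The first move is to reduce to a single controller: the empty conjunction is $\mathsf{true}$, which is sound by Theorem~\ref{lem:trivial}, and $\Psi^\mathsf{ctrl}$ is the point-wise conjunction, over the finite set $\mathsf{Ctrl}_{\xabs{C}}$, of the generators $\Psi_\methodname$ that map every method and \abs{await}-identifier to $\widetilde{\neg}\mathit{ttrig}_{\methodname}$; so by Proposition~\ref{lem:compose} and induction it suffices to show that each $\Psi_\methodname$ is sound. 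Fix such a controller $\methodname \in \mathsf{Ctrl}_{\xabs{C}}$, an object $o$ of \xabs{C} in a run, and a suspension-subtrace $\trace$ of $\trace_o$, which by definition runs from the descheduling of some process (at time $t_0$) up to and including the scheduling of the next \emph{non-trivial} process (at time $t_{\mathrm{end}}$, possibly $\infty$), and along which time genuinely advances ($t_0 \neq t_{\mathrm{end}}$).

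The first key step is a structural invariant on the run: from the point where the constructor of $o$ has executed \abs{this!m()} (which it does, since by clause~(4) of the controller definition $\methodname$ is called from the initial block) onwards, at every configuration there is exactly one process of $\methodname$ on $o$, and it is either active or suspended in the queue with remaining statement $\xabs{await g;}\,\statement$ where \abs{g} is $\methodname$'s leading guard. This follows by induction along the reduction sequence using clauses~(2) and~(3): when such a suspended process is activated by rule~\rulename{3}, the absence of further \abs{await}, \abs{get}, \abs{duration} statements (clause~(2)) forces it to run to completion in a single burst of discrete steps at a fixed time, during which it performs the recursive self-call (clause~(3)), which via rules~\rulename{7}/\rulename{8} produces a fresh suspended $\methodname$-process before the old one terminates via rule~\rulename{5}; rule~\rulename{i} has priority over rule~\rulename{ii}, so the spawned message is turned into a process before any time can pass. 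Since a controller is non-trivial, along $\trace$ (where no non-trivial process is active) this $\methodname$-process is never the active one, hence is continuously suspended with its leading guard pending.

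The main step is then the urgency argument. Consider any time advance (rule~\rulename{ii}) at a time $t \in [t_0, t_{\mathrm{end}})$ by $\delta = \mathit{mte}(\mathit{cn})$. By the structural invariant the queue of $o$ contains the suspended $\methodname$-process, whose contribution to $\mathit{mte}$ is precisely the least time until \abs{g} becomes true along the current dynamics of $o$; hence $\delta$ is at most that time, so \abs{g} is still false throughout the open interval $(t, t+\delta)$, and also at $t+\delta$ unless $t+\delta$ is exactly the event boundary of \abs{g}, in which case rule~\rulename{3} becomes applicable at $t+\delta$, rule~\rulename{ii} can fire no more, and $\trace$ ends there. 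Moreover \abs{g} is false at $t_0$ itself, for otherwise rule~\rulename{3} would already be applicable at $t_0$ and no time could elapse, contradicting $t_0 \neq t_{\mathrm{end}}$. Stitching the rule-\rulename{ii} intervals together, \abs{g} is false at every state of $\trace$ strictly before $t_{\mathrm{end}}$; since $\mathit{ttrig}_\methodname$ is (the translation of) $\methodname$'s guard expression and consists only of weak inequalities, $\neg\mathit{ttrig}_\methodname$ implies $\widetilde{\neg}\mathit{ttrig}_\methodname$, so all these states model $\widetilde{\neg}\mathit{ttrig}_\methodname$. At $t_{\mathrm{end}}$, either \abs{g} is still false (same conclusion) or $\trace(t_{\mathrm{end}})$ is the first point of the trajectory inside the region $\mathit{ttrig}_\methodname$; by the event-boundary property recalled in Section~\ref{sec:logic} it then lies in $\mathit{ttrig}_\methodname \wedge \widetilde{\neg}\mathit{ttrig}_\methodname$, so it models $\widetilde{\neg}\mathit{ttrig}_\methodname$ as well. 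This proves soundness of $\Psi_\methodname$, and the finite conjunction via Proposition~\ref{lem:compose} finishes the proof. The constructor case $x = \xabs{C.init}$ is identical, noting that the initial block has already issued \abs{this!m()} when it terminates.

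The step I expect to be the main obstacle is the bookkeeping in the urgency argument in two respects. First, the intermittently scheduled \emph{trivial} processes inside $[t_0,t_{\mathrm{end}})$ must be shown not to change the object state and not to disturb the structural invariant, so that $\mathit{mte}$ is computed against a single continuous trajectory along which the bound propagates. Second, for a controller with a \abs{duration} leading guard, $\mathit{ttrig}_\methodname$ is $\xabs{t} \geq \mathsf{trans}(\xabs{e})$ rather than a state predicate, and one has to argue that the clock \abs{t} attached to the suspension-subtrace stays $\leq \mathsf{trans}(\xabs{e})$ because the pending $\methodname$-process was created no later than $t_0$, so that $\widetilde{\neg}\mathit{ttrig}_\methodname = (\xabs{t} \leq \mathsf{trans}(\xabs{e}))$ holds throughout; under the standing assumption that all methods are suspension-leading with a \abs{diff} guard this case does not even arise, but handling it in general requires this extra observation.
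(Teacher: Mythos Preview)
Your proposal is correct and follows essentially the same route as the paper's proof: both rest on the observation that each controller has a suspended process in the queue whenever time advances, so the urgent semantics (via $\mathit{mte}$) prevent the trajectory from leaving the region $\widetilde{\neg}\mathit{ttrig}_\methodname$ before the suspension-subtrace ends. The paper argues this directly for the full conjunction by a short contradiction, whereas you first reduce to a single controller via Proposition~\ref{lem:compose} and then spell out the structural invariant and the boundary case with weak negation more carefully; this is an organizational refinement rather than a different argument, and your extra bookkeeping (message-to-process conversion before rule~\rulename{ii}, treatment of the endpoint $t_{\mathrm{end}}$) fills in details the paper leaves implicit.
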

%\begin{proof}
%Again, mostly analogous to Thm.~\ref{lem:uniform}. It suffices to see that it follows directly from the definition of controllers that they are always potentially schedulable.
%\end{proof}
While this increases precision, robustness suffers: removal of a controller requires reproving all methods in the class.
This is to be expected, as a controller influences every post-region.
\begin{lemma}\label{lem:controlmod}
Let $\xabs{C}$ be a safe class with structurally controlled regions. 
Let $\xabs{C}^-$ be \xabs{C} with some method $\methodname^-$ removed and $\xabs{C}^+$ be \xabs{C} with some method added. 
\begin{enumerate}
\item
If $\methodname^-$ is a controller, then all methods of $\xabs{C}^-$ have to be reproven.
\item 
If $\methodname^-$ is not a controller, then the proof obligations of methods calling $\methodname^-$ must be shown.
\item
To verify $\xabs{C}^+$, only the proof obligation of the new method has to be shown.
\end{enumerate}
\end{lemma}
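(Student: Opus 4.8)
The plan is to mirror the arguments used for Lem.~\ref{lem:modulartrivial} and Lem.~\ref{lem:modulareuniform}: track exactly which proof obligations of $\iota^{\Psi^\mathsf{ctrl}}$ are affected by the modification, argue that re-discharging these is necessary (because the corresponding formulas change), and argue that it is sufficient by appealing to Thm.~\ref{lem:control} — which gives soundness of $\Psi^\mathsf{ctrl}$ for \emph{any} class — together with Thm.~\ref{thm:main}. The starting observation is a separation of dependencies: $\Psi^\mathsf{ctrl}(x) = \bigwedge_{\methodname \in \mathsf{Ctrl}_\xabs{C}} \widetilde{\neg}\mathit{ttrig}_\methodname$ depends on the class only through the set $\mathsf{Ctrl}_\xabs{C}$ of controllers and the leading guards of those controllers, whereas the body translation $\mathsf{trans}(\statement_\xabs{C.m})$ depends only on the text of \abs{C.m} and the contracts of the methods it calls.

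For part~1, if $\methodname^-$ is a controller then $\mathsf{Ctrl}_{\xabs{C}^-} = \mathsf{Ctrl}_\xabs{C} \setminus \{\methodname^-\}$, so for every method and every \abs{await}-identifier $x$ the post-region $\Psi^\mathsf{ctrl}(x)$ loses the conjunct $\widetilde{\neg}\mathit{ttrig}_{\methodname^-}$. Hence \emph{every} proof obligation $\iota^{\Psi^\mathsf{ctrl}}_{\xabs{C}^-.x}$ differs syntactically from the one for $\xabs{C}$ and must be re-established; conversely this suffices, since $\Psi^\mathsf{ctrl}$ is sound for $\xabs{C}^-$ by Thm.~\ref{lem:control}, so by Thm.~\ref{thm:main} validity of all re-proven obligations of $\xabs{C}^-$ yields safety. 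For part~2, since $\methodname^-$ is not a controller, $\mathsf{Ctrl}_{\xabs{C}^-} = \mathsf{Ctrl}_\xabs{C}$, so all post-regions are unchanged; the only effect of the removal is that occurrences of the call to $\methodname^-$ inside method bodies are replaced by \abs{skip}, and since $\mathsf{trans}$ of an asynchronous call (a precondition check followed by \xabs{v := *}) is not $\mathsf{trans}(\xabs{skip})$, exactly the obligations of the methods (and possibly the constructor) that call $\methodname^-$ change, while all others retain their already-valid form. Re-proving the callers, again combined with Thm.~\ref{lem:control} and Thm.~\ref{thm:main}, gives safety of $\xabs{C}^-$. Part~3 is identical to the corresponding point of Lem.~\ref{lem:modulartrivial}: adding $\methodname^+$ leaves $\mathsf{Ctrl}$, hence all existing post-regions and obligations, untouched and still valid, so only the new obligation $\iota^{\Psi^\mathsf{ctrl}}_{\xabs{C}^+.\methodname^+}$ has to be shown.

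The main obstacle is a definitional corner case lurking in parts~2 and~3: whether $\mathsf{Ctrl}_\xabs{C}$ is genuinely invariant under the modification. Because being a controller requires being called only from the constructor, deleting a non-controller $\methodname^-$ that happens to call some structurally controller-shaped method $m'$ can promote $m'$ to a controller, and dually a newly added method may (vacuously) be a controller, or require a simultaneous constructor change in order to become one; in either case $\mathsf{Ctrl}$ changes and the situation degrades to that of part~1 (all methods reproven). I would therefore state parts~2 and~3 under the syntactically checkable side condition $\mathsf{Ctrl}_{\xabs{C}^{\pm}} = \mathsf{Ctrl}_\xabs{C}$, note that this is the case in the intended use, and otherwise reduce to part~1. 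A secondary, purely conventional point — already implicit in Lem.~\ref{lem:modulartrivial} and Lem.~\ref{lem:modulareuniform} — is that ``must be reproven'' refers to which obligations the scheme forces one to regenerate and re-discharge, not to an impossibility of transferring the old proof; I would make this convention explicit once and otherwise treat a changed formula as one that must be re-shown.
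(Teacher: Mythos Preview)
Your proposal is correct and follows exactly the approach the paper intends: the paper does not give an explicit proof of this lemma, treating it as the evident analogue of Lem.~\ref{lem:modulartrivial} and Lem.~\ref{lem:modulareuniform} once one observes that $\Psi^\mathsf{ctrl}$ depends on the whole set $\mathsf{Ctrl}_\xabs{C}$. Your dependency-tracking argument (post-regions change iff $\mathsf{Ctrl}_\xabs{C}$ changes; body translations change iff a call site is touched) is precisely the content that the paper leaves implicit.

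Your discussion of the corner case is more careful than the paper's statement. Two remarks. First, for part~3 you are slightly too conservative when you say an added controller ``degrades to part~1'': if $\mathsf{Ctrl}_{\xabs{C}^+} \supsetneq \mathsf{Ctrl}_\xabs{C}$, every post-region acquires an extra conjunct, hence every evolution domain shrinks, and $[\mathsf{ode}\,\&\,\psi\wedge\psi']I$ is implied by $[\mathsf{ode}\,\&\,\psi]I$; the same monotonicity applies to the $\mathsf{pr}(\Psi(\mathtt{p})\wedge\cdots)$ tests inside $\mathsf{trans}(\xabs{await})$, so the $\mathsf{fail}$ branch is taken in strictly fewer runs. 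Thus the old obligations remain valid without reproof even in that case, and your side condition $\mathsf{Ctrl}_{\xabs{C}^+}=\mathsf{Ctrl}_\xabs{C}$ for part~3 can be dropped. Second, your caution about part~2 (a removed non-controller might promote some $m'$ to controller status) is well taken and is a gap in the paper's informal statement; your proposed side condition is the right fix.
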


To demonstrate the use of multiple controllers, we present a model of a ball on a billard table with additional operations and verify that the ball indeed stays within the boundary. As we focus on modularity, we keep the dynamics simple.

\begin{example}
The core model is given in the upper part of Fig.~\ref{fig:billard}.
It has one (non-physical) field (\abs{top}, \abs{bottom}, \abs{left}, \abs{right}) that models the x or y-coordinate of the respective boundary
and one controller (\abs{ctrlTop}, \abs{ctrlBottom}, \abs{ctrlLeft}, \abs{ctrlRight})for each boundary of the table. 
The creation condition specifies that the boundaries have some space in-between.
The position of the ball starts at $(0,0)$ within the boundaries. Afterwards the ball travels with some two dimensional velocity and bounces off the boundaries.
The controller realizes this bounce by inverting the respective component of the velocity.
We can verify the condition that in this model, the ball is indeed always on the table.

Now, we add further operations, as stated in the lower part of Fig.~\ref{fig:billard}. 
One method to accelerate the ball (\abs{acc}), one method to push it in some direction (\abs{push}), one method to push the top boundary (\abs{incSize})
and one method to make the ball leap forward by 1 length unit along the x-axis if there is enough space (\abs{leap}).

Each modification requires us \emph{only} to verify the new method. 
The controllers of the core code are abstracted into the post-region of each added method and must not be reverified as no further controller is added.
The other methods do not call each other and are, thus, safe independent of the presence of other (non-controlling) operations.
Note that we do not specify how often and when each method is called: the system is safe for any number of method calls at arbitrary points in time.
This also entails that the table is arbitrary big, as the top boundary can be pushed back by an arbitrary distance.

\begin{figure}
\begin{abscode}
interface IBillard { }
[HybridSpec: Requires(top > 0 && bottom < 0 && right > 0 && left < 0)]
class Billard(Real top, Real bottom, 
             Real left, Real right,
             Real vx,   Real vy) implements IBillard{

[HybridSpec: ObjInv(   top >= y && bottom <= y 
                    && right >= x && left <= left)]
physical{
  Real x = 0: x' = vx;
  Real y = 0: y' = vy;
}
{ 
  this!ctrlTop(); this!ctrlRight(); 
  this!ctrlBottom(); this!ctrlLeft();
}
Unit ctrlTop(){ 
    await diff y >= top && vy >= 0; vy = -vy; this!ctrlTop(); 
}
Unit ctrlBottom(){ 
    await diff y >= bottom && vy <= 0; vy = -vy; this!ctrlBottom(); 
}
Unit ctrlRight(){ 
    await diff x >= right && vx >= 0; vx = -vx; this!ctrlRight(); 
}
Unit ctrlLeft(){ 
    await diff x >= left && vx <= 0; vx = -vx; this!ctrlLeft(); 
}
}
\end{abscode}
\begin{abscode}
//in the interface
Unit accelerate(Real r);
Unit push(Real a, Real b);
Unit leap();
[HybridSpec: Requires(extra >= 0)]
Unit incSize(Real extra);
//in the class
Unit accelerate(Real r){ vx = vx*r; vy = vy*r; }
Unit push(Real a, Real b){ vx = vx + a; vy = vy + b; }
Unit leap(){ 
    if( vx > 0 && right - x > 1) x = x + 1; 
    if( vx < 0 && x -left > 1) x =x - 1; 
}
Unit incSize(Real extra){ top  = top + extra; }
\end{abscode}
\caption{A billard model.}
\label{fig:billard}
\end{figure}
\end{example}

\subsection{Discussion}\label{ssec:discussion}
There is a different trade-off between precision and robustness for each notion of post-region, where under \emph{precision} we understand the size of the post-region.
The smaller a post-region it, the more precise it is. We remind that under robustness, we understand the reuse of proof obligations if a part of the program is changed.
The more proof obligations can be reused if a method is changed, the more rebust we consider the notion of post-region.
We ignore method pre- and post-conditions, which are exhaustively discussed in literature~\citep{key}, and focus on object invariants.

Changes in the dynamics in the physical block and after changes to the object invariant it is obviously required to reprove anything.
\begin{itemize}
\item
Basic regions are the most imprecise, as their post-region generator always uses the maximal post-region. This notion is, however, the most robust:
any change to a method, requires only to reprove the changed method: In terms of Lem.~\ref{lem:modulartrivial}), a change is a removal followed by an addition of the new variant of the method.
\item 
Locally controlled regions are a middle-ground: they are more precise than post-regions, but less robust.
With this notion, changes in the leading guard of a method require reproving all methods calling it, as it forms a part of their post-region (Lem.~\ref{lem:modulareuniform}).
\item 
Structurally controlled regions are the most precise, as they add additional conjuncts to the evolution domain constraint describing the locally controlled post-region.
They are, in turn, the least robust: a change in the leading guard of any controller requires reproving \emph{every} method in the class (Lem.~\ref{lem:controlmod}).
\end{itemize}
Note that all post-regions described here are robust with respect to changes in other classes.

\paragraph{Interactions.}
Our system ensures that objects of a verified class are safe in every context that adheres to its method contracts and creation conditions -- 
Active Objects enable us to decompose the system into strongly encapsulated parts with local invariants. 
As with other systems for Active Objects~\citep{bpl,DinBH15}, a global property must be broken down into local object invariants outside the verification logic itself either manually, such as in the history-based approaches~\citep{icfem,KamburjanH17}, or via an external specification tool~\citep{DBLP:journals/corr/abs-2208-04630}.
The event-based decomposition~\citep{DinO15} can be transfered directly to \texttt{HABS}, as the events are independent of time.
We next give an example with two objects that shows how objects and method contracts interact even without a decomposition.
\begin{example}
We consider the Lotka-Volterra equations for two patches and migration from \cite{jansen}, where the migration is discrete and conditional:
prey migrates from one patch to another only if it outnumbers the predators. 
The interface of a patch is given in Fig.~\ref{fig:lv:interface}.
It declares a method to connect to another patch (\abs{setOther}) and one for migrate of \abs{n} prey to it (\abs{to}). For simplicity, we allow
the migration of non-integer prey, which is also possible in the continuous migration model~\citep{jansen}.
\begin{figure}
\begin{abscode}
interface Patch{
  [HybridSpec: Requires(n >= 0)]
  Unit to(Real n);
  Unit setOther(Patch nOther)
}
\end{abscode}
\caption{Interface of a patch.}
\label{fig:lv:interface}
\end{figure}

The implementation of a patch is given in Fig.~\ref{fig:lv:class}.
It declares the initial prey (\abs{ix}) and predators (\abs{iy}), and the four parameters to control the rates of population change (\abs{alpha}, \abs{beta}, \abs{gamma}, \abs{delta}).
The \abs{physical} block declares the usual Lotka-Volterra equations and the safety invariant states that the size of the prey and predator populations remains strictly positive.
The \abs{migrate} method checks once per time unit if the prey population is at least ten times larger than the predator population and, in this case,
migrates 10\% to the other patch.
The other methods have their obvious implementation.
\begin{figure}
\begin{abscode}
[HybridSpec: Requires(ix >= 0 & iy >= 0 & alpha >= 0 & beta >= 0 
                      & gamma >= 0 & delta >= 0)]
[HybridSpec: ObjInv(x >= 0 & y >= 0 & alpha >= 0 & beta >= 0 
                    & gamma >= 0 & delta >= 0)]
class Patch(Real ix, Real iy, 
            Real alpha, Real beta, Real delta, Real gamma, 
            Patch other) 
  implements Patch {
  physical{
    Real x = ix : x' = alpha*x - beta*x*y;
    Real y = iy : y' = delta*y*x - gamma*y;
  }
  { this!migrate(); }

  Unit migrate(){
    await duration(1,1);
    if( x >= 10*y ){
      other!toMig(this.x/10);
      this.x = this.x*(9/10);
    }
    this!migrate();
  }

  Unit to(Real n){ this.x = this.x + n; }
  Unit setOther(Patch nOther){ this.other = nOther; }
}
\end{abscode}
\caption{Implementation of a patch.}
\label{fig:lv:class}
\end{figure}

The main block to set up the two patches is given in Fig.~\ref{fig:lv:main}.
\begin{figure}
\begin{abscode}
{
  Patch p1 = new Patch(100, 10, ..., null);
  Patch p2 = new Patch(100, 10, ..., null);
  p1!setOther(p2);
  p2!setOther(p2);
}
\end{abscode}
\caption{Setting up two patches.}
\label{fig:lv:main}
\end{figure}

The example can be verified using basic post-regions and illustrates how the interdependence of invariant and contract:
That there are no negative animals due to the discrete interactions (as stated by the invariant), 
depends on the discrete method contract of \abs{to} -- if the precondition is removed, verification of \abs{to} fails.
The that the pre-condition is actually upheld by the caller depends on the invariant -- if the object invariant is removed, verification of \abs{migrate} fails.
\end{example}

\section{Implementation and Integration}\label{sec:multi}
We implemented the procedures in Sec.~\ref{sec:region} in the \texttt{Chisel} tool\footnote{The source code is available under \texttt{https://github.com/Edkamb/chisel-tool/}. For the convenience of the reviewers, a VM is available under \texttt{https://ln.sync.com/dl/9716fcfb0/5aw2yw3k-6hiacyg9-ikz6wdzr-4dhqnytt}.}.
\texttt{Chisel} reuses the \texttt{HABS} parser.
The tool parses the input file and checks which class is controlled.
The user can select to verify all classes, only one class, only one method or only the main block. 
Additionally, the user may select which approach for region computation is to be used.
It then translates a class into a set of \texttt{KeYmaera X} proof obligations and automatically tries to verify them.
Failed proof attempts can be retrieved by the user and \texttt{Chisel} allows annotating proof scripts to manually apply a tactic instead of the automatic procedures:
Such tactics are output by \texttt{KeYmaera X} after a proof is closed.

\noindent
\begin{abscode}[numbers=none]
[HybridSpec: Tactic("expandAllDefs; master; DEs(1)")]
Unit method(){ ... }
\end{abscode}

All \HABS-examples in this work are uploaded with the source code of the tool and can be closed fully automatically. 
Reverification of the water tank from \cite{lites} shows that the new approach is more suited for automated proving:
The previous system, which we describe in more detail in Sec.~\ref{sec:related}, required interaction to close Ex.~\ref{ex:bball}~\citep[Fig.~2]{arxiv} but our new tool \texttt{Chisel} can verify it fully automatically due to more simple proof obligations.
The reason is that the concurrency model and scheduling is now implicit for the prover, while it was encoded directly in the $d\mathcal{L}$ formula before and required to retrieve two additional loop invariants.
While the previous system only handled controlled classes where \abs{await} is only allowed as the first statement of a method, \texttt{Chisel} covers the complete language (except additional data types and exceptions). 

Concerning the complexity of dynamics, we point out that our system for \texttt{HABS} depends completely on the $d\mathcal{L}$ solver to handle the dynamics -- 
in practice, \texttt{Chisel} handles the same complexity of dynamics as \texttt{KeYmaera X}, which it uses to discharge the proof obligations.

\paragraph{Data-Rich Modeling.}
\texttt{Chisel} can be used in combination with \texttt{Crowbar}~\citep{crowbar}, the verification tool for \texttt{ABS}. 
While \texttt{Chisel} verifies hybrid classes, \texttt{Crowbar} verifies non-hybrid classes. BPL and \texttt{Crowbar} support data-rich modeling: arbitrary data types and a functional language that operates on them, additionally to more expressive expressions.
Both \texttt{Chisel} and \texttt{Crowbar} tools support method contracts as \emph{cooperative contracts}~\citep{doa} and interactions are handled automatically by this mechanism.
\texttt{Crowbar} uses \abs{[Spec: ...]} annotations instead of \abs{[HybridSpec: ...]}, so a method serving as the interface must be annotated with both (using the same formula).
If data-rich modeling in \texttt{ABS} and hybrid modeling in \texttt{HABS} can be modularized into different classes that only communicate via methods calls, 
then we can verify hybrid systems with complex data types.
%As the interface is a standard pre-condition on a method call, it is also possible to integrate our approach with other data-rich formalisms, 
%e.g., the deductive verification tool KeY~\citep{key} that implements method calls for Java.

\section{Related Work}\label{sec:related}
For a detailed survey of modularity in verification we refer to \cite{GurovHK20}.
\paragraph{Hybrid Programming Languages.}
There are only few programming languages for hybrid systems with full-fledged formal semantics.
The best examined one is the algebraic language of \ddl introduced by \cite{DBLP:conf/lics/Platzer12b,DBLP:journals/jar/Platzer17,Platzer18}.
It is a low level language without the primitives for concurrency used in HAOs.
It has been recently extended to include CSP-style operators for concurrency by \cite{DBLP:journals/corr/abs-2303-17333}. In contrast to post-regions, which use the inbuilt encapsulation and concurrency of Active Objects, their approach focuses on decomposing low-level operators into standard \ddl through special rules.
%It is designed for verification and theoretical investigations and as such lacks structuring mechanisms needed for practical applications, and a simulator.

$\mathtt{While}^\mathtt{dt}$~\citep{hybridwhileold} is a hybrid while-language with a verification condition generator~\citep{DBLP:conf/cav/HasuoS12}, but neither simulation nor compositionality is examined. Semantically, it is based on non-standard analysis. The language is sequential.

\texttt{HybCore}~\citep{hybridwhile} is another hybrid while-language. It has a formal semantics and a simulator~\citep{ictac}, but verification is not considered. Its SOS semantics addresses the inadequacy of \ddl in case of non-terminating loops, but does not offer additional structuring elements. 
The language is sequential and less expressive than \ddl due to a lack of evolution constraints, which are needed for event-based modeling.

Hybrid Rebeca~\citep{hybreb} embeds hybrid automata directly into the actor language Rebeca. 
Discrete objects and hybrid automata can communicate via messages, but are distinct entities.
It has no semantics in first principles, but translates into hybrid automata. 
It is worth noting that despite a class structure and an actor-like concurrency model, its verification is non-modular and requires to monolithically verify a single automaton for the whole program.

\HABS has a predecessor system to verify invariants~\citep{arxiv,lites},
However, it uses one proof obligation \emph{per class} and has several flaws that inhibit one from applying it to more realistic systems: 
(1) after changes in one method the whole class must be re-proven,
(2) the size proof obligation is exponential in the number of methods, and
(3) two\footnote{One loop advances time, one loop allows multiple methods to run without advancing time.} more loop invariants must be inferred, which hampers automatization.

Process algebras are minimalistic programming languages that have spawned several formalisms for distributed hybrid systems. None of them has been considered for verification and we, thus, only refer to them for completeness's sake and refrain from a detailed discussion. The \texttt{CCPS} system~\citep{LanotteM17} is an extension of timed process algebra TPL~\citep{HennessyR95} and CCS~\citep{Milner80} and uses an inbuilt notion of sensor and actuators.
The $\varphi$-calculus~\citep{RoundsS03} is an extension of the $\pi$-calculus. It has no physical processes but considers them as a part of the environment. 
The work of \cite{Khadim} gives a detailed comparison on the process algebras \texttt{HyPA}~\citep{CuijpersR05}, \texttt{Hybrid $\chi$}~\citep{BosK03}, both extending ACP~\citep{BergstraK85}, the $\varphi$-calculus and another extension of ACP~\citep{BergstraM05}.
The \texttt{HYPE} calculus~\citep{GalpinBH13} is an approach that focuses on the composition of continuous behavior, less so on the interaction through discrete actions.

\paragraph{Modularity in Deductive Verification of Hybrid Systems}
The minimal structure of the language underlying \ddl requires that compositionality to be encoded in 
elaborate proof structures~\citep{DBLP:conf/acsd/LunelBT17,DBLP:conf/fm/LunelMBT19,DBLP:journals/sttt/MullerMRSP18}.
Modularity is, thus, mainly a decomposition of subproofs in the sequent calculus and follows the structure of the overall formula, not the structure of the program.

The \texttt{HHL} prover of \cite{hhl} is based on Hoare triples over hybrid CSP. 
Its implementation is based on an embedding of the triples into Isabelle/HOL, which is then used to prove validity.
As such it has also access to the proof structuring techniques of Isabelle, such as lemmata.

\cite{baar} give a system to modularize \ddl proofs by transforming hybrid programs into control-flow graphs and annotating additional contracts to the edges.
The procedure results in smaller proof obligations for each edge, compared to the overall proof goal.
The resulting structure is similar to a hybrid automaton, but the connection is not formally established and reachability approaches are not considered.

%\paragraph{Combining Deductive Verification and Reachability}
%Regarding the combination of deductive verification and reachability checkers for hybrid systems, Sogokon et al.~\cite{SogokonJJ17} use the proof structure of persistence properties to use reachability to prove that a safe state is reached in final time and deduction to prove that it stays there indefinitely. 
%While for a different verification property, this split is similar to our discussion on integrating reachability, e.g., Ex.~\ref{ex:reachex}.
%It corresponds to a system where the initial block does not establish the safety, but must establish it eventually.
%
%
%There are a number of approaches to combine deductive verification and model checking for discrete systems, for an overview we refer to Shankar~\cite{Shankar2018}.
%The closest approach to ours is the work of Beckert et al.~\cite{BeckertKKU20}, which uses the contract language originally developed for deductive verification of Java 
%to generate input for bounded model checking.

\paragraph{Analysis of Hybrid Programming Languages}
%Efficient index reduction and the analysis of hidden initial conditions is well-studied for DAEs in isolation, albeit mostly from a numerical point-of-view~\cite{IwataOT19,DBLP:books/daglib/0090460}.
%For hybrid systems, Benveniste et al.~\cite{BenvenisteCM20,BenvenisteCEGOP17} explore switched multi-mode DAEs, where a system switches between two DAEs according to some Boolean condition.
%Their work, as is most work on structural analysis of DAEs, is motivated by the compilation and simulation of Modelica (and similar physical modeling languages)
%and is not considering verification.

The Zelus language~\citep{HSCC2013} employs a static analysis to ensures certain safety properties for its simulation, for example to avoid that discontinuities occur during integration.
It implements parallelism through its synchronous concurrency model~\citep{DBLP:journals/pieee/BenvenisteB91}. The language has no verification system.

\paragraph{Other Related Approaches}
For an overview over contracts for cyber-physical systems from a system engineering perspective we refer to \cite{BenvenisteCNPRR18}.

Finally, we discuss some semantic compositions.
For hybrid automata, composition of \emph{semantics} has been investigated since the beginning by parallel composition~\citep{alur} or by marking some variable as input or output ports, e.g., in hybrid I/O automata~\citep{lynch}, which can be used to build a hierarchical model made of components~\citep{donze} for model checkers.
Composition has also been investigated for model checking hybrid automata by assume-guarantee reasoning to decompose systems~\citep{minea,frehse1}. 
Such approaches are similar to method contracts, but rely-guarantee reasoning for hybrid automata does not encapsulate of data and method calls, but abstracts of components executed in parallel. 

The above approaches share with ours that the continuous behavior is not composed. \cite{CuijpersR05} consider an algebraic approach to also synchronize on continuous behavior.
Other process algebraic approaches define further composition operators for models, such as Hybrid CPS~\citep{hcsp} or the $\varphi$-calculus, a hybrid $\pi$-calculus~\citep{RoundsS03}. 
None of these approaches investigate how this composition can be used for deductive verification of functional properties, a more detailed comparison between them is given in~\citep{BergstraM05}.

Region automata~\citep{AlurD94} are a namesake of our system that compute regions in the state space of timed automata to decide reachability.
Beyond the name, the systems share only the idea to split the state space into regions to simplify analysis.

%In this work, all post-regions are computed \emph{internally}, this means, only using the information of a single class. 
%Post-regions have been extended by Kamburjan and Lienhardt~\cite{nahs} into an orthogonal direction:
%Their system considers \emph{external} post-regions, which are computed using behavioral types over the complete program. 
%This approach is highly uncompositional, as changes in a method may require to reprove all methods \emph{in all classes}.
%However, it allows to cover patterns for delegated control, as common in, e.g., the Internet-of-Things.

\section{Conclusion}\label{sec:conclusion}
This is the first work to successfully generalize specification and verification principles for object-oriented languages to a hybrid setting.
Instead of verifying an object invariant by using it as a post-condition of a method, we instead use it as an invariant for the post-region: 
the states reachable from the post-state method when following the given dynamics.
Verification of hybrid systems is hard, but we show that by using an object-oriented programming languages as a host for hybrid behavior, 
it is possible to use the additional structure provided by the language to improve robustness of verification.

\paragraph{Future Work.} 
This work describes the verification of invariants local to single objects, which relies on the strong encapsulation of (Hybrid) Active Objects.
Decomposition of global invariants of the whole system to local ones remains an active research area even for non-times Active Objects,
and has so far been performed either manually~\citep{icfem,KamburjanH17,DBLP:conf/birthday/BoerG20} or only for a restricted setting~\citep{DBLP:journals/corr/abs-2208-04630}.
Trace logics and semantics that conceptually separate local from global behavior~\citep{DBLP:journals/corr/abs-2202-12195,DBLP:conf/tableaux/DinHJPT17,bpl} show promise as a research direction here. We plan to extend the recent trace semantics for Timed ABS~\citep{DBLP:conf/birthday/Tarifa20} to connect \ddl more closely with BPL and verify global invariants.

\subsubsection*{Acknowledgments}
This work was supported by the Norwegian Research Council via the \texttt{PeTWIN} project (Grant Nr.\ 294600) and the SIRIUS research center (Grant Nr.\ 237898).

\bibliography{ref}

\begin{thebibliography}{68}
\providecommand{\natexlab}[1]{#1}
\providecommand{\url}[1]{{#1}}
\providecommand{\urlprefix}{URL }
\providecommand{\doi}[1]{\url{https://doi.org/#1}}
\providecommand{\eprint}[2][]{\url{#2}}
 \bibcommenthead

\bibitem[{Ahrendt et~al(2016)Ahrendt, Beckert, Bubel, H{\"{a}}hnle, Schmitt,
  and Ulbrich}]{key}
Ahrendt W, Beckert B, Bubel R, et~al (eds)  (2016) Deductive Software
  Verification - The KeY Book - From Theory to Practice, {LNCS}, vol 10001.
  Springer, \doi{10.1007/978-3-319-49812-6}

\bibitem[{Albert et~al(2014)Albert, de~Boer, H{\"{a}}hnle, Johnsen, Schlatte,
  {Tapia Tarifa}, and Wong}]{AlbertBHJSTW14}
Albert E, de~Boer FS, H{\"{a}}hnle R, et~al (2014) Formal modeling and analysis
  of resource management for cloud architectures: an industrial case study
  using real-time {ABS}. Service Oriented Computing and Applications
  8(4):323--339

\bibitem[{Alur and Dill(1994)}]{AlurD94}
Alur R, Dill DL (1994) A theory of timed automata. Theor Comput Sci
  126(2):183--235. \doi{10.1016/0304-3975(94)90010-8},
  \urlprefix\url{https://doi.org/10.1016/0304-3975(94)90010-8}

\bibitem[{Alur et~al(1995)Alur, Courcoubetis, Halbwachs, Henzinger, Ho,
  Nicollin, Olivero, Sifakis, and Yovine}]{alur}
Alur R, Courcoubetis C, Halbwachs N, et~al (1995) The algorithmic analysis of
  hybrid systems. Theor Comput Sci 138(1):3--34.
  \doi{10.1016/0304-3975(94)00202-T}

\bibitem[{Baar and Staroletov(2019)}]{baar}
Baar T, Staroletov S (2019) A control flow graph based approach to make the
  verification of cyber-physical systems using {KeYmaera} easier. Modeling and
  Analysis of Information Systems 25(5):465--480.
  \doi{10.18255/1818-1015-465-480}

\bibitem[{Benveniste and Berry(1991)}]{DBLP:journals/pieee/BenvenisteB91}
Benveniste A, Berry G (1991) The synchronous approach to reactive and real-time
  systems. Proc {IEEE} 79(9):1270--1282. \doi{10.1109/5.97297},
  \urlprefix\url{https://doi.org/10.1109/5.97297}

\bibitem[{Benveniste et~al(2018)Benveniste, Caillaud, Nickovic, Passerone,
  Raclet, Reinkemeier, Sangiovanni{-}Vincentelli, Damm, Henzinger, and
  Larsen}]{BenvenisteCNPRR18}
Benveniste A, Caillaud B, Nickovic D, et~al (2018) Contracts for system design.
  Found Trends Electron Des Autom 12(2-3):124--400. \doi{10.1561/1000000053}

\bibitem[{Bergstra and Klop(1985)}]{BergstraK85}
Bergstra JA, Klop JW (1985) Algebra of communicating processes with
  abstraction. Theor Comput Sci 37:77--121

\bibitem[{Bergstra and Middelburg(2005)}]{BergstraM05}
Bergstra JA, Middelburg CA (2005) Process algebra for hybrid systems. Theor
  Comput Sci 335(2-3):215--280. \doi{10.1016/j.tcs.2004.04.019}

\bibitem[{Bezirgiannis et~al(2019)Bezirgiannis, de~Boer, Johnsen, Pun, and
  {Tapia Tarifa}}]{BezirgiannisBJP19}
Bezirgiannis N, de~Boer FS, Johnsen EB, et~al (2019) Implementing {SOS} with
  active objects: {A} case study of a multicore memory system. In: H{\"{a}}hnle
  R, van~der Aalst WMP (eds) {FASE}, {LNCS}, vol 11424. Springer, pp 332--350,
  \doi{10.1007/978-3-030-16722-6\_20}

\bibitem[{Bj{\o}rk et~al(2013)Bj{\o}rk, de~Boer, Johnsen, Schlatte, and {Tapia
  Tarifa}}]{BjorkBJST13}
Bj{\o}rk J, de~Boer FS, Johnsen EB, et~al (2013) User-defined schedulers for
  real-time concurrent objects. Innovations in Systems and Software Engineering
  9(1):29--43

\bibitem[{de~Boer and de~Gouw(2022)}]{DBLP:conf/birthday/BoerG20}
de~Boer FS, de~Gouw S (2022) Reasoning about active objects: {A} sound and
  complete assertional proof method. In: The Logic of Software. {A} Tasting
  Menu of Formal Methods, Lecture Notes in Computer Science, vol 13360.
  Springer, pp 173--192

\bibitem[{de~Boer et~al(2017)de~Boer, Serbanescu, H{\"{a}}hnle, Henrio, Rochas,
  Din, Johnsen, Sirjani, Khamespanah, Fernandez{-}Reyes, and
  Yang}]{BoerSHHRDJSKFY17}
de~Boer FS, Serbanescu V, H{\"{a}}hnle R, et~al (2017) A survey of active
  object languages. {ACM} Computing Surveys 50(5):1--39

\bibitem[{Bos and Kleijn(2003)}]{BosK03}
Bos V, Kleijn JJT (2003) Redesign of a systems engineering language:
  Formalisation of {X}. Formal Aspects Comput 15(4):370--389

\bibitem[{Bourke and Pouzet(2013)}]{HSCC2013}
Bourke T, Pouzet M (2013) Z\`elus: A synchronous language with {ODEs}. In: 16th
  International Conference on Hybrid Systems: Computation and Control
  (HSCC'13), Philadelphia, USA, pp 113--118

\bibitem[{Brieger et~al(2023)Brieger, Mitsch, and
  Platzer}]{DBLP:journals/corr/abs-2303-17333}
Brieger M, Mitsch S, Platzer A (2023) Uniform substitution for dynamic logic
  with communicating hybrid programs. CoRR abs/2303.17333

\bibitem[{Cuijpers and Reniers(2005)}]{CuijpersR05}
Cuijpers PJL, Reniers MA (2005) Hybrid process algebra. J Log Algebraic Methods
  Program 62(2):191--245

\bibitem[{Din and Owe(2015)}]{DinO15}
Din CC, Owe O (2015) Compositional reasoning about active objects with shared
  futures. Formal Asp Comput 27(3):551--572. \doi{10.1007/s00165-014-0322-y}

\bibitem[{Din et~al(2015)Din, Bubel, and H{\"a}hnle}]{DinBH15}
Din CC, Bubel R, H{\"a}hnle R (2015) {KeY-ABS}: A deductive verification tool
  for the concurrent modelling language {ABS}. In: Felty A, Middeldorp A (eds)
  {CADE}, LNCS, vol 9195. Springer, pp 517--526,
  \doi{10.1007/978-3-319-21401-6\_35}

\bibitem[{Din et~al(2017)Din, H{\"{a}}hnle, Johnsen, Pun, and
  Tarifa}]{DBLP:conf/tableaux/DinHJPT17}
Din CC, H{\"{a}}hnle R, Johnsen EB, et~al (2017) Locally abstract, globally
  concrete semantics of concurrent programming languages. In: {TABLEAUX},
  Lecture Notes in Computer Science, vol 10501. Springer, pp 22--43

\bibitem[{Din et~al(2022)Din, H{\"{a}}hnle, Henrio, Johnsen, Pun, and
  Tarifa}]{DBLP:journals/corr/abs-2202-12195}
Din CC, H{\"{a}}hnle R, Henrio L, et~al (2022) {LAGC} semantics of concurrent
  programming languages. CoRR abs/2202.12195

\bibitem[{Donz{\'{e}} and Frehse(2013)}]{donze}
Donz{\'{e}} A, Frehse G (2013) Modular, hierarchical models of control systems
  in {SpaceEx}. In: {ECC} 2013. {IEEE}, pp 4244--4251

\bibitem[{Flanagan and Felleisen(1999)}]{DBLP:journals/jfp/FlanaganF99}
Flanagan C, Felleisen M (1999) The semantics of future and an application. J
  Funct Program 9(1):1--31. \doi{10.1017/s0956796899003329},
  \urlprefix\url{https://doi.org/10.1017/s0956796899003329}

\bibitem[{{Frehse} et~al(2004){Frehse}, {Zhi Han}, and {Krogh}}]{frehse1}
{Frehse} G, {Zhi Han}, {Krogh} B (2004) Assume-guarantee reasoning for hybrid
  i/o-automata by over-approximation of continuous interaction. In: {CDC} 2004,
  pp 479--484 Vol.1, \doi{10.1109/CDC.2004.1428676}

\bibitem[{Fulton et~al(2015)Fulton, Mitsch, Quesel, V{\"{o}}lp, and
  Platzer}]{FultonMQVP15}
Fulton N, Mitsch S, Quesel J, et~al (2015) {KeYmaera} {X:} an axiomatic
  tactical theorem prover for hybrid systems. In: {CADE}, {LNCS}, vol 9195.
  Springer, pp 527--538, \doi{10.1007/978-3-319-21401-6\_36}

\bibitem[{Galpin et~al(2013)Galpin, Bortolussi, and Hillston}]{GalpinBH13}
Galpin V, Bortolussi L, Hillston J (2013) {HYPE:} hybrid modelling by
  composition of flows. Formal Aspects Comput 25(4):503--541

\bibitem[{Goncharov and Neves(2019)}]{hybridwhile}
Goncharov S, Neves R (2019) An adequate while-language for hybrid computation.
  In: Komendantskaya E (ed) {PPDP}. {ACM}, pp 11:1--11:15,
  \doi{10.1145/3354166.3354176}

\bibitem[{Goncharov et~al(2020)Goncharov, Neves, and Proen{\c{c}}a}]{ictac}
Goncharov S, Neves R, Proen{\c{c}}a J (2020) Implementing hybrid semantics:
  From functional to imperative. In: Pun VKI, Stolz V, Sim{\~{a}}o A (eds)
  {ICTAC}, LNCS, vol 12545. Springer, pp 262--282,
  \doi{10.1007/978-3-030-64276-1\_14}

\bibitem[{Grahl et~al(2016)Grahl, Bubel, Mostowski, Schmitt, Ulbrich, and
  Wei{\ss}}]{bubel}
Grahl D, Bubel R, Mostowski W, et~al (2016) Modular specification and
  verification. In: The KeY Book, {LNCS}, vol 10001. Springer,
  \doi{10.1007/978-3-319-49812-6\_9}

\bibitem[{Gurov et~al(2020)Gurov, H{\"{a}}hnle, and Kamburjan}]{GurovHK20}
Gurov D, H{\"{a}}hnle R, Kamburjan E (2020) Who carries the burden of
  modularity? - {Introduction} to {ISoLA} 2020 track on modularity and
  (de-)composition in verification. In: ISoLA {(1)}, LNCS, vol 12476. Springer,
  pp 3--21, \doi{10.1007/978-3-030-61362-4\_1}

\bibitem[{{Halstead Jr.}(1985)}]{DBLP:journals/toplas/Halstead85}
{Halstead Jr.} RH (1985) Multilisp: {A} language for concurrent symbolic
  computation. {ACM} Trans Program Lang Syst 7(4):501--538.
  \doi{10.1145/4472.4478}, \urlprefix\url{https://doi.org/10.1145/4472.4478}

\bibitem[{Harel et~al(2000)Harel, Tiuryn, and Kozen}]{dynamic}
Harel D, Tiuryn J, Kozen D (2000) Dynamic Logic. MIT Press

\bibitem[{Hasuo and Suenaga(2012)}]{DBLP:conf/cav/HasuoS12}
Hasuo I, Suenaga K (2012) Exercises in nonstandard static analysis of hybrid
  systems. In: Madhusudan P, Seshia SA (eds) {CAV}, {LNCS}, vol 7358. Springer,
  pp 462--478, \doi{10.1007/978-3-642-31424-7\_34}

\bibitem[{Hennessy and Regan(1995)}]{HennessyR95}
Hennessy M, Regan T (1995) A process algebra for timed systems. Inf Comput
  117(2):221--239

\bibitem[{Henzinger et~al(2001)Henzinger, Minea, and Prabhu}]{minea}
Henzinger TA, Minea M, Prabhu VS (2001) Assume-guarantee reasoning for
  hierarchical hybrid systems. In: Benedetto MDD, Sangiovanni{-}Vincentelli AL
  (eds) {HSCC}, {LNCS}, vol 2034. Springer, pp 275--290,
  \doi{10.1007/3-540-45351-2\_24}

\bibitem[{Hoare(1969)}]{DBLP:journals/cacm/Hoare69}
Hoare CAR (1969) An axiomatic basis for computer programming. Commun {ACM}
  12(10):576--580. \doi{10.1145/363235.363259}

\bibitem[{Jahandideh et~al(2018)Jahandideh, Ghassemi, and Sirjani}]{hybreb}
Jahandideh I, Ghassemi F, Sirjani M (2018) Hybrid rebeca: Modeling and
  analyzing of cyber-physical systems. In: Chamberlain RD, Taha W,
  T{\"{o}}rngren M (eds) {CyPhy}, {LNCS}, vol 11615. Springer, pp 3--27,
  \doi{10.1007/978-3-030-23703-5\_1}

\bibitem[{Jansen(1995)}]{jansen}
Jansen V (1995) Regulation of predator-prey systems through spatial
  interactions: A possible solution to the paradox of enrichment. Oikos
  74:384--390. \doi{10.2307/3545983}

\bibitem[{Jifeng(1994)}]{hcsp}
Jifeng H (1994) From CSP to Hybrid Systems, Prentice Hall International (UK)
  Ltd., p 171–189

\bibitem[{Johnsen et~al(2010)Johnsen, H{\"{a}}hnle, Sch{\"{a}}fer, Schlatte,
  and Steffen}]{JohnsenHSSS10}
Johnsen EB, H{\"{a}}hnle R, Sch{\"{a}}fer J, et~al (2010) {ABS:} {A} core
  language for abstract behavioral specification. In: Aichernig BK, de~Boer FS,
  Bonsangue MM (eds) {FMCO}'10, LNCS, vol 6957. Springer, pp 142--164,
  \doi{10.1007/978-3-642-25271-6\_8}

\bibitem[{Kamburjan(2019)}]{bpl}
Kamburjan E (2019) Behavioral program logic. In: Cerrito S, Popescu A (eds)
  {TABLEAUX}, {LNCS}, vol 11714. Springer, pp 391--408,
  \doi{10.1007/978-3-030-29026-9\_22}

\bibitem[{Kamburjan(2021)}]{HSCC}
Kamburjan E (2021) From post-conditions to post-region invariants: Deductive
  verification of hybrid objects. In: {HSCC}, in print, preprint is part of the
  uploaded VM. ACM

\bibitem[{Kamburjan and H{\"{a}}hnle(2017)}]{KamburjanH17}
Kamburjan E, H{\"{a}}hnle R (2017) Deductive verification of railway
  operations. In: Fantechi A, Lecomte T, Romanovsky AB (eds) {RSSRail}, {LNCS},
  vol 10598. Springer, pp 131--147, \doi{10.1007/978-3-319-68499-4\_9}

\bibitem[{Kamburjan and Wasser(2022)}]{DBLP:journals/corr/abs-2208-04630}
Kamburjan E, Wasser N (2022) The right kind of non-determinism: Using
  concurrency to verify {C} programs with underspecified semantics. In: {ICE},
  pp 1--16

\bibitem[{Kamburjan et~al(2016)Kamburjan, Din, and Chen}]{icfem}
Kamburjan E, Din CC, Chen T (2016) Session-based compositional analysis for
  actor-based languages using futures. In: Ogata K, Lawford M, Liu S (eds)
  {ICFEM}, pp 296--312, \doi{10.1007/978-3-319-47846-3\_19}

\bibitem[{Kamburjan et~al(2018)Kamburjan, H{\"{a}}hnle, and Sch{\"{o}}n}]{scp}
Kamburjan E, H{\"{a}}hnle R, Sch{\"{o}}n S (2018) Formal modeling and analysis
  of railway operations with active objects. Sci Comput Program 166:167--193.
  \doi{10.1016/j.scico.2018.07.001}

\bibitem[{Kamburjan et~al(2019)Kamburjan, Mitsch, Kettenbach, and
  H{\"{a}}hnle}]{arxiv}
Kamburjan E, Mitsch S, Kettenbach M, et~al (2019) Modeling and verifying
  cyber-physical systems with hybrid active objects. CoRR abs/1906.05704

\bibitem[{Kamburjan et~al(2020)Kamburjan, Din, H{\"{a}}hnle, and Johnsen}]{doa}
Kamburjan E, Din CC, H{\"{a}}hnle R, et~al (2020) Behavioral contracts for
  cooperative scheduling. In: Ahrendt W, Beckert B, Bubel R, et~al (eds)
  Deductive Verification: The State of the Future, LNCS, vol 12345. Springer,
  \doi{10.1007/978-3-030-64354-6\_4}

\bibitem[{Kamburjan et~al(2022)Kamburjan, Mitsch, and H{\"{a}}hnle}]{lites}
Kamburjan E, Mitsch S, H{\"{a}}hnle R (2022) A hybrid programming language for
  formal modeling and verification of hybrid systems. Leibniz Trans Embed Syst
  In print

\bibitem[{Kamburjan et~al(2023)Kamburjan, Scaletta, and Rollshausen}]{crowbar}
Kamburjan E, Scaletta M, Rollshausen N (2023) Deductive verification of active
  objects with crowbar. Sci Comput Program 226:102928

\bibitem[{Khadim(2006)}]{Khadim}
Khadim U (2006) A comparative study of process algebras for hybrid systems.
  Computer science reports, Technische Universiteit Eindhoven

\bibitem[{Lanotte and Merro(2017)}]{LanotteM17}
Lanotte R, Merro M (2017) A calculus of cyber-physical systems. In: {LATA}, pp
  115--127

\bibitem[{Lin et~al(2020)Lin, Lee, Yu, and Johnsen}]{LinLYJ20}
Lin J, Lee M, Yu IC, et~al (2020) A configurable and executable model of spark
  streaming on apache {YARN}. Int J Grid Util Comput 11(2):185--195.
  \doi{10.1504/IJGUC.2020.105531}

\bibitem[{Lunel et~al(2017)Lunel, Boyer, and Talpin}]{DBLP:conf/acsd/LunelBT17}
Lunel S, Boyer B, Talpin J (2017) Compositional proofs in differential dynamic
  logic {dL}. In: {ACSD}. {IEEE} Computer Society, pp 19--28

\bibitem[{Lunel et~al(2019)Lunel, Mitsch, Boyer, and
  Talpin}]{DBLP:conf/fm/LunelMBT19}
Lunel S, Mitsch S, Boyer B, et~al (2019) Parallel composition and modular
  verification of computer controlled systems in differential dynamic logic.
  In: ter Beek MH, McIver A, Oliveira JN (eds) {FM}, LNCS, vol 11800. Springer,
  pp 354--370, \doi{10.1007/978-3-030-30942-8\_22}

\bibitem[{Lynch et~al(2003)Lynch, Segala, and Vaandrager}]{lynch}
Lynch NA, Segala R, Vaandrager FW (2003) Hybrid {I/O} automata. Inf Comput
  185(1):105--157

\bibitem[{Meyer(1992)}]{meyer}
Meyer B (1992) Applying ``design by contract''. IEEE Computer 25(10):40--51

\bibitem[{Milner(1980)}]{Milner80}
Milner R (1980) A Calculus of Communicating Systems, LNCS, vol~92. Springer

\bibitem[{M{\"{u}}ller et~al(2018)M{\"{u}}ller, Mitsch, Retschitzegger,
  Schwinger, and Platzer}]{DBLP:journals/sttt/MullerMRSP18}
M{\"{u}}ller A, Mitsch S, Retschitzegger W, et~al (2018) Tactical contract
  composition for hybrid system component verification. STTT 20(6):615--643

\bibitem[{Platzer(2010)}]{Platzer10}
Platzer A (2010) Differential-algebraic dynamic logic for
  differential-algebraic programs. J of Logic and Computation 20(1):309--352

\bibitem[{Platzer(2012{\natexlab{a}})}]{QDL}
Platzer A (2012{\natexlab{a}}) A complete axiomatization of quantified
  differential dynamic logic for distributed hybrid systems. {LMCS} 8(4).
  \doi{10.2168/LMCS-8(4:17)2012}

\bibitem[{Platzer(2012{\natexlab{b}})}]{DBLP:conf/lics/Platzer12b}
Platzer A (2012{\natexlab{b}}) The complete proof theory of hybrid systems. In:
  LICS. IEEE, pp 541--550

\bibitem[{Platzer(2017)}]{DBLP:journals/jar/Platzer17}
Platzer A (2017) A complete uniform substitution calculus for differential
  dynamic logic. J Automated Reasoning 59(2):219--265

\bibitem[{Platzer(2018)}]{Platzer18}
Platzer A (2018) Logical Foundations of Cyber-Physical Systems. Springer

\bibitem[{Rounds and Song(2003)}]{RoundsS03}
Rounds WC, Song H (2003) The phi-calculus: {A} language for distributed control
  of reconfigurable embedded systems. In: {HSCC}, LNCS, vol 2623. Springer, pp
  435--449, \doi{10.1007/3-540-36580-X\_32}

\bibitem[{Suenaga and Hasuo(2011)}]{hybridwhileold}
Suenaga K, Hasuo I (2011) Programming with infinitesimals: {A} while-language
  for hybrid system modeling. In: {ICALP} {(2)}, {LNCS}, vol 6756. Springer, pp
  392--403, \doi{10.1007/978-3-642-22012-8\_31}

\bibitem[{Tarifa(2022)}]{DBLP:conf/birthday/Tarifa20}
Tarifa SLT (2022) Locally abstract globally concrete semantics of time and
  resource aware active objects. In: The Logic of Software. {A} Tasting Menu of
  Formal Methods, Lecture Notes in Computer Science, vol 13360. Springer, pp
  481--499

\bibitem[{Wang et~al(2015)Wang, Zhan, and Zou}]{hhl}
Wang S, Zhan N, Zou L (2015) An improved {HHL} prover: An interactive theorem
  prover for hybrid systems. In: {ICFEM}. Springer, pp 382--399,
  \doi{10.1007/978-3-319-25423-4\_25}

\end{thebibliography}
%\clearpage
%\section*{Declarations}
%\subsection*{Funding} 
%This research is supported by the Research Council of Norway via the \texttt{PeTWIN} project (Grant 294600) and the \texttt{SIRIUS} research center (Grant 237898).

%\subsection*{Conflicts of interest/Competing interests} 
%Recent Employment: University of Oslo.

%\subsection*{Availability of data and material}
%Examples are uploaded at
%\href{https://github.com/Edkamb/chisel-tool/}{https://github.com/Edkamb/chisel-tool/}

%\subsection*{Code availability}
%Code is available at
%\href{https://github.com/Edkamb/chisel-tool/}{https://github.com/Edkamb/chisel-tool/}
\setcounter{tocdepth}{3}
\setcounter{secnumdepth}{3}
\appendix
\section{Proofs}
\subsection{Proof for Theorem~\ref{thm:main}}
We first show the following lemma, which encapsulates single processes.
Intuitively, it states that if we fix a single process and assume that every other process behaves correctly (1-3), then the fixed process also behaves correctly (a-e). 

\begin{lemma}\label{lem:compose}
Let $\iota^\Psi$ be a sound proof obligation scheme. 
Let $\mathtt{prgm} \in \mathbf{Prgm}$ be a program, $r$ a run of $\mathtt{prgm}$ and
\fid a future of method \abs{C.m} in $r$, running on some object $o$. 
If (1) at the initial state of \fid both the precondition of \abs{C.m} and the invariant of \abs{C} hold in $o$, (2) at every reactivation of \fid the invariant of \abs{C} holds in $o$, and (3) $\iota^\Psi_\xabs{C.m}$ is valid, then (a) the postcondition of \abs{C.m} holds in the final state of \fid, (b) the invariant of \abs{C} holds in $o$ in the final state, (c) the invariant of \abs{C} holds in $o$ until the next process on $o$ is scheduled, including the first state of the next process, (d) the precondition of all called method holds and (e) whenever time advances while \fid is active, the invariant of \abs{C} holds in $o$.
\end{lemma}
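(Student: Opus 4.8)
I would establish a correspondence (a simulation) between the concrete run of the fixed process \fid\ and a symbolic execution of the \ddl\ program $\mathsf{trans}(\statement_\xabs{C.m})$ from Def.~\ref{def:scheme}. The symbolic execution would start from the \ddl\ state encoding the heap of $o$ together with the local store of \fid\ at the moment \fid\ is first activated (which, since methods are suspension-leading, is rule~\rulename{3} on the leading guard). By hypothesis~(1) this state satisfies the antecedent $I_\xabs{C}\wedge\mathsf{pre}_{\xabs{C.m}}\wedge\xabs{cll}\doteq 0$ of $\iota^\Psi_\xabs{C.m}$, so validity of $\iota^\Psi_\xabs{C.m}$ (hypothesis~(3)) yields that $\bigl[\mathsf{trans}(\statement_\xabs{C.m})\bigr]\bigl(\xabs{cll}\doteq 0\wedge\mathsf{post}_{\xabs{C.m}}\wedge\mathsf{pr}(\Psi(\xabs{C.m}),I_\xabs{C},\mathsf{ode}_\xabs{C})\bigr)$ holds there. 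I would then prove, by induction on the number of discrete steps taken by \fid, that every concrete execution of \fid\ — including the time advances by rule~\rulename{ii} that occur while \fid\ is active, and ``jumping over'' every suspension of \fid\ by invoking hypothesis~(2) at the reactivation point — is refined by some \ddl-run of $\mathsf{trans}(\statement_\xabs{C.m})$, under the invariant that the current concrete state of $o$ together with the local store of \fid\ models the first-order description of the current symbolic state and that every \ddl-test taken so far has succeeded (so $\xabs{cll}\doteq 0$ is preserved along the chosen run). Conclusions~(a)--(e) are then read off from the post-state of that \ddl-run and from the tests that validity forces to pass en route.

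\textbf{The cases of the induction.} The internal rules \rulename{10}--\rulename{14} match $\mathsf{trans}$ of assignment, $\mathsf{if}$ and $\mathsf{while}$ directly. For a method call (rule~\rulename{7}) or an object creation (rule~\rulename{9}), $\mathsf{trans}$ offers the branches $?\mathsf{pre}_{\methodname}$ and $?\neg\mathsf{pre}_{\methodname};\mathsf{fail}$ (resp.\ $?\mathsf{init}_{\classname}$ / $?\neg\mathsf{init}_{\classname};\mathsf{fail}$); the $\mathsf{fail}$ branch would falsify the final $\xabs{cll}\doteq 0$, so validity discards its test, which forces the callee's precondition to hold in the corresponding concrete state — this is conclusion~(d) — and the trailing $\xabs{v := *}$ over-approximates the fresh, unresolved future. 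While \fid\ blocks on $\xabs{e.get}$ (rule~\rulename{6}) the object is blocked, so no other process runs and the physical fields of $o$ evolve along $\mathsf{ode}_\xabs{C}$ for an a priori unbounded amount of time; the leading check $\mathsf{pr}(\mathsf{true})$, forced valid, certifies that $I_\xabs{C}$ holds along that whole solution — conclusion~(e) for this step — hence also at the resolution point, which justifies the success-branch test $?\mathsf{inv}_\xabs{C}$, while $\mathsf{havoc}^{\mathsf{ph}};\xabs{v := *}$ over-approximates the changed physical values and the read return value. The $\xabs{duration(e)}$ case is analogous, but the advance is bounded by $\xabs{e}$ and $\mathsf{trans}$ evolves $\{\mathsf{ode},\xabs{t}' = 1\,\&\,\xabs{t}\le\xabs{e}\}$ precisely instead of applying $\mathsf{havoc}^{\mathsf{ph}}$, with the check $\mathsf{pr}(\xabs{t}\le\xabs{e})$ again giving~(e). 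At $\xabs{await}_\mathtt{p}\,\xabs{g}$ (rules~\rulename{1}--\rulename{3}), forced validity of the check $\mathsf{pr}\bigl(\Psi(\mathtt{p})\wedge\widetilde{\neg}\,\mathsf{trans}(\xabs{g})\bigr)$ discharges safety on the stretch of trace after this suspension of \fid\ and before the next non-trivial process on $o$: that stretch is a suspension-subtrace of $\mathtt{p}$, so by soundness of $\Psi$ (Def.~\ref{def:sound:region}) it stays inside $\Psi(\mathtt{p})$, and weak negation keeps $\xabs{g}$ false up to the first point it becomes true; on the incoming side $\mathsf{havoc}$ over-approximates the arbitrary change of fields by whatever ran in the meantime, and the test $?\mathsf{trans}(\xabs{g})\wedge\mathsf{inv}_\xabs{C}$ is justified because the guard holds at reactivation (rule~\rulename{3}) and, by hypothesis~(2), $I_\xabs{C}$ holds in $o$ then. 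Finally, rule~\rulename{5} ($\xabs{return e}$) ends the body, and the post-state of the \ddl-run satisfies $\xabs{cll}\doteq 0\wedge\mathsf{post}_{\xabs{C.m}}\wedge\mathsf{pr}(\Psi(\xabs{C.m}),I_\xabs{C},\mathsf{ode}_\xabs{C})$: the second conjunct is~(a), the $I_\xabs{C}$ conjunct of $\mathsf{pr}$ is~(b), and~(c) follows because the trace from \fid's final state up to and including the first state of the next process on $o$ is a suspension-subtrace of \xabs{C.m}, hence by soundness of $\Psi$ stays inside $\Psi(\xabs{C.m})$ and follows $\mathsf{ode}_\xabs{C}$ from \fid's final state (where $I_\xabs{C}$ holds by~(b)), so the box $[\xabs{t} := 0;\{\mathsf{ode},\xabs{t}' = 1\,\&\,\Psi(\xabs{C.m})\}]I_\xabs{C}$ yields $I_\xabs{C}$ all along it.

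\textbf{Main obstacle.} I expect the delicate point to be the suspension case, which has to combine two things at once: (i) applying the \emph{outgoing} check $\mathsf{pr}\bigl(\Psi(\mathtt{p})\wedge\widetilde{\neg}\,\mathsf{trans}(\xabs{g})\bigr)$ to the \emph{actual} suspension-subtrace — this needs soundness of $\Psi$ together with the event-boundary property of weak negation from Sec.~\ref{sec:logic} to justify that the subtrace really stays in the stated evolution domain up to the event; and (ii) re-establishing the simulation invariant on the incoming side from hypothesis~(2) and the guard alone, through a full $\mathsf{havoc}$ of the fields — and doing both uniformly so that the $\xabs{cll}$ bookkeeping composes across an arbitrary number of intermediate suspensions. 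A secondary subtlety is matching the single $\mathsf{pr}(\mathsf{true})$ check of a $\xabs{get}$ against the possibly unbounded family of rule-\rulename{ii} time advances taken while the object is blocked; this works because the \ddl\ box over an ODE ranges over the whole solution, hence over every finite prefix of it. Once the simulation invariant and these two points are in place, (a)--(e) are immediate as indicated.
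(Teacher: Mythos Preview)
Your proposal is correct and follows essentially the same approach as the paper: both establish a refinement between the concrete execution of \fid\ and a \ddl-run of $\mathsf{trans}(\statement_\xabs{C.m})$, use the $\xabs{cll}\doteq 0$ bookkeeping to force the success branches of all checks (yielding (d) and (e) en route), and read off (a)--(c) from the post-condition together with soundness of $\Psi$. The only presentational difference is that the paper organises the induction by first splitting the trace of \fid\ into maximal scheduled segments and then further at each time advance (inducting on the segment index), whereas you induct directly on the discrete steps of \fid; these decompositions are equivalent, and your treatment of (c) via Def.~\ref{def:sound:region} is in fact spelled out more carefully than the paper's one-line appeal to ``$\Psi$ is sound''.
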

\begin{proof}
Let $\tau$ be the combined first state of \fid enriched with \texttt{t} and  \texttt{cll}, both set to zero. 
Note that $\tau$ is also a \ddl-model. Now, as the proof obligation is valid, every state reachable
from $\tau$ via the translation of $\xabs{s}_\xabs{m}$ fulfills the post-condition. 
If every final state of \fid corresponds to some such reachable state, then (a) and (b) hold immediately. As $\Psi$ is sound, every state reachable within the post-region after any such state is a model for the invariant as well -- this is exactly statement (c). It, thus, remains to show that (f) every final state of \fid corresponds to some such reachable state of $\mathsf{trans}(\xabs{s}_\xabs{m})$, as well as (d) and (e).

To do so, we define $\hat\Theta = (\hat\theta_i)_{i\in I}$ as the family of subtraces of $\theta(o)$ such that each $\theta_i$ starts with a configuration where \fid is scheduled and ends with a configuration where \fid is descheduled. We take $I \subseteq \mathbb{N}$ and order the subtraces according to their order in $\theta$. Next, we define $\hat{\hat{\Theta}} =(\hat\theta_{j})_{j\in J}$
by splitting each $\hat\theta_i$ at each point the time advances and retaining the order in $J \subseteq \mathbb{N}$. We can assume $J$ is a closed interval starting at 0. 
Now show (d-f) for each $\hat\theta_{j}$. We need one more observation: once \abs{cll} is set to 1, it is never set back, as it can only be manipulated by statements generated during the translation and we only generate statements setting it to 0.
\begin{description}
\item[Base Case, $j = 0$] This means this is the first time \fid is scheduled. 
Up until the end of $\hat\theta_{0}$, no \abs{await} or \abs{duration} was executed. 
If a \abs{get} was executed, then it was resolved immediately, the translation is obviously an overapproximation of any read value.
By (1) we know that $\hat\theta_{0}(0)$ is a model for the invariant and precondition, by (3) and the above observation we know no statement \abs{cll := 1} was executed. 

Now, we need to show that the last state in $\hat\theta_{0}$ is reachable by $\mathsf{trans}(\statement_\xabs{m})$.  
If only sequence, branching, loops and assignments without side-effects occur, this is trivial as the semantics of these statements coincide with their translation in \ddl., so the second-to-last state is reachable. 
Due to the split, we only need to consider \abs{await}, \abs{duration}, \abs{get} and \abs{return} as the statement causing the last transition. In between method calls may occur. We observe that if the method precondition would be broken, then it would be at the call site, because the precondition only specifies the parameters. Thus, at each calling configuration $\hat\theta_{0}(k)$ we need to ensure that the passed values are a model for the precondition. Now, $\hat\theta_{0}(k)$ is reachable by $\mathsf{trans}$ applied to the statements above. If the precondition would not hold, that at this point the test $\neg\mathsf{pre}(...)$ would succeed and, thus, \abs{cll := 1} would be executed by the translated statement. However, by (3) and the above observation we know that this is not the case. Thus, all method preconditions hold in $\hat\theta_{0}$ and we have shown (d). 
We observe that no matter which statement is responsible for the split, it is explicitly checked that the invariant holds (we remind that $\mathsf{pr}$ has the form $I \wedge ...$).
If it would not hold, \abs{cll} would not be 0, which is guaranteed by (3).

\item[Step Case, $j > 0$] This means that either \fid is rescheduled or time advanced while it was active.
\begin{description}
\item[Case: Time Advance] I.e., there was a \abs{get} expression or a \abs{duration} expression.
The case of \abs{get} is trivial, as we show that the system remains safe forever when the statement is reached, thus, it does not matter how long the advance was. We must also show that the state after time advance corresponds to a state after the symbolic execution of the translation of the \abs{get}. 
For this it suffices that $\mathsf{havoc;}?I$ overapproximates all states where only the fields change and the invariant holds. The case for \abs{duration} is similar, except that we can overapproximate more closely, as we have an exact literal for the time advance. 
\item[Case: Rescheduled] I.e., there was an \abs{await} statement. We must show that the configuration after reactivation is abstracted by the state after the translation of \abs{await}. We can, by (2) assume the invariant and only the fields may have changed. We may assume the guard, as otherwise the rescheduling would not be possible. Thus, removing all fields to some state that only assume the guard and the invariant is sound.
\end{description}
For the last configuration of $\hat\theta_j$, the case is analogous to the base case.
\end{description}
\end{proof}

We can now prove the theorem itself, going from traces to runs.
\begin{quote}
If $\psi_\xabs{C.m}$ is \textbf{Prgm}-sound for all \abs{C.m}, then the proof obligation scheme of Def.~\ref{def:scheme} is sound for \textbf{Prgm} in the sense of Def.~\ref{def:soundscheme}.
\end{quote}
\begin{proof}
It suffices to show that (1) and (2) from the above lemma hold for every \fid and that this implies that whenever time advances and no process is active for an object, the invariant still holds.
To do so, we again apply an induction, this time on the length of the \emph{run}.
We show that the above property holds up to the $n$th transition.
\begin{description}
\item[Base Case, $n = 0$] In the initial configuration only the main block is executed in a special method of a special class. This class has only trivial specification and, in particular, no object invariant.
\item[Step Case, $n > 0$] We can assume that the run was safe for $n-1$ transitions and now the $n$th transition takes place.
\begin{description}
\item[Case: Timed transition] I.e., the run so far ends in a configuration 
\[\mathsf{clock}(t_{n-1}) \mathit{cn}_{n-1}\]
the next transition results in a configuration
\[\mathsf{clock}(t_{n-1}+t') \mathit{adv}(\mathit{cn}_{n-1},t') = \mathsf{clock}(t_n) \mathit{cn}_{n} \]
We need to show that for all $t$ with $t_{n-1} < t \leq t_n$, all objects are safe.
For each object the on-going suspension subtrace starts in a configuration before $\mathsf{clock}(t_{n-1}) \mathit{cn}_{n-1}$. Thus, it starts in a safe configuration. As the post-region is a sound approximation of the suspension subtrace, each interpolated state is safe if following the dynamics from a safe configuration remains safe inside the sound post-region. If the object in question is suspended during time advance, this is exactly expressed by the post-condition and the check in the \abs{await} translation. As all proof obligations are closed and our proof obligations are only valid if additionally suspension at blocking statements with \abs{get} is safe forever, this is indeed the case and the reached state is safe. 

\item[Case: Discrete transition with activation (\rulename{i},\rulename{3})] 
The only change is that a process is moved from the active position to the process pool.
As the store is not modified, if $\mathit{cn}_{n-1}$ is safe, so is $\mathit{cn}_{n}$. 

\item[Case: Discrete transition with suspension (\rulename{i},\rulename{2})] 
The only change is that a process is moved from the active position to the process pool for some object $o$.
We can ignore all other objects, as their state does not change and safety is preserved. 
As the store is not modified, we must show that in $\mathit{cn}_{n-1}$ is safe for $o$.
We use the validity of the proof obligation: as we have shown each process establishes the invariant if it can assume it in lemma 1. The process that is suspending now started in some $\mathit{cn}_{m}$ with $m < n-1$. Thus, it was safe then. As the proof obligation is valid, we can assume by lemma 1 that it reestablishes the invariant.
\item[Case: Internal discrete transition] Rule \rulename{i} triggers and an object makes a step by reducing a statement of an active process, i.e., the internal rule is any rule except \rulename{2} and \rulename{3} (and the analogous rule to start a process for the first time~\cite{BjorkBJST13}). This case is not relevant for the safety property as internal actions are completely handled by lemma~\ref{lem:compose}.
\end{description}
\end{description}
\end{proof}

\subsection{Proof for Theorem~\ref{lem:uniform}}
\begin{quote}
Given a class \abs{C} and a method \abs{m}, let $\mathsf{calls}_{\xabs{C.m}}$ be the set of methods that are guaranteed to be called within \abs{C.m}.

The post-region generator $\Psi^\mathsf{local}$ that maps every method \abs{C.m} to 
\[
\bigwedge_{\methodname \in \mathsf{gcall}(\xabs{C.m})} \widetilde{\neg}\mathit{ttrig}_\methodname
\]
and every \abs{await} identifier $i$ to
\[
\bigwedge_{\methodname \in \mathsf{gcall}(\xabs{i})} \widetilde{\neg}\mathit{ttrig}_\methodname
\]
is sound.
\end{quote}
\begin{proof}
We show the case for $\mathsf{gcall}(\xabs{C.m})$, the one for program-point identifier is analogous.
We must show that every state in every suspension-subtrace of $x$ is a model for 
\[
\mathit{post} =  \bigwedge_{\methodname \in \mathsf{gcall}(\xabs{C.m})} \widetilde{\neg}\mathit{ttrig}_\methodname
\]
Let $o$ be an object of class \classname executing method \methodname.
First, we observe that when $o$ terminates, the configuration has the following form, because every member of $\mathsf{gcall}(\xabs{C.m})$ is guaranteed to be called,
but no suspension occured that could remove the messages\footnote{This is a standard property of CFGs, indeed we do not require the CFG-based analysis to generate $\mathsf{gcall}$ and can use the semantical characterization we just stated.}:
\[
\mathsf{clock}(i)~
(o,\rho,ODE,f,\bot,q)~
\mathsf{msg}(o,\methodname_1,\xabs{e}^1_1\dots\xabs{e}^1_n,\mathit{fid}_1)~
\dots~
\mathsf{msg}(o,\methodname_j,\xabs{e}^j_1\dots\xabs{e}^j_m,\mathit{fid}_j)~
\mathit{cn}
\]
Before time can advance, and thus the generation of the suspension-subtrace can begin, the next steps will be using rule \rulename{8} to turn all messages into new process of $o$.
Afterwards, the configuration has the following form:
\[
\mathsf{clock}(i)~
\big(o,\rho,ODE,f,\bot,q\cdot(\tau_1,\mathit{fid}_1,\xabs{s}_1)\dots\cdot(\tau_j,\mathit{fid}_j,\xabs{s}_j)\big)~
\mathit{cn}
\]
Where $\xabs{s}_k$ is the method body of the $k$th called method.
The next rule that is applied to the $o$ must be \rulename{ii}, as otherwise the suspension-subtrace would not exist, or there is no further operation.
In either case, let us assume that there would be a state at time $k>i$ in the suspension-subtrace that would not by a model for $\mathit{post}$.
In this case, at least one of the guards of the called method would trigger and reschedule its process identified by $\mathit{fid}_j$.
But if this is the case, then the state of $k$ cannot be part of the suspension-subtrace. \qedhere
\end{proof}

\subsection{Proof for Theorem~\ref{lem:control}}
\begin{quote}
The post-region generator $\Psi^\mathsf{ctrl}$, which maps every method \abs{C.m}, and every identifier $i$ of an \abs{await} with \abs{C.m},
to the following formula,
\[
\bigwedge_{\methodname \in \mathsf{Ctrl}_{\xabs{C}} \cup\mathsf{CM}} \widetilde{\neg}\mathit{ttrig}_{\methodname}
\]
is sound.
\end{quote}
\begin{proof}
Let $x$ be a program-point identifier or a method. Let $o$ be an object of the class in question, execution \abs{C.m}.
We must show that every state in every suspension-subtrace of $x$ is a model for 
\[
\mathit{post} = \bigwedge_{\methodname \in \mathsf{Ctrl}_{\xabs{C}}} \widetilde{\neg}\mathit{ttrig}_{\methodname}
\]
First, we observe that every suspension-subtrace starts at an index $i$ of some where no process is active.
Second, every controller is, by definition, always schedulable, i.e., there is a process for each controller executing the contoller method.
I.e., the configuration has the form 
\[
\mathsf{clock}(i)~\big(o,\rho,ODE,f,\bot,q\cdot(\tau_1,\mathit{fid}_1,\xabs{rs}_1)~\dots~\cdot(\tau_n,\mathit{fid}_n,\xabs{rs}_n)\big)~\mathit{cn}
\]
Where $\xabs{rs}_j = \xabs{await diff g}_j$ is the method body of the $j$th controller method.
The next rule that is applied to the $o$ must be \rulename{ii}, as otherwise the suspension-subtrace would not exist, or there is no further operation.
In either case, let us assume that there would be a state at time $k>i$ in the suspension-subtrace that would not by a model for $\mathit{post}$.
In this case, at least one of the guards of the processes would trigger and reschedule its process identified by $\mathit{fid}_j$.
But if this is the case, then the state of $k$ cannot be part of the suspension-subtrace. \qedhere
\end{proof}

\tableofcontents
\end{document}